%% bare_conf.tex
%% V1.3
%% 2007/01/11
%\documentclass[draftcls,onecolumn]{IEEEtran}
\documentclass[journal,twocolumn]{IEEEtran}
\usepackage{bypass}
\usepackage{paper}
\usepackage{subfigure}
\usepackage{graphicx}
\usepackage{algorithmic}
\usepackage{algorithm}
\usepackage{comment}
\usepackage{multirow}
\usepackage{amssymb,amsmath}
\usepackage{epsfig}
\usepackage{cite}
\newtheorem{corollary}{Corollary}
\newtheorem{theorem}{Theorem}
\newtheorem{lemma}{Lemma}
\newtheorem{observation}{Observation}

\newtheorem{definition}{Definition}

\begin{document}

\title{Maximizing Reliability in WDM Networks through Lightpath Routing}

\author{Hyang-Won Lee,~\IEEEmembership{Member,~IEEE,} Kayi Lee,~\IEEEmembership{Member,~IEEE,}
Eytan Modiano,~\IEEEmembership{Fellow,~IEEE}
\thanks{Hyang-Won Lee is with Konkuk University, Seoul, Republic of Korea.  Kayi Lee is with Google, Inc., Cambridge, MA. Eytan Modiano is with the Massachusetts Institute of Technology, Cambridge, MA 02139 (e-mail: \{hwlee, kylee, modiano\}@mit.edu).}
\thanks{This work was supported by NSF grants CNS-0830961 and CNS-1017800, and by DTRA grants HDTRA1-07-1-0004 and HDTRA-09-1-0050.}}

\maketitle

\begin{abstract}
We study the reliability maximization problem in WDM networks with random link failures. Reliability in these networks is defined as the probability that the logical network is connected, and it is determined by the underlying lightpath routing, network topologies and the link failure probability. By introducing the notion of lexicographical ordering for lightpath routings, we characterize precise optimization criteria for maximum reliability in the low failure probability regime. Based on the optimization criteria, we develop lightpath routing algorithms that maximize the reliability, and logical topology augmentation algorithms for further improving reliability. We also study the reliability maximization problem in the high failure probability regime.

%In particular, we show that in the low failure probability regime, maximizing the ``cross-layer" min cut of the (layered) network maximizes reliability. Based on this result, we develop lightpath routing algorithms for reliability maximization and logical topology augmentation algorithms for further improving reliability.

%whereas in the high failure probability regime, minimizing the spanning tree of the network maximizes reliability. 

\end{abstract}

%\IEEEpeerreviewmaketitle

\section{Introduction}\label{sec:intro}
Modern communication networks are constructed using a layered approach, with one or more electronic layers (e.g., IP, ATM, SONET) built on top of an optical fiber network. The survivability of such networks under fiber failures largely depends on how the logical electronic topology is embedded onto the physical fiber topology. In the context of WDM networks, this is known as \emph{lightpath routing}. However, finding a reliable lightpath routing is rather challenging because it must take into account the sharing of physical fibers by logical links and its impact on the connectivity of the logical topology. Hence, the survivability of a layered network is a complex function of logical topology, physical topology, lightpath routing, and link failure probability. In this paper, we study reliable layered network design assuming that physical links fail at random with some probability, where multiple links may fail simultaneously.

The probabilistic failure model represents a snapshot of a network where links fail and are repaired after a certain time as in many practical scenarios \cite{zhang:review}. Hence, the link failure probability can be viewed as the average fraction of time that a link is in a failed state. This random failure model is somewhat general in that it can be used to model both networks with rare link failures as well as more frequent failures. It thus enables thorough understanding of network survivability in various failure regimes. For this reason, several works in the literature study survivable network design under the random failure model \cite{zhang:review, zhang:service, tornatore:availability, segovia:topology}.

%\textbf{For instance, it is reported in [] that where there is a cable cut, it is repaired in 12 hours on average.} For instance, a 1000-mile cable is cut, it is repaired in 12 hours on average \cite{to:unavailability}. 

%In this paper, we focus on characterizing and developing reliable lightpath routings for layered networks assuming that physical links experience random failures; as adopted by several works in the literature .

In the context of layered networks with random physical link failures, a natural survivability metric is the probability that given a lightpath routing, the logical topology remains connected; we call this probability the \emph{cross-layer (network) reliability}. The cross-layer reliability reflects the survivability performance achieved by the layered network. Hence, it is desirable to design a layered network that maximizes the reliability.  {\bf Although the single-layer network design problem has been extensively studied \cite{boesch:existence, wang:proof, myrvold:uniformly, bauer:validity, myrvold:reliable,ath:some, ath:counterexamples, boesch:unreliability}, the layered network reliability problem remains largely unexplored. Existing work in the area~\cite{kurant:survey, ArmitageINFOCOM97, CrochatJSAC98, sasakichingfong02, kurant:survivable, SubramaniamChoiSurvivableEmbedding, modiano02survivable, binhaoISCC, binhaoHPSR03, ramamurthyBroadnets04} has mostly focused on finding a lightpath routing that survives a single physical link failure, rather than  finding the one with maximum reliability. 
Our work in~\cite{lee:crosslayer} was the first study to {\it maximize} the tolerance of such physical failures for a lightpath routing, and cross-layer reliability
was introduced in~\cite{lee:reliability} to generalize this notion. In particular, we extended the polynomial expression for single-layer network reliability to the layered setting, and developed approximation algorithms for reliability computation. We also demonstrated a positive correlation between the reliability and Min Cross Layer Cut (MCLC; The precise definition of MCLC is presented in Section \ref{sec:model}) 
in the low failure probability regime, and experimented with MCLC as the objective in our lightpath routing algorithm to approximate reliability maximization. 

%algorithms that maximize MCLC as a way to approximate reliability maximization.% We also proved that in the low failure probability regime, the reliability of a layered network is  an increasing function of its Min Cross Layer Cut (MCLC)\footnote{The precise definition of MCLC is presented in Section \ref{sec:model}.}, and thus the lightpath routing algorithm for MCLC maximization (such as the one in \cite{lee:crosslayer}) can be used as an approximation to maximizing the reliability.

Our goal is to fully characterize the structures that contribute to the reliability in a layered network. This gives us the precise optimization criterion for maximizing the reliability. Although optimizing the exact criterion is infeasible in practice, the insight allows us to develop a new objective that better approximates reliability maximization.
 %The new algorithm in this paper based on this new objective has demonstrated significant reliability improvement than the best known algorithm in~\cite{lee:reliability}.
}

%we consider two approaches to improving the reliability: (i) rerouting existing lightpaths for given logical and physical topologies, and (ii) augmenting the logical topology to further %improve reliability. Since the rerouting approach is based on the exact optimality condition, it outperforms the best known algorithm in \cite{lee:reliability} that uses the MCLC %maximization criterion as an approximation to maximizing the reliability. Furthermore, the logical  topology augmentation leads to significant improvement in reliability, especially %when the augmentation increases the MCLC.

Typically, real-world networks experience very low link failure probabilities, and are designed accordingly. For instance, the failure probability of a 1000-mile cable in the Bellcore network is estimated to be about 0.006 \cite{to:unavailability}. However, in recent years there has been an increased concern about the impact of natural disasters or physical attacks on network survivability.    Natural disasters, such as earthquakes and hurricanes or floods can lead to a large number of (possibly localized) link failures that cannot be survived by networks designed to deal only with isolated failures \cite{Wu09,Neu11}.  Worse yet, a physical attack on the network by weapons of mass destruction, such as an Electromagnetic Pulse (EMP), can lead to widespread failures throughout large geographical areas \cite{Neu11, Fos04, Wil04}.   Such an attack can have a disastrous effect on telecommunication links that rely on electronic components from fiber amplifiers to regenerators, switches and routers for their operation.  Worse yet, such an attack is likely to disrupt the power grid \cite{Rad07,Fed10}, which can in-turn lead to significant additional  (cascading) failures of communication links, as was recently observed during a blackout event in Italy \cite{rosato2008}.  Thus, while typically one may expect extremely low failure probabilities, and design networks accordingly, such designs may not be robust to widespread failures that may result from a natural disaster or attack.  Furthermore, it may be worthwhile to strengthen networks of critical importance so that they can withstand such scenarios.  

Our primary focus in this work is on the low failure probability regime, as that is the regime that networks are typically designed for.  However,  to account for the increasing concerns with large scale failures, we also characterize network survivability in higher failure probability regimes.  While such designs may not be applicable to most networks, they may prove valuable to the design of networks with stringent survivability requirements.  

{\bf
One of the major challenges in the area of cross-layer survivability is the inherent complexity of the problems.  For example, 
in~\cite{lee:crosslayer}, we proved that the MCLC, a critical component in 
layered network reliability, is NP-hard to compute and approximate with within a $O(\log n)$ factor. Therefore, problems for maximizing cross-layer reliability is likely to be intractable.  
The common approach in existing lightpath routing algorithms involves finding the physical routes of all logical links jointly, typically by solving an ILP 
that captures the routing decision of all the logical links, which is often infeasible for large networks. In this paper, we consider
a different approach by incrementally improving the layered network, one logical link at a time. Such an approach has the advantages 
over the existing algorithms:
\begin{enumerate}
\item{Scalability:} Routing the logical links incrementally
reduces the problem space significantly. As a result, it is more applicable to large networks. 
\item{Solution Quality:} The incremental approach allows us to use a more sophisticated objective function that better approximates the cross-layer reliability. As a result, the lightpath routings given by the new algorithm result in much higher reliability than existing algorithms.
\end{enumerate}
We also apply a similar idea to a different setting where the logical topology can be augmented to improve reliability. We develop an augmentation algorithm to find
a good placement of a new logical link, and observe that reliability can be improved significantly, especially when the augmentation increases the MCLC.}

%study two network design problems: i) routing of a specified set of logical links to maximize realibility, and ii) augmenting an existing layered network with new
%logical links to maximize reliability improvement. Unlike previous approaches to the lightpath routing problem that compute the physical routes of the logical links jointly using a single ILP, we approach
%these problems differently by making incremental improvement one link at a time, either by rerouting an existing logical link, or adding a new logical link. This approach significantly %reduces the problem space and makes it feasible for large networks.
%}

Our contributions can be summarized as follows:

\begin{enumerate}
\item[--] We show that in general the optimal lightpath routing depends on the link failure probability.
\item[--] We show that for given logical and physical topologies, if there exists a uniformly optimal lightpath routing, then any locally optimal lightpath routing is uniformly optimal.
\item[--] We develop a novel "lexicographical  ordering" for lightpath routing and derive precise optimality conditions in both the low and high failure probability regimes.
%\item[--] We show that in the low probability regime, maximizing the min-cut of the (layered) network maximizes reliability.
%\item[--] We show that in the high probability regime, minimizing the spanning tree of the network maximizes reliability.
\item[--] We develop lightpath rerouting algorithms for maximizing reliability in the low failure probability regime.
\item[--] We develop a logical topology augmentation algorithm for improving the reliability of a given layered network.
\end{enumerate}

The rest of the paper is organized as follows: In Section \ref{sec:model}, we present the network model, and introduce the polynomial expression for the cross-layer reliability and important connectivity parameters related to reliability. In Section \ref{sec:properties}, we study the properties of optimal lightpath routings in the low failure probability regime.  In Section \ref{sec:lightpath-rerouting}, we develop lightpath rerouting and logical topology augmentation algorithms for reliability maximization, and in Section \ref{sec:infocom11_simulation_2}, we present extensive simulation results. In Section \ref{sec:high-regime}, we discuss the optimality conditions for maximum reliability in the high failure probability regime.

\section{Model and Background}\label{sec:model}
We consider a layered network $\mathcal{G}$ that consists of the logical topology $G_L=(V_L, E_L)$ built on top of the physical topology $G_P=(V_P, E_P)$ through a lightpath routing, where $V$ and $E$ are the set of nodes and links respectively. In the context of WDM networks, a logical link is called a \emph{lightpath}, and each lightpath is routed over the physical topology. This \emph{lightpath routing} is denoted by $f=[f_{ij}^{st},(i,j)\in E_P, (s,t)\in E_L]$, where $f_{ij}^{st}$ takes the value 1 if logical link $(s,t)$ is routed over physical link $(i,j)$, and 0 otherwise.

Each physical link fails independently with probability $p$\footnote{\bf Although we assume uniform link failure probability throughout the paper, our results can be readily extended to the case of non-uniform link failure probability by replacing each link with multiple links in series that fail with the same probability. See \cite{lee:reliability} for more details.}. This probabilistic failure model represents a snapshot of a network where links fail and are repaired according to some Markovian process.  Hence, $p$ represents the steady-state probability that a physical link is in a failed state.  This model has been adopted by several previous works \cite{zhang:review, zhang:service, tornatore:availability, segovia:topology}.

If a physical link $(i,j)$ fails, all of the logical links $(s,t)$ carried over $(i,j)$ (i.e., $(s,t)$ such that $f_{ij}^{st}=1$) also fail. A set $S$ of physical links is called a \emph{cross-layer cut} if the failure of the links in $S$ causes the logical network to be disconnected. We also define the \emph{network state} as the subset $S$ of physical links that failed. Hence, if $S$ is a cross-layer cut, the network state $S$ represents a {\it disconnected} network state. Otherwise, it is a {\it connected} state.

\subsection{Failure Polynomial and Connectivity Parameters}\label{sub:failure-polynomial}
Assume that there are $m$ physical links, i.e., $|E_P|=m$. The probability associated with a network state $S$ with exactly $i$ physical link failures (i.e., $|S|=i$) is $p^i(1-p)^{m-i}$. Let $N_i$ be the number of cross-layer cuts $S$ with $|S|=i$, then the probability that the network is disconnected is simply the sum of the probabilities over all cross-layer cuts, i.e.,
\begin{equation}\label{eqn:reliability_polynomial}
F(p) = \sum_{i=0}^{m} N_i p^i(1-p)^{m-i}.
\end{equation}
Therefore, the failure probability of a multi-layer network can be expressed as a polynomial in $p$. The function $F(p)$ will be called the \emph{cross-layer failure polynomial} or simply the \emph{failure polynomial}. The coefficients $N_i$'s contain the information on the structure of a layered graph, determined by the underlying lightpath routing. Below we introduce some important coefficients related to connectivity.

Each $N_i$ represents the number of cross-layer cuts of size $i$ in the network. Define a \emph{Min Cross Layer Cut (MCLC)} as a smallest set of physical links needed to disconnect the logical network. Denote by $d$ the size of an MCLC, then $d$ is the smallest $i$ such that $N_i>0$, meaning that the logical network will not be disconnected by fewer than $d$ physical link failures. The MCLC is a generalization of single-layer min-cut to the multi-layer setting \cite{lee:crosslayer}. It was shown in \cite{lee:reliability} that maximizing the MCLC has the effect of maximizing the reliability in the low failure probability regime.

\subsection{Motivation for Lightpath Rerouting and Logical Topology Augmentation}
\begin{figure}[h]
\centering
\subfigure[Initial routing]{\label{fig:initial-routing}\epsfig{file=./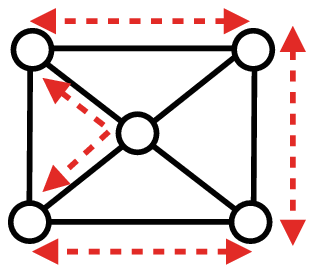,angle=0,width=0.15\textwidth}}\hspace{1.0cm}
\subfigure[After rerouting]{\label{fig:after-rerouting}\epsfig{file=./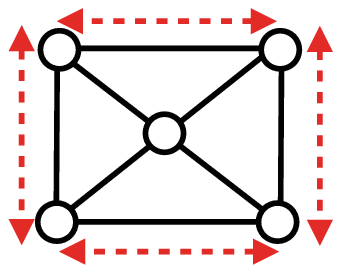,angle=0,width=0.15\textwidth}}
\vspace{-0.2cm}\caption{Example showing that lightpath rerouting can improve the reliability. Physical topology is solid line, logical topology is the rectangle formed by the 4 corner nodes and 4 edges, and lightpath routing is dashed line.}
\label{fig:rerouting-example}\vspace{-0.3cm}
\end{figure}

Although the MCLC criterion is useful for finding a lightpath routing with better reliability, it is not sufficient for fully characterizing reliable lightpath routings. For example, consider the two lightpath routings in Fig \ref{fig:rerouting-example}. The two lightpath routings have the same MCLC value of 2. However, for every value of $p$, the routing in Fig. \ref{fig:after-rerouting} yields better reliability than the one in Fig. \ref{fig:initial-routing}. This example shows that there are more precise conditions for optimal lightpath routings, beyond the MCLC maximization criterion. In Section \ref{sec:properties}, we develop new optimization criteria that characterize in greater detail optimal lightpath routings in the low failure probability regime.

Furthermore, the routing in Fig. \ref{fig:after-rerouting} can be obtained by  rerouting one lightpath from the routing in Fig. \ref{fig:initial-routing}. Hence, this example also demonstrates that one may be able to \emph{find a more reliable lightpath routing by simply rerouting some existing lightpaths} from a given lightpath routing. In Section \ref{sec:lightpath-rerouting}, we study lightpath rerouting algorithms that use the new optimization criteria to find a lightpath routing with better reliability given an initial lightpath routing.

\begin{figure}[h]
\centering
\subfigure[Augmented $G_L$]{\label{fig:augmented-LT}\epsfig{file=./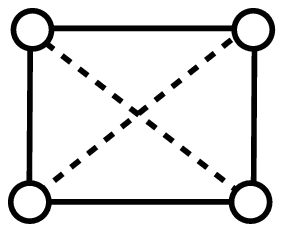,angle=0,width=0.15\textwidth}}\hspace{1.0cm}
\subfigure[Lightpath routing]{\label{fig:augmented-LR}\epsfig{file=./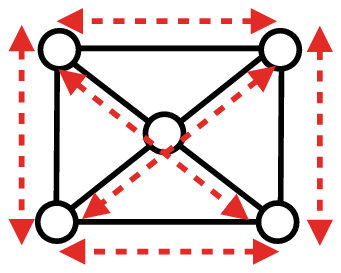,angle=0,width=0.15\textwidth}}
\vspace{-0.2cm}\caption{Example showing that the reliability can be further improved via logical topology augmentation: in (a), dashed lines are added lightpaths.}
\label{high_low_regime}\vspace{-0.1cm}
\end{figure}

In addition to the lightpath rerouting approach, the new optimization criteria can also be used to further enhance the reliability in a different manner. In particular, we consider logical topology augmentation. For instance, suppose that two (diagonal) logical links are added to the logical topology in the example of Fig. \ref{fig:rerouting-example} (see Fig. \ref{fig:augmented-LT}). Fig. \ref{fig:augmented-LR} is an example of routing the two new lightpaths. The new network has far better reliability than the old one in the low failure probability regime since the MCLC value has been raised from 2 to 3. This example shows that augmenting the logical topology can significantly improve the reliability. In Section \ref{sec:topology-augmentation}, using the new optimization criteria, we study how to choose the new logical link that achieves maximum reliability improvement. %derive conditions on additional lightpaths and their routings to maximize the improvement in reliability and develop logical topology augmentation algorithms based on these conditions.

%{\bf One of the key observations from the new optimization criteria developed in this paper is that maximizing the MCLC has the effect of maximizing the reliability. Hence, the reliability of a given layered network can be improved by raising its MCLC. This motivated us to consider logical topology augmentation for improving reliability.  Namely, in the logical topology augmentation problem, the goal is to add lightpaths to the logical topology and find their routing such that the MCLC can be enhanced if possible. In Section \ref{sec:topology-augmentation}, we will use the new optimization criteria to characterize the conditions for the augmentation to enhance the MCLC and develop logical topology augmentation algorithms.}

\section{Properties of Optimal Lightpath Routings}\label{sec:properties}
We first study the properties of optimal lightpath routings. These properties will give insight on how routings should be designed for better reliability. Since the failure probability $p$ is typically small in many practical scenarios, we mainly focus on the low failure probability regime. The properties of optimal lightpath routings for large $p$ will be briefly discussed in Section \ref{sec:high-regime}.

\subsection{Uniformly and Locally Optimal Routings}\label{sec:uniform-optimal}
We start with a discussion of routings that are most reliable for all failure probabilities. The observations in this section will motivate a local (in $p$) optimization approach to the design of lightpath routing, which is relatively easy compared with an optimization over all the values of $p$. We begin with the following definition:
\begin{definition}
For given logical and physical topologies, a lightpath routing is said to be \emph{uniformly optimal} if its reliability is greater than or equal to that of any other lightpath routing for every value of $p$.
\end{definition}

Therefore, a uniformly optimal lightpath routing yields the best reliability for all $p\in[0,1]$. Based on the failure polynomial of a lightpath routing, one can immediately develop a sufficient condition for a uniformly optimal lightpath routing:
\begin{observation}
Given a lightpath routing $R$, let $N_i^{R}$ be the number of cross-layer cuts with size $i$. Then $R$ is a uniformly
optimal lightpath routing if, for any other lightpath routing $R^{'}$, $N_i^R \leq N_i^{R^{'}}$ for all $i\in\set{0,\ldots,m}$, where $m$ is the number of physical links.
\label{thm:uniform_optimal_dominance}
\end{observation}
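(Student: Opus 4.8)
The plan is to argue directly from the failure polynomial \eqref{eqn:reliability_polynomial}. For any lightpath routing $R$, the cross-layer failure probability is $F_R(p) = \sum_{i=0}^{m} N_i^R\, p^i(1-p)^{m-i}$, so the cross-layer reliability is $1 - F_R(p)$. Consequently, establishing that $R$ is uniformly optimal is equivalent to showing that $F_R(p) \le F_{R'}(p)$ for every competing routing $R'$ and every $p \in [0,1]$.

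The key point is that each monomial $p^i(1-p)^{m-i}$ is nonnegative on the interval $[0,1]$. Hence, under the hypothesis $N_i^R \le N_i^{R'}$ for all $i \in \{0,\ldots,m\}$, we may compare the two failure polynomials term by term: $N_i^R\, p^i(1-p)^{m-i} \le N_i^{R'}\, p^i(1-p)^{m-i}$ for each $i$ and each $p\in[0,1]$. Summing over $i$ gives $F_R(p) \le F_{R'}(p)$, i.e., the reliability of $R$ is at least that of $R'$ at every value of $p$, which is exactly the definition of a uniformly optimal lightpath routing.

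The only step needing any care is the identification of the coefficients $N_i$ with the terms of the failure polynomial, and this is already furnished by the derivation of \eqref{eqn:reliability_polynomial}: there are exactly $N_i$ disconnected network states with $i$ failed physical links, each occurring with probability $p^i(1-p)^{m-i}$. Given this, there is no real obstacle; the observation is an immediate consequence of the nonnegativity of the monomials on $[0,1]$. We remark that the converse of this sufficient condition does not hold in general, which is precisely what motivates the finer, lexicographic optimality criteria developed in the remainder of the section.
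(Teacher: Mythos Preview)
Your argument is correct and is precisely the immediate reasoning the paper has in mind: the observation is stated without proof, as a direct consequence of the failure polynomial \eqref{eqn:reliability_polynomial} and the nonnegativity of $p^i(1-p)^{m-i}$ on $[0,1]$. There is nothing to add.
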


While it is desirable to design a uniformly optimal routing, such a routing does not always exist. Intuitively, for small $p$, only a small number of links are likely to fail simultaneously, and hence for better reliability it is important to remain connected after a small number of failures. In contrast, for large $p$, it is likely that a large number of links fail simultaneously, and thus it is important to withstand a large number of failures. These two objectives conflict because the former prefers disjoint lightpath routing whereas the latter prefers shortest lightpath routing. \begin{figure}[h]
\centering
\subfigure[Optimal Routing in Low Regime]{\label{fig:low_regime}\epsfig{file=./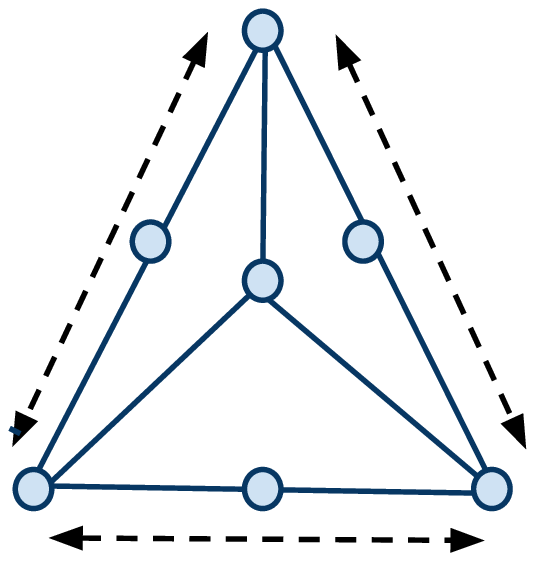,angle=0,width=1in,height=1in}}\hspace{1.0cm}
\subfigure[Optimal Routing in High Regime]{\label{fig:high_regime}\epsfig{file=./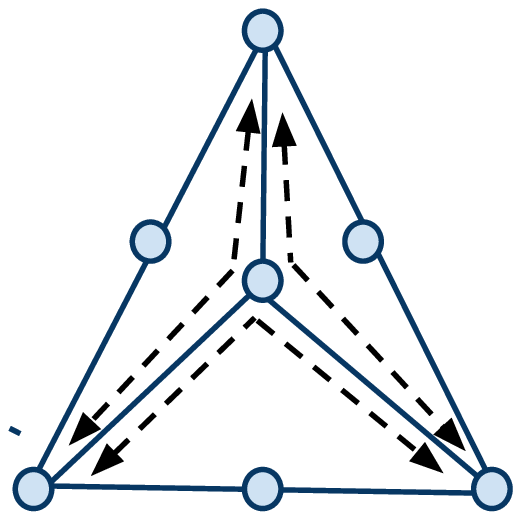,angle=0,width=1in,height=1in}}
\vspace{-0.2cm}\caption{Example showing that optimal routings depend on the value of $p$. Physical topology is solid line, logical topology is the triangle formed by the 3 corner nodes and 3 edges, and lightpath routing is dashed line.}
\label{high_low_regime}\vspace{-0.1cm}
\end{figure}

For example, Fig. \ref{high_low_regime} shows two different lightpath routings. In Fig. \ref{fig:low_regime}, the logical links are routed over physically disjoint paths, and its reliability is given by $3(1-p)^4-2(1-p)^6$. In contrast, in Fig. \ref{fig:high_regime}, every pair of logical links share a physical link, and its reliability is $(1-p)^3$. While disjoint path routing is considered to be more reliable, it is easy to see that in this example the disjoint routing has better reliability only for small values of $p$ whereas for large $p$ (e.g., $p>0.7$) the non-disjoint routing is more reliable.

%Consider the example in Fig. \ref{high_low_regime} again. It can be shown that in the low failure probability regime a disjoint routing such as the one in Fig. \ref{fig:low_regime} is optimal as it maximizes the MCLC. In contrast, in the high failure probability regime, the routing in Fig. \ref{fig:high_regime} can be shown to be optimal as it minimizes the MCLST. In brief, there does not always exist a lightpath routing that minimizes $F(p)$ uniformly over all values of $p$.

Since uniformly optimal lightpath routings are not always attainable, we are motivated to focus on {\em locally} optimal routings, where the probability regime of
optimality is restricted to a subrange within $[0,1]$. A locally optimal lightpath routing is defined as follows:

\begin{definition}
For given logical and physical topologies, a lightpath routing is said to be \emph{locally optimal} if there exists $0\leq a<b\leq 1$, such that its reliability is greater than or equal to that of any other lightpath routing for every value of $p\in [a,b]$. In addition, the interval $[a,b]$ is called the {\em optimality regime} for the
lightpath routing.
\end{definition}

Note that a uniformly optimal lightpath routing is also locally optimal with optimality regime $[0,1]$.
Theorem \ref{thm:local=uniform} below is a crucial result to this study; namely, it reveals a connection between local optimality and uniform optimality.
\begin{theorem}\label{thm:local=uniform}
Consider a pair of logical and physical topologies $(G_L,G_P)$ for which there exists a uniformly optimal routing. Then, any locally optimal lightpath routing for $(G_L,G_P)$ is also uniformly optimal.
\end{theorem}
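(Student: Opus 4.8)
The plan is to exploit the fact, established in Section~\ref{sub:failure-polynomial}, that by \eqref{eqn:reliability_polynomial} the failure probability of any lightpath routing is a polynomial in $p$ of degree at most $m$; hence the reliability $1-F(p)$ is an analytic function of $p$, and two such functions that coincide on an interval of positive length must coincide on all of $[0,1]$. The whole argument reduces to this observation plus careful bookkeeping of the two optimality notions.

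First I would fix a uniformly optimal routing $R_u$, which exists by hypothesis, and let $R_\ell$ be an arbitrary locally optimal routing with optimality regime $[a,b]$, where $0\le a<b\le 1$, so that $[a,b]$ is a nondegenerate interval. Write $F_{R}(p)$ for the failure polynomial of a routing $R$; recall that reliability at $p$ is $1-F_R(p)$, so ``at least as reliable'' means ``at most as large failure probability.'' Applying the definition of local optimality of $R_\ell$ with $R_u$ playing the role of the competing routing yields $F_{R_\ell}(p)\le F_{R_u}(p)$ for all $p\in[a,b]$. Applying the definition of uniform optimality of $R_u$ (to the competing routing $R_\ell$) yields $F_{R_u}(p)\le F_{R_\ell}(p)$ for all $p\in[0,1]$, in particular for all $p\in[a,b]$. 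Combining the two inequalities gives $F_{R_\ell}(p)=F_{R_u}(p)$ for every $p\in[a,b]$.

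Next I would invoke the identity theorem for polynomials: the difference $F_{R_\ell}-F_{R_u}$ is a polynomial that vanishes on the infinite set $[a,b]$, hence it is identically zero, so $F_{R_\ell}(p)=F_{R_u}(p)$ for all $p\in[0,1]$. Therefore the reliability of $R_\ell$ equals that of $R_u$ at every $p$, and since $R_u$ is uniformly optimal, so is $R_\ell$. (As a by-product this shows all uniformly optimal routings share the same failure polynomial, which is consistent with Observation~\ref{thm:uniform_optimal_dominance}.)

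I do not expect a genuine obstacle here: the content of the theorem is really the modeling fact that $F$ is a polynomial in $p$, together with the elementary fact that a polynomial is determined by its restriction to any interval. The only points requiring care are the direction of each inequality (which optimality notion supplies which bound) and the remark that the optimality regime, being nondegenerate, contains infinitely many points so the polynomial-identity step is legitimate.
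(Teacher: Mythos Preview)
Your proposal is correct and follows essentially the same approach as the paper: both arguments compare the failure polynomial of the locally optimal routing with that of a uniformly optimal one, observe that the two coincide on the nondegenerate optimality interval, and conclude by the identity theorem for polynomials (a nonzero polynomial of degree at most $m$ has at most $m$ roots) that the two failure polynomials are identical on $[0,1]$. Your write-up is in fact a bit more explicit than the paper's in spelling out how the two opposite inequalities combine to give equality on $[a,b]$.
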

\begin{IEEEproof}
Denote by $F^*(p)$ the failure polynomial of a uniformly optimal lightpath routing. By definition, $F^*(p)$ is no greater than any other failure polynomial for $p\in[0,1]$. Consider a locally optimal lightpath routing $L$ with optimality regime $[p_1,p_2]$, and let $F^L(p)$ be its failure polynomial.

The polynomial equation $F^L(p)-F^*(p)=0$ has degree at most $m$ and thus has at most $m$ roots unless the polynomial $F^L(p)-F^*(p)$ is trivially zero. However, by the definitions of local optimality and uniform optimality, the equation has an infinite number of solutions over the interval $[p_1,p_2]$. Consequently, $F^L(p)$ is identical to $F^*(p)$, which implies that lightpath routing $L$ is also uniformly optimal.
\end{IEEEproof}

Motivated by this result, we study locally optimal lightpath routings. In particular, we develop the conditions for a lightpath routing to be optimal for the low failure probability regime (small $p$).% and high failure probability regime (large $p$). %These conditions specify the requirements on the structural properties such as MCLC and MCLST of a layered graph, which can be formulated as Integer Linear Programs (ILPs). In Section \ref{sec:local-optimal}, we identify these local optimality conditions, and in Section \ref{sec:lightpath-routing}, we develop lightpath routing algorithms using these conditions.

\subsection{Low Failure Probability Regime}\label{sub:low-regime}
It is easy to see that in the failure polynomial, the terms corresponding to small cross-layer cuts dominate when $p$ is small. Hence, for reliability maximization in the low failure probability regime, it is desirable to minimize the number of small cross-layer cuts. We use this intuition to derive the properties of optimal routings for small $p$. We begin with the following definition:
\begin{definition}
Consider two lightpath routings $1$ and $2$. Routing $1$ is said to be \emph{more reliable} than routing $2$ \emph{in the low failure probability regime} if there exists a positive number $p_0$ such that the reliability of routing $1$ is higher than that of routing $2$ for $0<p<p_0$. A lightpath routing is said to be \emph{locally optimal in the low failure probability regime} if it is more (or equally) reliable than any other routing in the low failure probability regime.
\end{definition}

In the following, we characterize the impact of small cuts on the reliability. Let $d_j$ be the size of the MCLC under routing $j(=1,2)$. Let $N_i$ and $M_i$ be the numbers of cross-layer cuts of size $i$ under routings $1$ and $2$ respectively. We call the vector $N=[N_i,\forall i]$ the \emph{cut vector}. The following is an example of cut vectors $N$ and $M$ with $d_1=4$ and $d_2=3$:

%As mentioned earlier, the MCLC is the key parameter affecting reliability in the low failure probability regime. In fact, we showed in \cite{lee:reliability} that a routing with larger MCLC yields better reliability for small $p$. We briefly introduce this result below. Let $d_j$ be the size of the MCLC under routing $j(=1,2)$. Let $N_i$ and $M_i$ be the numbers of cross-layer cuts of size $i$ under routings $1$ and $2$ respectively.

%\begin{theorem}[\cite{lee:reliability}]\label{thm:low-regime1}
%Assume $d_1>d_2$. Then, there exists a positive number $p_0$ such that routing $1$ is more reliable than routing $2$ for $p<p_0$.% In particular, $p_0$ is given by $p_0=\frac{M_{d_2}}{\choose{m}{d_{2}+1}+\frac{d_{2}}{d_{2}+1}M_{d_{2}}}$
%\end{theorem}

%Thus, maximizing the MCLC is important in improving reliability. In the following, we extend this observation by characterizing the impact of larger cuts on the reliability. 

\[
\begin{array}{rllllllll}
i & 0 & 1 & 2 & 3 & 4 & 5 & \cdots & m\\
N_i & 0 & 0 & 0 & 0 & 20 & 26 & \cdots & 1\\
M_i & 0 & 0 & 0 & 9 & 19 & 30 & \cdots & 1.
\end{array}
\]
Using cut vectors of lightpath routings, we define \emph{lexicographical ordering} as follows:
\begin{definition}\label{def:lexico}
Routing $1$ is lexicographically smaller than routing $2$ if $N_{d}<M_{d}$ where $d$ is the smallest $i$ at which $N_i$ and $M_i$ differ.
\label{def:lexico}
\end{definition}
Note that a lightpath routing with a larger MCLC size is lexicographically smaller by Definition \ref{def:lexico}. In the above example, we have $d=3$ and $N_{d}<M_{d}$, hence routing $1$ is lexicographically smaller. Therefore, if a lightpath routing is lexicographically smaller than another, it has fewer small cross-layer cuts and thus yields better reliability for small $p$.
\begin{theorem}\label{thm:lexico-ordering}
{\kyb
Given two lightpath routings 1 and 2 with cut vectors $[N_i|i=0,\ldots,m]$ and
$[M_i|i=0,\ldots,m]$ respectively, where $m$ is the number of physical links,
if routing $1$ is lexicographically smaller than routing $2$, then routing $1$ is more reliable than routing $2$ in the low failure probability regime. In particular, let $d=\min\limits_{0\leq i\leq m}\{i:M_i \neq N_i\}$ be the index where the elements in the cut vectors first differ. Then, lightpath routing 1 is more reliable than routing 2 for $p<p_0=\frac{(d+1)(M_d-N_d)}{2m\choose{m}{d}}$.
}
\end{theorem}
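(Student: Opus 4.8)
The plan is to compare the two failure polynomials directly. Write $F_1(p)=\sum_{i=0}^m N_i\, p^i(1-p)^{m-i}$ and $F_2(p)=\sum_{i=0}^m M_i\, p^i(1-p)^{m-i}$ for the failure polynomials of routings $1$ and $2$ as in \eqref{eqn:reliability_polynomial}; since the reliability of a routing is $1-F(p)$, it suffices to prove $F_1(p)<F_2(p)$ for $0<p<p_0$. Consider the difference $\Delta(p):=F_2(p)-F_1(p)=\sum_{i=0}^m (M_i-N_i)\,p^i(1-p)^{m-i}$. By the hypothesis that routing $1$ is lexicographically smaller, $M_i=N_i$ for all $i<d$ and $M_d-N_d\ge 1$, so $\Delta(p)=(M_d-N_d)\,p^d(1-p)^{m-d}+\sum_{i=d+1}^m (M_i-N_i)\,p^i(1-p)^{m-i}$. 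The whole argument then reduces to showing that the first (positive) term dominates the remainder when $p$ is small.

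For the remainder I would use a crude but clean combinatorial estimate. Since $N_i$ and $M_i$ each count a family of size-$i$ subsets of $E_P$, we have $|M_i-N_i|\le\binom{m}{i}$, hence $\bigl|\sum_{i=d+1}^m (M_i-N_i)\,p^i(1-p)^{m-i}\bigr|\le\sum_{i=d+1}^m\binom{m}{i}p^i(1-p)^{m-i}=\Pr[|S|\ge d+1]$, where $S$ is the random set of failed physical links. A union bound over the $(d+1)$-subsets of $E_P$ then gives $\Pr[|S|\ge d+1]\le\binom{m}{d+1}p^{d+1}$. For the leading term I would apply Bernoulli's inequality, $(1-p)^{m-d}\ge 1-(m-d)p$. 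Combining these, $\Delta(p)\ge p^d\bigl[(M_d-N_d)(1-(m-d)p)-\binom{m}{d+1}p\bigr]$, and the bracket is strictly positive exactly when $p<\frac{M_d-N_d}{(M_d-N_d)(m-d)+\binom{m}{d+1}}$. Using $M_d-N_d\le\binom{m}{d}$ together with $\binom{m}{d+1}=\frac{m-d}{d+1}\binom{m}{d}$ to bound this denominator then yields the explicit threshold $p_0$ in the statement; and since $M_d\neq N_d$ the polynomial $\Delta$ is not identically zero, so strict inequality holds throughout the open interval $(0,p_0)$.

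Most of these steps are routine. The one place that needs care — and the step I expect to be the real obstacle — is bounding the remainder $\sum_{i>d}(M_i-N_i)p^i(1-p)^{m-i}$ tightly enough: a naive bound that discards the factors $(1-p)^{m-i}$, or that fails to extract one extra power of $p$ beyond $p^d$, leaves a term that does not vanish as $p\to 0$ and kills the argument. Recognizing the remainder as dominated by the tail probability $\Pr[|S|\ge d+1]\le\binom{m}{d+1}p^{d+1}$ is exactly what makes the explicit constant come out, so getting that estimate (and the attendant simplification of the denominator) right is the crux. A minor point to dispatch separately is the degenerate case $d=m$, where the remainder sum is empty and the claim is immediate.
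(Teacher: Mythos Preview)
Your overall strategy --- isolate the leading term $(M_d-N_d)p^d(1-p)^{m-d}$ and show it dominates the tail for small $p$ --- is exactly the paper's approach, and the qualitative conclusion follows. The gap is in the quantitative step: the specific pair of estimates you chose (Bernoulli on the leading term, a union bound on the tail) is slightly too lossy to recover the stated threshold $p_0$. Concretely, your argument gives
\[
\Delta(p)\ge p^d\Bigl[(M_d-N_d)\bigl(1-(m-d)p\bigr)-\tbinom{m}{d+1}p\Bigr],
\]
which is positive for $p<p^\ast:=\dfrac{M_d-N_d}{(M_d-N_d)(m-d)+\binom{m}{d+1}}$. But $p^\ast\ge p_0$ is equivalent to $(m+d)\binom{m}{d}\ge (d+1)(m-d)(M_d-N_d)$, and this can fail when $M_d-N_d$ is large relative to $\binom{m}{d}$. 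For instance with $m=10$, $d=1$, $M_d-N_d=7$ one gets $p^\ast=7/108\approx 0.065$ while $p_0=14/200=0.07$, so your bound does \emph{not} cover the full interval $(0,p_0)$. The final sentence ``using $M_d-N_d\le\binom{m}{d}$ \ldots\ yields the explicit threshold $p_0$'' is therefore where the argument breaks: plugging $M_d-N_d\le\binom{m}{d}$ into the denominator requires $(m+d)\ge(d+1)(m-d)$, which holds only when $m\le d+2$.

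The loss comes from expanding $(1-p)^{m-d}$ via Bernoulli while the union bound on the tail carries no matching $(1-p)$ factor, so the two pieces are compared on unequal footing. The paper avoids this by keeping the factor $(1-p)^{m-d}$ intact on both sides: it bounds the tail by Feller's binomial tail estimate
\[
\sum_{i\ge d+1}\tbinom{m}{i}p^i(1-p)^{m-i}\le \tbinom{m}{d+1}\,p^{d}(1-p)^{m-d}\cdot\frac{(d+1)p}{\,d+1-mp\,}\qquad(p<(d+1)/m),
\]
so that $(1-p)^{m-d}$ cancels and one is left with $\Delta(p)\ge p^d(1-p)^{m-d}\bigl[(M_d-N_d)-\binom{m}{d+1}\frac{(d+1)p}{d+1-mp}\bigr]$. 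Solving for the bracket gives the threshold $\frac{(d+1)(M_d-N_d)}{m(M_d-N_d)+(d+1)\binom{m}{d+1}}$, and \emph{now} the substitutions $M_d-N_d\le\binom{m}{d}$ and $(d+1)\binom{m}{d+1}=(m-d)\binom{m}{d}$ do yield $p_0$. (In the paper this is done more generally, for $k$-lexicographic ordering, and the present statement is the case $k=1$.) Replacing your union bound by Feller's estimate --- or, equivalently, keeping $(1-p)^{m-d}$ as a common factor rather than linearizing it --- closes the gap.
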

\begin{proof}
%See~\appref{low-regime}.
This is implied by~\thmref{partial_sum_dominance}, which will be discussed in~\secref{infocom11_regime_extension}.
\end{proof}

Clearly, Theorem~\ref{thm:lexico-ordering} leads to a local optimality condition; that is, if a lightpath routing minimizes the cut vector lexicographically, then it is locally optimal in the low failure probability regime. An interesting case is when routing $1$ has larger MCLC than routing $2$ (as in the above example). In this case, routing $1$ is lexicographically smaller than routing $2$, and Theorem~\ref{thm:lexico-ordering} implies the following corollary.%, which we restate here as a corollary:

\begin{corollary}\label{cor:low-regime1}
If $d_1>d_2$, then routing $1$ is more reliable than routing $2$ in the low failure probability regime.
\end{corollary}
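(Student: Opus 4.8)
The plan is to obtain this corollary as an immediate special case of Theorem~\ref{thm:lexico-ordering}, by verifying that the hypothesis $d_1 > d_2$ already forces routing~1 to be lexicographically smaller than routing~2. First I would recall the definition of the MCLC size: $d_j$ is the smallest index $i$ for which the cut vector of routing~$j$ is nonzero. Thus $N_i = 0$ for every $i < d_1$ and $M_i = 0$ for every $i < d_2$, while $M_{d_2} > 0$ by the very choice of $d_2$.

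Next I would identify the first index at which the two cut vectors differ. Since $d_2 < d_1$, we have $d_2 \le d_1 - 1$, so $N_{d_2} = 0$; combined with $M_{d_2} > 0$, the vectors differ at index $d_2$. On the other hand, for every $i < d_2$ both $N_i$ and $M_i$ vanish, so the two vectors agree on $\{0,\ldots,d_2-1\}$. Hence the first index of disagreement is exactly $d = d_2$, and at that index $N_d = 0 < M_d$. By Definition~\ref{def:lexico} this says precisely that routing~1 is lexicographically smaller than routing~2.

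Applying Theorem~\ref{thm:lexico-ordering} then finishes the argument: routing~1 is more reliable than routing~2 in the low failure probability regime, and one even gets an explicit threshold on $p$ by substituting $d = d_2$ and $N_d = 0$ into the bound given there. I do not expect any real obstacle in this proof; it is essentially bookkeeping on top of the lexicographic theorem. The only points that need a little care are (i) correctly invoking the definition of the MCLC size so that $M_{d_2} > 0$ is guaranteed, which legitimizes comparing the vectors at index $d_2$, and (ii) observing the (automatic) fact that the two routings cannot disagree at any index below $d_2$ because all such coordinates are zero for both, so the "first differing index" in Definition~\ref{def:lexico} is genuinely $d_2$.
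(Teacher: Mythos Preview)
Your proposal is correct and follows exactly the paper's approach: the paper also observes that when $d_1>d_2$ routing~1 is lexicographically smaller than routing~2 and then invokes Theorem~\ref{thm:lexico-ordering}. Your write-up simply makes explicit the verification that the first differing index is $d_2$ with $N_{d_2}=0<M_{d_2}$, which the paper leaves implicit.
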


Consequently, a lightpath routing with the maximum size MCLC yields the best reliability for small $p$. We note that the same result was shown in \cite{lee:reliability}.
Similarly, routing $1$ is also lexicographically smaller than routing $2$ when they have the same size of MCLC but routing $1$ has fewer MCLCs. This leads to the following result:
\begin{corollary}\label{cor:low-regime2}
If $d_1=d_2$ and $N_{d_1}<M_{d_2}$, then routing $1$ is more reliable than routing $2$ in the low probability regime.
\end{corollary}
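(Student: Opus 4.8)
The plan is to obtain this as an immediate consequence of Theorem~\ref{thm:lexico-ordering}: under the stated hypotheses, routing~$1$ turns out to be lexicographically smaller than routing~$2$ in the sense of Definition~\ref{def:lexico}, and the conclusion then follows. Write $d=d_1=d_2$ for the common MCLC size. The first thing I would establish is that the two cut vectors agree on every index below $d$. This is because $d$ is, by definition, the size of a smallest cross-layer cut under each routing, so neither layered network admits a cross-layer cut of size smaller than $d$; hence $N_i=0=M_i$ for all $i<d$. Consequently the smallest index at which $[N_i]$ and $[M_i]$ can differ is $d$ itself.

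Next I would verify that they genuinely differ at $d$. By definition of the MCLC size we have $N_d>0$ and $M_d>0$, and the hypothesis $N_{d_1}<M_{d_2}$ reads $N_d<M_d$; in particular $N_d\neq M_d$. Therefore $d=\min\{i: N_i\neq M_i\}$, and at this first point of disagreement $N_d<M_d$, which is exactly the condition for routing~$1$ to be lexicographically smaller than routing~$2$. Applying Theorem~\ref{thm:lexico-ordering} with this value of $d$ then yields that routing~$1$ is more reliable than routing~$2$ for all sufficiently small $p$ (explicitly, for $p<p_0$ with the threshold $p_0$ provided in that theorem), which is precisely the claim of the corollary.

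There is essentially no technical obstacle in this argument; the only step requiring any care is the bookkeeping that pins the first index of disagreement to be exactly $d$ rather than some larger index. This relies on two facts about the MCLC: that it rules out all cross-layer cuts of size below $d$ (so the vectors match there), and that a cut of size $d$ does exist under each routing (so $N_d$ and $M_d$ are strictly positive, making $N_d<M_d$ a strict inequality between positive integers). With that in place, the corollary is just the specialization of the lexicographic ordering theorem to the case where the two routings already agree on MCLC size, and no additional computation is needed.
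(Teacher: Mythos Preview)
Your proposal is correct and follows exactly the approach the paper takes: the paper simply observes, immediately before stating the corollary, that when the two routings share the same MCLC size but routing~$1$ has fewer MCLCs, routing~$1$ is lexicographically smaller, so Theorem~\ref{thm:lexico-ordering} applies. Your careful verification that the first index of disagreement is precisely $d$ (via $N_i=M_i=0$ for $i<d$ and $N_d<M_d$) is more explicit than the paper's one-line justification, but the argument is the same.
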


The expression for $p_0$ given in~\thmref{lexico-ordering} also provides some insight into how the difference of the cut vectors affects the guaranteed regime. For example, if $d$ is small and $M_d-N_d$ is large, the guaranteed regime is larger. In other words, if one lightpath routing has fewer small cross-layer cuts than the other, it will achieve higher reliability for a larger range of $p$ in the low probability regime.

Therefore, for reliability maximization in the low failure probability regime, it is desirable to maximize the size of the MCLC while minimizing the number of such MCLCs. This condition will be used to develop the algorithms in Section \ref{sec:lightpath-rerouting}.

\subsection{Extension of Optimal Probability Regimes}\label{sec:infocom11_regime_extension}
The expressions in~\thmref{lexico-ordering} only consider the first element in the two cut vectors that are different. As a result, the guaranteed regime is rather conservative. For instance, the expression fails to capture the uniform optimality for a lightpath routing that  satisfies the condition in~\thmref{uniform_optimal_dominance}. In this section, we will develop a more general expression for the regime bound that includes other elements in the cut vectors.

Consider two lightpath routings $1$ and $2$. Let $F_j(p)$ be the failure polynomial of routing $j$ ($=1,2$), and $N_i$'s and $M_i$'s be the coefficients in $F_1(p)$ and $F_2(p)$ respectively. Define the following vector of partial sums:
\[
\overrightarrow{{\bf N}}=\left[\sum\limits_{i=0}^kN_i|k=0,...,m\right]
\]
The vector $\overrightarrow{{\bf M}}$ is defined similarly. Note that the $i$-th element $\overrightarrow{N}_i$ of vector $\overrightarrow{{\bf N}}$ is the total number of cross-layer cuts of size at most $i$. 
We first extend the definition of \emph{lexicographical ordering} as follows:
\begin{definition}
Lightpath routing $1$ is said to be  $k$-lexicographically smaller than lightpath routing $2$ if
\begin{align*}
k=\max\set{j:\overrightarrow{N}_{i}\leq\overrightarrow{M}_{i},\quad\forall i< d+j} \mbox{\ and $k\geq1$},
\end{align*}
where $d$ is the position of the first element where the two cut vectors differ.
\end{definition}

Therefore, a lightpath routing is lexicographically smaller (in the original sense) if and only if it is $k$-lexicographically smaller for some $k\geq 1$.
The $k$-lexicographical ordering thus compares two lightpath routings based on structures beyond the smallest cuts, making it possible to establish a larger optimality regime. Roughly speaking, the value of $k$ reflects the degree of
dominance of a lightpath routing in the low probability regime: a $k$-lexicographically smaller lightpath routing
means that it has fewer ``small'' cuts, where the definition for ``small'' is broader
if $k$ is larger.

It is obvious that when $p\leq0.5$, the failure probability of a cross-layer cut is a non-increasing function of the cut size, because $p^{i}(1-p)^{m-i}\geq p^{i+1}(1-p)^{m-(i+1)}$ for $p\leq0.5$. Suppose that routing $1$ has smaller total number of cuts of size up to $i$ than routing $2$, i.e., $\overrightarrow{N}_i\leq\overrightarrow{M}_i$. To compare cross-layer cuts of size at most $i+1$, suppose further that the relative increment $N_{i+1}-M_{i+1}$ in the number of larger cuts does not exceed the surplus $\overrightarrow{M}_i-\overrightarrow{N}_i$ from smaller cuts, i.e., $\overrightarrow{N}_{i+1}\leq\overrightarrow{M}_{i+1}$. Then, with respect to cut size at most $i+1$, routing $1$ will have smaller failure probability than routing $2$, provided that the same was true for cut size up to $i$. This observation leads to the following theorem on the relationship between lexicographical ordering and probability regime.

\begin{theorem}\label{thm:partial_sum_dominance}
Given two vectors {\bf N}=$[N_i|i=0,\ldots,m]$ and
{\bf M}=$[M_i|i=0,\ldots,m]$, let $F_1(p)=\sum_{i=0}^nN_ip^i(1-p)^{m-i}$ and $F_2(p)=\sum_{i=0}^nM_ip^i(1-p)^{m-i}$. For any $j$,
let $\overrightarrow{\Delta}_j=\sum\limits_{i=0}^j(M_i-N_i)$ and $\overrightarrow{\delta}_j = \max\limits_{j+1\leq i\leq m}\set{\frac{N_i-M_i}{\choose{m}{i}}}$.
If the vector {\bf N} is $k$-lexicographically smaller than {\bf M}, then
\[
F_1(p)\leq F_2(p) \textrm{ for } p\leq p_0^l=\min\set{0.5, \max\limits_{d\leq j\leq d+k-1}B_j}
\]
where $d=\min\set{i:N_i< M_i}$ and
\begin{align*}
B_j&=\left\{\begin{array}{l l}
0.5,&\mbox{if $j=m$} \nonumber \\
\frac{1}{\frac{m}{j+1} + \overrightarrow{\delta}_j\choose{m}{j+1}/\overrightarrow{\Delta}_j},& \mbox{otherwise.} \nonumber \\
\end{array}
\right.
\end{align*}

\begin{proof}
See Appendix \ref{app:unifying-thm}.
\end{proof}
\end{theorem}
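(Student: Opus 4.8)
The statement is a one-sided polynomial comparison, so the plan is to work directly with $g(p):=F_2(p)-F_1(p)=\sum_{i=0}^m D_i\,p^i(1-p)^{m-i}$, where $D_i:=M_i-N_i$, and to show $g(p)\ge 0$ on the asserted range (that is exactly ``routing~$1$ is more reliable than routing~$2$''). The $k$-lexicographic hypothesis translates into three facts about the coefficients: $N_i=M_i$ (hence $D_i=0$) for $i<d$; $D_d>0$; and the partial sums $\overrightarrow{\Delta}_j=\sum_{i=0}^jD_i$ are nonnegative for every $d\le j\le d+k-1$. Moreover $\overrightarrow{\delta}_j=\max_{i>j}\{(N_i-M_i)/\binom{m}{i}\}$ records, for $i>j$, the worst ``deficit'' of routing~$1$ normalized by $\binom{m}{i}$; equivalently $D_i\ge-\overrightarrow{\delta}_j\binom{m}{i}$ for all $i>j$. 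These two observations -- one about head partial sums, one about tail deficits -- are the only structural inputs needed.

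The first step is to remove the $(1-p)$ factors: on $(0,\tfrac12]$ put $x=p/(1-p)\in(0,1]$, which is strictly increasing, so $g(p)=(1-p)^m\,\tilde g(x)$ with $\tilde g(x)=\sum_{i=d}^m D_i x^i$, and it suffices to show $\tilde g(x)\ge 0$. (Keeping $p$ instead, the same effect comes from the fact that $p^i(1-p)^{m-i}$ is nonincreasing in $i$ for $p\le\tfrac12$; this is what forces the $\min\{0.5,\cdot\}$ in the bound.) Next, fix an index $j$ with $d\le j\le d+k-1$ and split $\tilde g(x)=\sum_{i=d}^{j}D_i x^i+\sum_{i=j+1}^m D_i x^i$. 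Abel summation on the head, together with $\overrightarrow{\Delta}_{d-1}=0$, gives $\sum_{i=d}^j D_i x^i=\overrightarrow{\Delta}_j x^j+(1-x)\sum_{i=d}^{j-1}\overrightarrow{\Delta}_i x^i\ge\overrightarrow{\Delta}_j x^j$, since $0\le x\le 1$ and all the $\overrightarrow{\Delta}_i$ in range are nonnegative -- this is precisely where the $k$-lexicographic hypothesis is used. On the tail, $D_i\ge-\overrightarrow{\delta}_j\binom{m}{i}$ gives $\sum_{i>j}D_i x^i\ge-\overrightarrow{\delta}_j\sum_{i>j}\binom{m}{i}x^i$. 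Hence $\tilde g(x)\ge 0$ as soon as $\overrightarrow{\Delta}_j x^j\ge\overrightarrow{\delta}_j\sum_{i>j}\binom{m}{i}x^i$.

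The remaining work -- and the main technical obstacle -- is to estimate the tail sum $\sum_{i>j}\binom{m}{i}x^i$ tightly enough to recover the stated threshold $B_j$. The natural route is the ratio bound $\binom{m}{j+1+l}/\binom{m}{j+1}=\prod_{s=1}^{l}\tfrac{m-j-s}{j+1+s}\le(m/(j+1))^l$, which turns the tail into a geometric series bounded by $\binom{m}{j+1}\,x/(1-\tfrac{m}{j+1}x)$ (valid whenever $\tfrac{m}{j+1}x<1$, which holds on the range that emerges). Substituting and solving $\overrightarrow{\Delta}_j x^j\ge\overrightarrow{\delta}_j\binom{m}{j+1}x^{j+1}/(1-\tfrac{m}{j+1}x)$ for $x$ yields $x\le\bigl(\tfrac{m}{j+1}+\overrightarrow{\delta}_j\binom{m}{j+1}/\overrightarrow{\Delta}_j\bigr)^{-1}=B_j$; translating back to $p$ (and using $p\le\tfrac12$ to control the gap between $p$ and $p/(1-p)$ and to keep the geometric series convergent) gives $g(p)\ge 0$. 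Since the argument applies verbatim to every $j\in\{d,\dots,d+k-1\}$, the best choice suffices, so $g(p)\ge 0$ whenever $p\le\max_{d\le j\le d+k-1}B_j$; the boundary case $j=m$ (available when $k=m-d+1$) has an empty tail and directly yields all of $[0,\tfrac12]$, matching $B_m=0.5$. Along the way one checks the side conditions: $\overrightarrow{\Delta}_d=D_d>0$, so the max is over a nonempty set of useful indices, and indices with $\overrightarrow{\Delta}_j=0$ or $\overrightarrow{\delta}_j\le 0$ are harmless (they give $B_j=0$, resp.\ a trivially satisfied inequality).

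Finally, Theorem~\ref{thm:lexico-ordering} drops out as the case $k=1$, $j=d$: then $\overrightarrow{\Delta}_d=M_d-N_d$, and the crude bounds $\overrightarrow{\delta}_d\le 1$ (a cross-layer cut of size $i$ is in particular an $i$-subset of physical links, so $N_i\le\binom{m}{i}$), $\binom{m}{d+1}\le\tfrac{m}{d+1}\binom{m}{d}$, and $M_d-N_d\le\binom{m}{d}$ together give $B_d\ge(d+1)(M_d-N_d)/\bigl(2m\binom{m}{d}\bigr)$, which is the $p_0$ of Theorem~\ref{thm:lexico-ordering}.
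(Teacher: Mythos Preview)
Your overall architecture matches the paper's: lower-bound the head $\sum_{i\le j}D_i\,p^i(1-p)^{m-i}$ by $\overrightarrow{\Delta}_j\,p^j(1-p)^{m-j}$ (your Abel summation in $x=p/(1-p)$ is exactly the paper's inductive Lemma, just rewritten), upper-bound the tail by a geometric-type estimate, and solve for the threshold. The edge cases ($j=m$, $\overrightarrow{\delta}_j\le 0$) and the derivation of Theorem~\ref{thm:lexico-ordering} at the end are also handled correctly.

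The gap is in the tail step. Your ratio bound $\binom{m}{j+1+l}/\binom{m}{j+1}\le(m/(j+1))^l$ gives
\[
\sum_{i>j}\binom{m}{i}x^i\;\le\;\binom{m}{j+1}\,\frac{x^{j+1}}{1-\tfrac{m}{j+1}x},
\]
and solving the resulting inequality yields $x\le B_j$. But $x=p/(1-p)$, so this is $p\le B_j/(1+B_j)$, strictly weaker than the claimed $p\le B_j$. The sentence ``using $p\le\tfrac12$ to control the gap between $p$ and $p/(1-p)$'' cannot repair this: that gap is a genuine factor of $1/(1-p)\in[1,2]$, and no choice of $p\le\tfrac12$ makes $x\le B_j$ equivalent to $p\le B_j$.

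The paper avoids this loss by staying in the $p$-variable and invoking Feller's binomial tail inequality
\[
\sum_{i\ge r}\binom{m}{i}p^i(1-p)^{m-i}\;\le\;\binom{m}{r}p^r(1-p)^{m-r}\cdot\frac{r(1-p)}{r-mp}\qquad(r>mp),
\]
applied with $r=j+1$. In your $x$-variable this corresponds to the sharper geometric ratio $\tfrac{m-j-1}{j+1}\,x$ rather than your $\tfrac{m}{j+1}\,x$ (the difference is exactly $x$), and combined with the head bound it delivers $p\le B_j$ on the nose. So the fix is simply to replace your crude ratio bound on $\binom{m}{j+1+l}/\binom{m}{j+1}$ by the tighter one coming from the actual ratio of consecutive binomial \emph{probabilities}, or equivalently to quote Feller's bound directly; the rest of your argument then goes through unchanged.
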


Therefore, the probability regime bound $p_0^l$ in Theorem \ref{thm:partial_sum_dominance} is a non-decreasing function of $k$, which means that a lightpath routing with smaller number of cuts over a larger size range will
be more reliable over a larger probability regime. This is consistent with the conclusion in Section \ref{sub:low-regime}, that the lightpath routing design should
minimize the lexicographical ordering of the cut vector.

{\kyb
~\thmref{lexico-ordering} follows from~\thmref{partial_sum_dominance}.
 For a lexicographically smaller lightpath routing, the term $B_d$
%in~\thmref{partial_sum_dominance}
 is given by:
\begin{align*}
\frac{1}{\frac{m}{d+1} + \overrightarrow{\delta}_d\choose{m}{d+1}/\overrightarrow{\Delta}_d }
&=\frac{1}{\frac{m}{d+1} + \overrightarrow{\delta}_d\choose{m}{d+1}/(M_d-N_d)} \\
&\geq\frac{(d+1)(M_d-N_d)}{m(M_d-N_d) + (d+1)\choose{m}{d+1}},\\
&\geq\frac{(d+1)(M_d-N_d)}{m\choose{m}{d} + (m-d)\choose{m}{d}} \\
&\geq\frac{(d+1)(M_d-N_d)}{2m\choose{m}{d}},
\end{align*}
where the first inequality is due to $\overrightarrow{\delta}_d\leq 1$.

An interesting special case is when $d+k-1=m$, that is, $\overrightarrow{M}_j\geq\overrightarrow{N}_j$ for all $j=0,\ldots,m$.
In that case, the term $B_{d+k-1}=B_m=0.5$, implying that the optimality regime is $[0,0.5]$. We summarize this as the following corollary:
\begin{corollary}\label{cor:p-leq-.5}
If $\overrightarrow{N}_j\leq\overrightarrow{M}_j$ for all $j=0,\ldots,m$,
then lightpath routing $1$ is at least as reliable as lightpath routing $2$ for $p\leq0.5$, i.e., $F_1(p)\leq F_2(p)$ for $p\leq0.5$.
\end{corollary}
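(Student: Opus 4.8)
The plan is to read off Corollary~\ref{cor:p-leq-.5} as the boundary case $d+k-1=m$ of \thmref{partial_sum_dominance}. First I would dispose of the degenerate case: if $N_i=M_i$ for every $i$ then $F_1(p)\equiv F_2(p)$ and there is nothing to prove. Otherwise let $d=\min\set{i:N_i\neq M_i\}$ be the first index at which the cut vectors differ. Since $N_i=M_i$ for $i<d$, we have $\overrightarrow{N}_{d-1}=\overrightarrow{M}_{d-1}$, so the hypothesis $\overrightarrow{N}_d\leq\overrightarrow{M}_d$ forces $N_d\leq M_d$, and as $N_d\neq M_d$ in fact $N_d<M_d$; hence $d$ coincides with the quantity $\min\set{i:N_i<M_i\}$ appearing in the statement of \thmref{partial_sum_dominance}.

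Next I would check that routing $1$ is $k$-lexicographically smaller than routing $2$ with $d+k-1=m$. By assumption $\overrightarrow{N}_i\leq\overrightarrow{M}_i$ holds for \emph{every} $i\in\set{0,\ldots,m\}$, hence in particular for all $i<d+(m-d+1)=m+1$; thus $k=m-d+1$ is admissible in the definition of $k$-lexicographical ordering and $d+k-1=m$. Applying \thmref{partial_sum_dominance} with this $k$, the guaranteed regime bound is $p_0^l=\min\set{0.5,\ \max_{d\leq j\leq m}B_j\}$. Since the range of the inner maximum now contains $j=m$ and $B_m=0.5$ by definition, we obtain $\max_{d\leq j\leq m}B_j\geq 0.5$, so $p_0^l=0.5$, i.e. $F_1(p)\leq F_2(p)$ for all $p\leq 0.5$, which is exactly the claim.

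For completeness I would also record a short self-contained argument that bypasses \thmref{partial_sum_dominance} and makes transparent where $p\leq0.5$ enters: writing $a_i=M_i-N_i$ and $g_i(p)=p^i(1-p)^{m-i}$, summation by parts gives
\[
F_2(p)-F_1(p)=\sum_{i=0}^m a_i g_i(p)=\sum_{i=0}^{m-1}\bigl(\overrightarrow{M}_i-\overrightarrow{N}_i\bigr)\bigl(g_i(p)-g_{i+1}(p)\bigr)+\bigl(\overrightarrow{M}_m-\overrightarrow{N}_m\bigr)\,g_m(p).
\]
Each partial-sum difference $\overrightarrow{M}_i-\overrightarrow{N}_i$ is nonnegative by hypothesis, $g_m(p)=p^m\geq0$, and $g_i(p)-g_{i+1}(p)=p^i(1-p)^{m-i-1}(1-2p)\geq0$ precisely when $p\leq0.5$; hence every term is nonnegative and $F_2(p)\geq F_1(p)$ on $[0,0.5]$.

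There is essentially no obstacle here — the corollary is a direct specialization. The only point needing a moment's care is the bookkeeping in the definition of $k$-lexicographical ordering: one must verify that the blanket hypothesis on all partial sums pushes $k$ up to $m-d+1$, so that the term $B_m=0.5$ is included in the maximum of \thmref{partial_sum_dominance}; equivalently, in the self-contained version, that the monotonicity $g_i(p)\geq g_{i+1}(p)$ on $[0,0.5]$ is what allows the nonnegativity of the partial sums to propagate through the Abel summation.
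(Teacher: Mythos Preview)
Your primary derivation—specializing \thmref{partial_sum_dominance} to $d+k-1=m$ so that $B_m=0.5$ enters the maximum and forces $p_0^l=0.5$—is exactly the paper's argument. Your added Abel-summation proof is also correct and gives a clean self-contained route the paper does not spell out; it is in effect the $j=m$ case of \lemref{induction-partial-sum} unpacked, with the monotonicity $g_i(p)\geq g_{i+1}(p)$ on $[0,0.5]$ doing the work that the induction step there performs.
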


Note that the condition in~\corref{p-leq-.5} requires every partial sum in the vector {\bf M} to be at least the corresponding partial sum in the vector {\bf N}, which is a stronger condition than the lexicographic comparison in~\thmref{lexico-ordering}.
This stronger condition allows the better optimality regime to be established
in~\corref{p-leq-.5}.
}

\begin{comment}
- previous section provides an exact optimization criteria for maximize reliability when link failures are rare
- Provides important insight into designing reliable networks.
- Network performance can be improved by intelligently rerouting existing lightpaths, or adding new lightpaths that significantly eliminates small cuts.
- In this section, we consider two important problems of rerouting and augmentation.
\end{comment}

\section{Maximizing Reliability by Improving Lightpath Routing and Logical Connectivity}
\label{sec:lightpath-rerouting}
In this section, we explore ways to improve the reliability
of a layered network. Typically, the physical topology is static and difficult to change.
Therefore, the reliability of a layered network can be improved by one of two ways: (i) improving the lightpath routing, or (ii) improving the logical topology.

%\begin{enumerate}
%\item By improving the lightpath routing,
%\item By improving the connectivity of the logical topology.
%\end{enumerate}

We have shown in~\secref{properties} that when physical link failures are rare,
the lightpath routing that minimizes the lexicographical ordering
will maximize the reliability. This new observation gives us an 
exact optimization criterion for designing reliable layered networks. 

{\bf As discussed in~\secref{intro}, the traditional approach of
  jointly routing all logical links is often too complex, which makes it
infeasible for larger networks.
This motivates  the incremental approach introduced in this section, 
where the layered network is improved one
logical link at a time. This significantly reduces the problem space
and allows us to use a more sophisticated objective function based on
the optimziation criterion we studied in~\secref{properties}.

Within this context, we study two optimization problems that are 
fundamental to improving the lightpath routing and logical
connectivity:
}

\begin{enumerate}
\item {\bf Lightpath Rerouting}: Given the physical, logical topologies and a lightpath routing, find a logical link to reroute, such that the resulting reliability is maximized.
\item {\bf Logical Topology Augmentation}: Given the physical, logical topologies and a lightpath routing, find a pair of logical nodes, as well as a physical path between the nodes, such that the addition of the corresponding logical link will provide maximum reliability improvement.
\end{enumerate}

The above two problems are basic building blocks for designing reliable layered networks. For example, given an existing layered network, we can iteratively reroute existing lightpaths in the network until no further improvement is possible (e.g.~\figref{rerouting_improve}). {\bf Hence, given the physical and logical topologies, the iterative rerouting algorithm can be described as follows: 
\begin{enumerate}
\item Generate an arbitrary initial lightpath routing.
\item Reroute a logical link using ILP/approximation algorithm introduced in~\secref{lightpath-rerouting}.
\item Repeat Step 2 until no further improvement can be made by rerouting a single lightpath.
\end{enumerate}
Similarly, if it is feasible to add new logical links, we can iteratively augment the logical topology to further improve the reliability; and studying the Logical Topology Augmentation problem allows us to select such new logical links effectively. These iterative rerouting and augmentation algorithms will be used for performance evaluation in Section \ref{sec:infocom11_simulation_2}.}

\begin{figure}[h]
\centering
\subfigure[$d=1, N_d=3$]{
          \includegraphics[height=.11\textwidth]{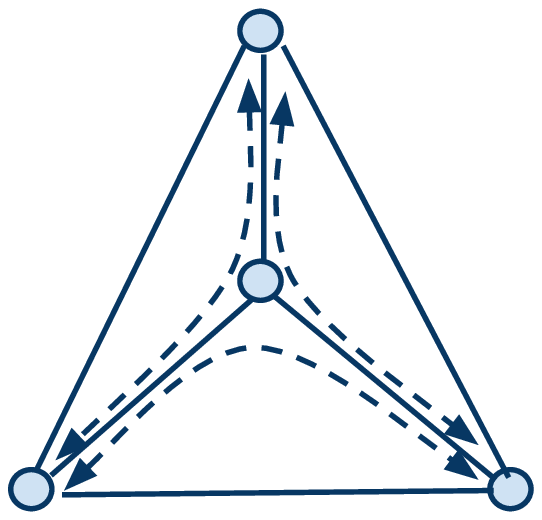}}
\subfigure[$d=1, N_d=1$]{
          \includegraphics[height=.11\textwidth]{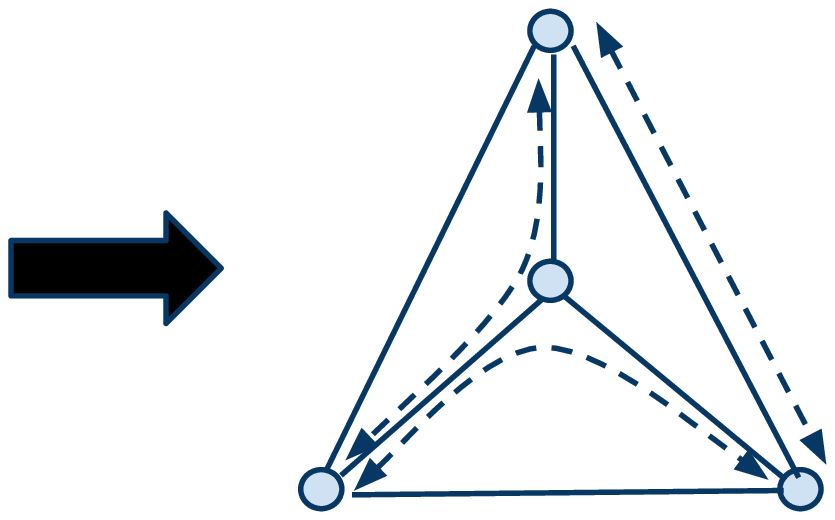}} \\
\subfigure[$d=2, N_d=5$]{
          \includegraphics[height=.11\textwidth]{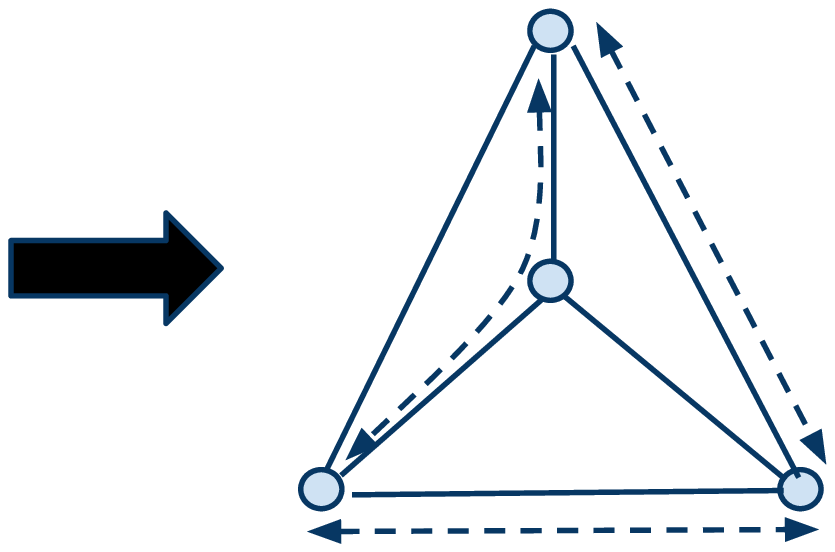}}
\subfigure[$d=2, N_d=3$]{
          \includegraphics[height=.11\textwidth]{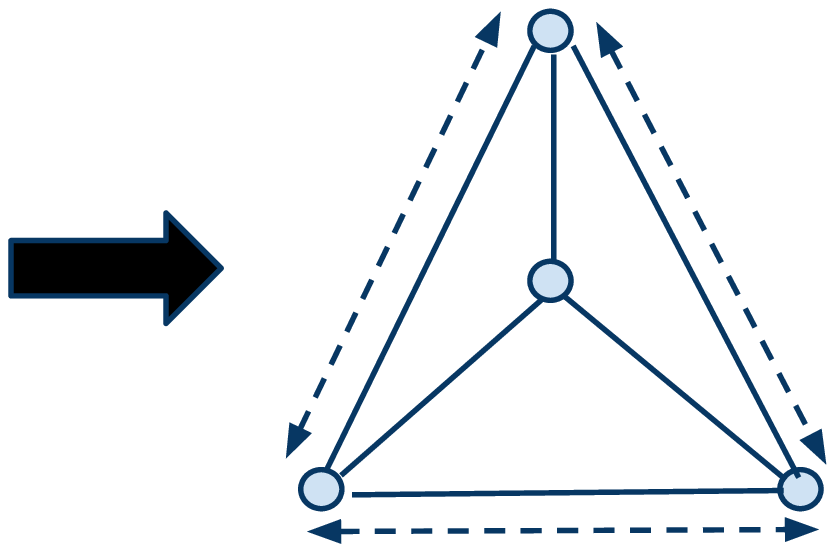}}
\caption{Improving reliability via lightpath rerouting. The physical topology is in solid lines, and the lightpath routing of the logical topology
is in dashed lines. The MCLC value and the number of MCLCs in the lightpath routings are denoted by $d$ and $N_d$.}
\label{fig:rerouting_improve}\vspace{-0.4cm}
\end{figure}

In this section, we present algorithms for the rerouting and augmentation problems. In the next section, we 
will evaluate the effectiveness of rerouting and augmentation on improving
cross-layer reliability.

\subsection{Lightpath Rerouting}
\label{sec:rerouting_alg}
Given a layered network and its lightpath routing, the objective of the {\it Lightpath Rerouting} problem is to find 
the best way to reroute a lightpath, so that the reliability improvement is maximized. Recall that with low link failure probability, the reliability of a network is maximized when the lexicographical ordering of its cut vector is
minimized. Therefore, the most effective reroute should maximize the MCLC of the
resulting lightpath routing, and also minimize the number of MCLCs.

In the following sections, we first analyze the effect of rerouting a lightpath and characterize conditions where
such a rerouting is beneficial. This provides the groundwork for our rerouting algorithms. Based on these observations, we develop
an ILP to find the optimal lightpath to reroute.
Next, we propose an approximation algorithm that computes a near-optimal
solution in much shorter time. This gives us a scalable algorithm that can be
used for designing large layered networks.

\subsubsection{Effects of Rerouting a Lightpath}
\label{sec:rerouting-background}
Let $d$ be the size of the MCLC under the initial routing. When the physical route of a logical link changes, some cross-layer cuts will be converted into non-cuts, and some non-cuts will be converted into cross-layer cuts. In the low failure probability regime, the reliability will be improved by the rerouting if the following is true:

\begin{enumerate}
\item The  conversion of cross-layer cuts with size $d$ to non-cuts outnumbers the conversion in the opposite direction.
\item The MCLC value does not decrease.
\end{enumerate}

 Therefore, we can formulate the lightpath rerouting as an optimization problem to maximize the
 reduction in the number of MCLCs, subject to the constraint that no non-cuts of size smaller than
 $d$ is converted to cross-layer cuts. The exact conditions for the conversion
between cuts and non-cuts are described as follows, which will be
used as the basis of the ILP formulation as well as the approximation algorithm.

Given the physical topology $G_P=(V_P, E_P)$ and the logical topology $G_L=(V_L,E_L)$, we
 model a lightpath routing as a set of binary constants $\set{f_{ij}^{st}}$, where $f_{ij}^{st}=1$
if and only if logical link $(s,t)$ uses physical link $(i,j)$ in the lightpath routing. For a
given  set of physical links $S$, we
define the {\it logical residual graph} for $S$, denoted as  $G_L^S$, to be
$\set{(s,t)\in E_L:\sum\limits_{ (i,j)\in S}f^{st}_{ij}=0}$. In other words, the
residual graph consists of logical links that use none of the physical links in $S$.
By definition, the set $S$ is a cross-layer cut if and only
if its logical residual graph is disconnected. Given a cross-layer cut $S$, it is called a
{\it $k$-way cross-layer cut} if its logical residual graph has $k$ connected components.
In addition, given a cross-layer non-cut $T$ for a lightpath routing, we call a logical
link $(s,t)$ {\it critical} to $T$ if $(s,t)$ is a cut edge of the residual graph $G_L^T$, that is,
it is an edge in $G_L^T$ whose removal will disconnect the residual graph.

The following theorems describe the conditions for a lightpath rerouting that results in conversions between cross-layer cuts and non-cuts. The proofs can be found in \cite{kayi:thesis}.

\begin{theorem}
\label{thm:cut_to_non_cut}
Let $S$ be a cross-layer cut for a lightpath routing. Rerouting logical link $(s,t)$ from physical
path $P_1$ to $P_2$ turns $S$ into a non-cut if and only if the following conditions are true:
\begin{enumerate}
\item $S$ is a 2-way cross-layer cut.
\item $s$ and $t$ are disconnected in the residual graph for $S$.
\item $P_2$ does not use any physical links in $S$.
\end{enumerate}

\begin{proof}
Suppose all the above conditions are true. Since the new route $P_2$ does not use any physical links in $S$, the
logical link $(s,t)$ will be in the logical residual graph for $S$ under the new lightpath routing. Other logical
links that are in the original residual graph will remain,
because none of their physical routes have changed. Therefore, the
residual graph will become connected now that $(s,t)$ is added to it, which
implies $S$ becomes a non-cut. It can be easily verified that the residual graph will
remain disconnected if any of the above conditions do not hold.
\end{proof}
\end{theorem}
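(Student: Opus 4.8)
The plan is to reduce the whole statement to a single observation: rerouting only the logical link $(s,t)$ changes the logical residual graph $G_L^S$ in exactly one place, namely whether the edge $(s,t)$ is present. For every other logical link $(s',t')\neq(s,t)$ the value $\sum_{(i,j)\in S}f^{s't'}_{ij}$ is unaffected by the reroute, so the new residual graph $\widetilde{G}$ agrees with the old one $G_L^S$ on all edges except possibly $(s,t)$: it contains $(s,t)$ iff the new route $P_2$ avoids $S$, while $G_L^S$ contained $(s,t)$ iff $P_1$ avoided $S$. Since both graphs are spanning subgraphs of $G_L$ (vertex set $V_L$), I will use two elementary facts throughout: deleting edges can never connect a graph, and adding a single edge decreases the number of connected components by at most one.

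For the ``if'' direction I would argue directly. By condition 3, $P_2$ avoids $S$, so $\widetilde{G}=G_L^S+(s,t)$. By condition 1, $G_L^S$ has exactly two connected components, and by condition 2, $s$ and $t$ lie in different ones; hence adding the edge $(s,t)$ merges these two components into one, $\widetilde{G}$ is connected, and $S$ is a non-cut.

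For the ``only if'' direction, suppose the reroute turns $S$ into a non-cut. If $P_2$ used some link of $S$, then $\widetilde{G}$ would be a spanning subgraph of the disconnected graph $G_L^S$ and hence still disconnected, a contradiction; this forces condition 3, and again $\widetilde{G}=G_L^S+(s,t)$. Now $G_L^S$ is disconnected while $G_L^S+(s,t)$ is connected, so adding one edge has dropped the component count to $1$; since that drop is at most $1$, $G_L^S$ must have had exactly two components and the edge $(s,t)$ must join them, i.e.\ $s$ and $t$ are disconnected in $G_L^S$, giving conditions 1 and 2. I would also remark that condition 2 already implies $(s,t)\notin G_L^S$, equivalently $P_1$ meets $S$, which is why $P_1$ never appears explicitly in the hypotheses.

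I do not expect a real obstacle here; the only things to be careful about are that the residual graph is a \emph{spanning} subgraph of $G_L$, so that edge deletion and addition behave monotonically with respect to connectivity, and the (mild) point that ``$s$ and $t$ are disconnected in $G_L^S$'' already excludes $(s,t)$ from being an edge of $G_L^S$ — this is exactly what ties the three conditions together and accounts for the asymmetric roles of $P_1$ and $P_2$ in the statement.
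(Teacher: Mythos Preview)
Your proposal is correct and follows essentially the same approach as the paper: both hinge on the observation that rerouting $(s,t)$ affects the residual graph $G_L^S$ only at the edge $(s,t)$, so the new residual graph equals $G_L^S+(s,t)$ precisely when $P_2$ avoids $S$. Your treatment is in fact more thorough than the paper's, which proves the ``if'' direction as you do but dispatches the ``only if'' direction with a single sentence (``It can be easily verified\ldots''); your explicit argument via the component-count drop and the spanning-subgraph monotonicity fills in exactly what that sentence omits.
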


\begin{theorem}
\label{thm:non_cut_to_cut}
Let $T$ be a cross-layer non-cut for a lightpath routing. Rerouting logical link $(s,t)$ from physical
path $P_1$ to $P_2$ turns $T$ into a cross-layer cut if and only if the following conditions are true:
\begin{enumerate}
\item $(s,t)$ is critical to $T$.
\item $P_2$ uses some physical link in $T$.
\end{enumerate}

\begin{proof}
Suppose both conditions are true.
Since $P_2$ uses some physical fiber in $T$, the logical link will be removed
from the residual graph for $T$ under the new lightpath routing.
Since $(s,t)$ is critical to the non-cut $T$, its removal will disconnect the residual graph, which means
that $T$ will become a cross-layer cut. It can be easily verified that
the residual graph will remain connected if any of the two conditions do not hold.
\end{proof}
\end{theorem}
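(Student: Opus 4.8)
The plan is to reduce the rerouting operation to a single-edge modification of the logical residual graph and then settle the ``if and only if'' by a short case analysis on the membership of $(s,t)$. First I would record the key structural observation: rerouting logical link $(s,t)$ from $P_1$ to $P_2$ leaves the physical route of \emph{every other} logical link unchanged, so the only edge whose membership in the residual graph for $T$ can change is $(s,t)$ itself. Write $G_L^T$ for the residual graph before the reroute and $\widetilde{G}_L^T$ for the residual graph after; then $\widetilde{G}_L^T$ and $G_L^T$ agree on all logical edges except possibly $(s,t)$. Moreover $(s,t)\in G_L^T$ iff $P_1$ uses no physical link of $T$, and $(s,t)\in\widetilde{G}_L^T$ iff $P_2$ uses no physical link of $T$. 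Since $T$ is a cross-layer non-cut for the original routing, $G_L^T$ is connected.

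Next I would prove the ``if'' direction. Assume $(s,t)$ is critical to $T$ and that $P_2$ uses some physical link of $T$. By definition of critical, $(s,t)$ is an edge of $G_L^T$ whose removal disconnects $G_L^T$; and since $P_2$ meets $T$, we have $(s,t)\notin\widetilde{G}_L^T$, so $\widetilde{G}_L^T = G_L^T\setminus\{(s,t)\}$, which is disconnected. Hence $T$ is a cross-layer cut for the new routing.

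Then the ``only if'' direction, by contraposition. If Condition~1 fails, there are two subcases: either $(s,t)\notin G_L^T$, in which case $\widetilde{G}_L^T\supseteq G_L^T$ is still connected; or $(s,t)\in G_L^T$ but is not a cut edge of $G_L^T$, in which case $\widetilde{G}_L^T$ equals either $G_L^T$ or $G_L^T\setminus\{(s,t)\}$, both connected. If Condition~2 fails, then $P_2$ misses $T$, so $(s,t)\in\widetilde{G}_L^T$ and therefore $\widetilde{G}_L^T\supseteq G_L^T$ is connected. In every case $\widetilde{G}_L^T$ is connected, so $T$ remains a non-cut, which completes the argument.

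I do not expect a real obstacle here; the only point requiring care is that ``$(s,t)$ is critical to $T$'' already packages the requirement $(s,t)\in G_L^T$ (equivalently, $P_1$ misses $T$), so no separate hypothesis on $P_1$ is needed, and the four combinations of ($(s,t)$ in/out of $G_L^T$) and ($(s,t)$ in/out of $\widetilde{G}_L^T$) must be enumerated cleanly so that the equivalence is airtight rather than merely one-directional.
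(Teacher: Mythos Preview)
Your proposal is correct and follows essentially the same approach as the paper: both arguments work by observing that only the membership of the single edge $(s,t)$ in the residual graph $G_L^T$ can change under the reroute, and then deduce the equivalence from that. The paper's proof is terser, dispatching the ``only if'' direction with ``it can be easily verified,'' whereas you spell out the case analysis explicitly; your version is a clean completion of what the paper leaves implicit.
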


Therefore, the optimal rerouting should maximize the number of cross-layer cuts
satisfying~\thmref{cut_to_non_cut} and minimize the number of non-cuts
satisfying~\thmref{non_cut_to_cut}.
However, it is also important to ensure that none of the non-cuts with size
smaller than $d$ is converted to cross-layer cuts by the rerouting, since otherwise the MCLC
value will decrease. The following theorem states that only non-cuts with size at least $d-1$
can be converted into a cross-layer cut by rerouting a single lightpath.

\begin{theorem}
\label{thm:not_cut_size}
Let $d$ be the Min Cross Layer Cut value of a lightpath routing and let $\mathcal{NC}$ be the set of
cross-layer non-cuts that can be converted into cross-layer cuts by rerouting a single logical link.
 Then $|T|\geq d-1$ for all $T\in\mathcal{NC}$.
 
 \begin{proof}
Suppose $\mathcal{NC}$ contains a convertible non-cut $S$ with size less than $d-1$. Since $S$ is convertible,
there exists a logical link $(s,t)$ that is critical to $S$. Now let $l$ be a fiber used by $(s,t)$, then
the fiber set $S\cup\set{l}$ would disconnect the logical residual graph and is therefore a cross-layer cut.
However, such a set contains at most $d-1$ fibers, contradicting that $d$ is the Min Cross Layer Cut.
\end{proof}
\end{theorem}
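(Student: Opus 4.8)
The plan is to argue by contradiction, leaning on the characterization of convertible non-cuts in Theorem~\ref{thm:non_cut_to_cut} together with the definition of the MCLC. Suppose, for contradiction, that $\mathcal{NC}$ contains a non-cut $T$ with $|T|\le d-2$. Since $T\in\mathcal{NC}$, some rerouting of a single logical link $(s,t)$ from a path $P_1$ to a path $P_2$ turns $T$ into a cross-layer cut; by the "only if" direction of Theorem~\ref{thm:non_cut_to_cut}, this forces $(s,t)$ to be critical to $T$, i.e.\ $(s,t)$ is a cut edge of the logical residual graph $G_L^T$. I would not even need the second condition of that theorem (that $P_2$ uses a fiber of $T$); criticality alone is what does the work.

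The key step is then to exhibit a small cross-layer cut directly. Every logical link is routed over a non-empty physical path, so $(s,t)$ uses at least one physical fiber $\ell$ in the given routing; set $S=T\cup\{\ell\}$, so $|S|\le d-1$. I claim $S$ is a cross-layer cut. Any logical link in $G_L^S$ avoids all fibers of $S\supseteq T$, hence lies in $G_L^T$, and since $\ell\in S$ the link $(s,t)$ is excluded, so $G_L^S$ is a spanning subgraph of $G_L^T\setminus\{(s,t)\}$. Because $T$ is a non-cut, $G_L^T$ is connected, and removing the cut edge $(s,t)$ disconnects it; deleting yet more edges keeps it disconnected, so $G_L^S$ is disconnected and $S$ is a cross-layer cut. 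But $|S|\le d-1<d$ contradicts the minimality of $d$, which proves $|T|\ge d-1$ for all $T\in\mathcal{NC}$.

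Since this is a one-line contradiction, there is no real obstacle; the only points requiring care are the two elementary graph facts used to pass from "$(s,t)$ is a cut edge of $G_L^T$" to "$G_L^S$ is disconnected" (deleting a cut edge disconnects a connected graph, and deleting further edges preserves disconnectedness), and the benign modeling assumption that each logical link occupies at least one physical fiber. I would flag that assumption explicitly, because a logical link routed over the empty path would be a degenerate counterexample to the statement as written.
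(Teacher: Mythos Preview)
Your proof is correct and follows essentially the same contradiction argument as the paper: assume a convertible non-cut of size at most $d-2$, use Theorem~\ref{thm:non_cut_to_cut} to extract a critical logical link $(s,t)$, adjoin a fiber used by $(s,t)$, and obtain a cross-layer cut of size at most $d-1$. Your write-up is more detailed in justifying why $G_L^{T\cup\{\ell\}}$ is disconnected and in flagging the (implicit) assumption that each lightpath uses at least one fiber, but the underlying idea is identical.
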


Therefore, when rerouting a lightpath, we need to make sure
that none of the non-cuts with size $d-1$ get converted into cuts
in order to prevent the MCLC value from decreasing.
Based on these observations, we next develop an ILP for the lightpath rerouting problem.

\subsubsection{ILP for Lightpath Rerouting}
\label{sec:rerouting_ilp}
For the given lightpath routing, let $d$ be
the MCLC value, and let $\mathcal{C}_d, \mathcal{NC}_d$ and
$\mathcal{NC}_{d-1}$ be the sets of
2-way cross-layer cuts with size $d$, non-cuts with size $d$,
and non-cuts with size $d-1$ respectively.
The lightpath rerouting problem can be formulated as an ILP that finds the logical link,
 and its new physical route, that maximizes the net reduction in
 MCLCs. {\bf In other words, the optimal reroute should result in the
  minimum number of  cross-layer cuts with size $d$, without creating any cross-layers
  cuts with size $d-1$.}

The ILP can be considered as a path selection problem on an auxiliary graph $G_P^{'}=(V^{'}_P, E_P^{'})$,
where $V^{'}_P=V_P\cup\set{u,v}$, with $u$ and $v$ being the additional source and sink nodes in the
auxiliary graph; and $E_P^{'}=E_P\cup\set{(u,x),(x,v):x\in V_P}$.~\figref{rerouting_ilp} illustrates
the construction of the auxiliary graph.

\begin{figure}[h]
\centering
\includegraphics[height=0.7in]{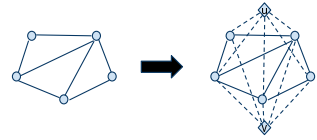}\vspace{-0.2cm}
\caption{Construction of the auxiliary graph for the ILP. $u$ and $v$ are the additional source
and sink nodes, and the dashed lines are the additional links in the auxiliary graph.}
\label{fig:rerouting_ilp}\vspace{-0.3cm}
\end{figure}

We first define the following variables and parameters:

\begin{enumerate}
\item Variables:
\begin{itemize}
\item $\set{g_{st}: (s,t)\in E_L}$: 1 if logical link $(s,t)$ is rerouted, and 0 otherwise.
\item $\set{f_{ij}: (i,j)\in E^{'}_P}$: Flow variables describing a path in $G_P^{'}$ from node $u$ to node $v$.
\item $\set{y^c : c\in\mathcal{C}_d}$: 1 if the cross-layer cut $c$ is converted into a non-cut by the lightpath rerouting, and 0 otherwise.
\item $\set{z^c :c\in\mathcal{NC}_d}$: 1 if the non-cut $c$ is converted into a cross-layer cut by the lightpath rerouting, and 0 otherwise.
\end{itemize}
\item Parameters:
\begin{itemize}
\item $\set{h^c_{st} : c\in\mathcal{C}_d, (s,t)\in E_L}$: 1 if logical nodes $s$ and $t$ are disconnected by the 2-way cut $c$, and 0 otherwise.
\item $\set{q^c_{st} : c\in\mathcal{NC}_{d}\cup\mathcal{NC}_{d-1}, (s,t)\in E_L}$: 1 if logical link $(s,t)$ is critical to the non-cut $c$, and 0 otherwise.
\item $\set{l^c_{ij} : \forall c\in\mathcal{C}_d\cup\mathcal{NC}_d\cup\mathcal{NC}_{d-1}, (i,j)\in E_P}$: 1 if physical link $(i,j)$ is in set $c$, and 0 otherwise.
\end{itemize}
\end{enumerate}
The lightpath rerouting can be formulated as follows:
\begin{eqnarray}
\mathsf{REROUTE:}\quad\mbox{Maximize\ } \sum_{c\in\mathcal{C}_d}y^c-\sum_{c\in\mathcal{NC}_d}z^c, \quad\mbox{subject to:}  \nonumber \\
\label{con:lp_choice}
g_{st}\leq (f_{us} + f_{tv}) / 2, \quad\forall (s,t)\in  E_L\\
\label{con:one_lp}
\sum_{(s,t)\in E_L}g_{st}=1 \\
\label{con:non_clc_forbidden}
l^c_{ij}f_{ij} + \sum_{(s,t)\in  E_L}q^c_{st}g_{st} \leq 1,\quad \forall c \in\mathcal{NC}_{d-1}, (i,j)\in E_P \\
\label{con:non_clc_count}
l^c_{ij}f_{ij} + \sum_{(s,t)\in  E_L}q^c_{st}g_{st}\leq z^c+1,\forall c \in\mathcal{NC}_d, (i,j)\in E_P \\
\label{con:clc_connect}
y^c\leq \sum_{(s,t)\in E_L}h_{st}^c g_{st}, \quad \forall c \in\mathcal{C}_d \\
\label{con:clc_avoid}
y^c\leq 1-l_{ij}^cf_{ij},\quad\forall (i,j)\in E_P, \forall c \in\mathcal{C}_d \\
\label{con:reroute_one_link}
\{(i,j):f_{ij}=1\}\mbox{\ forms an $(u,v)$-path in $G_P^{'}$\ }  \\
f_{ij}, g_{st}\in\set{0,1}, 0\leq y^{c}, z^{c}\leq1 \nonumber
\end{eqnarray}

{\kyb
The formulation can be interpreted as a path selection problem on the auxiliary graph $G_P^{'}$.
\conref{reroute_one_link}, which requires that the variables $f_{ij}$ describe a path
from $u$ to $v$, can be expressed by the standard flow conservation constraints.
As a result, in a feasible solution to the formulation, the variables $f_{ij}$ represent a path
$u\rightarrow s\leadsto t\rightarrow v$, which corresponds to the new physical route for the logical
 link $(s,t)$ after the rerouting.
}

\conref{lp_choice} ensures that $g_{st}$ can be set to 1 only if
$f_{ij}$ represents the path $u\rightarrow s\leadsto t\rightarrow v$, and~\conref{one_lp}
makes sure that the chosen $(s,t)$ is indeed a logical link in $E_L$.
Therefore, exactly one logical link $(s,t)$ can have $g_{st}=1$, and
a feasible solution to this ILP corresponds to a rerouting of
the logical link.

In~\conref{non_clc_forbidden}, the two terms correspond to the conditions in~\thmref{not_cut_size}. The constraint makes sure that at most one of the conditions is satisfied, thereby disallowing the non-cuts of size $d-1$ to be converted into a cross-layer cut.
Similarly,~\conref{non_clc_count} makes sure $z^c=1$ for any non-cut $c\in\mathcal{NC}_d$ that is converted into a cut by the rerouting.

Finally,~\conreftwo{clc_connect}{clc_avoid}
describe conditions 2) and 3) of~\thmref{cut_to_non_cut} respectively.
Therefore, $y^c$ can be 1 only if both conditions are satisfied.
Since $c$ also satisfies condition 1) by definition of $\mathcal{C}_d$,
this implies that cross-layer cut $c$ is converted into a non-cut when $y^c=1$.

Since the objective is to maximize $y^c$ and minimize $z^c$, in an optimal solution
$y^c=1$ if and only if cross-layer cut $c$ is converted into a non-cut, and
$z^c=1$ if and only if non-cut $c$ is converted into a cross-layer cut.
As a result, the objective function reflects the net reduction in the number of MCLCs.

Finally, note that the variables $y^c$ and $z^c$ will take on
binary values in an optimal solution even if they are not constrained to be integral.
 This observation helps to reduce the number of binary variables in the formulation.

\begin{comment}
There are $O(|E_P| + |E_L|)$ binary variables in the rerouting formulation, which is significantly
less than the $O(|E_P||E_L|)$ binary variables in the
Multi-Commodity Flow lightpath routing formulations in~\cite{lee:crosslayer}. As we will see in the
simulation section, this translates to faster running time.
\end{comment}

{\bf
 The ILP $\mathsf{REROUTE}$ approximates the lexicographical ordering minimization
by minimizing the number of MCLCs in the network. It can be
extended to consider cross-layer cuts of size larger than $d$, thus
achieving a better approximation. In this case, the set of cross layer cuts and non-cuts 
$\mathcal{C}_d$ and $\mathcal{NC}_d$ will be replaced by sets that
include the cut and non-cuts up to size $k > d$, denoted as 
$\mathcal{C}_{\leq k}$ and $\mathcal{NC}_{\leq k}$. The objective
function will be changed to 
\begin{align} 
Maximize\quad \sum_{c\in\mathcal{C}_{\leq k}}y^cw^c-\sum_{c\in\mathcal{NC}_{\leq k}}z^cw^c, 
\end{align}
where $w^c$ is a weight constant assigned to each cut $c$ so that
a smaller cut will have weight that dominates cuts of larger size. 
In particular, if $k$ is set to $|E_P|$, the extended ILP will return the
optimal solution that minimizes the lexicographical ordering.
However, such a formulation will contain an exponential number of
variables $y^c$ and $z^c$, and is generally not feasible for
practical use. Therefore, in the rest of the paper, we will focus on
the problem of minimizing the number of MCLCs, though the techniques
discussed in this paper are also be applicable to the more general setting.}

\subsubsection{Approximation Algorithm for Lightpath Rerouting}
\label{sec:algorithm-approx}
For larger networks, however, solving the rerouting ILP may still be infeasible. Therefore, 
in this section, we present an approximation algorithm for the rerouting problem that provides near-optimal solutions within a much shorter time.

 We focus on the following question: Given the lightpath routing, and a logical link $(s,t)$, what
is the best way to reroute $(s,t)$ assuming the routes for all other logical links are fixed? A solution
to this problem will allow us to solve the lightpath rerouting problem, since we can run the
algorithm once for each logical link, and return the best solution.

Similar to the previous section, let $\mathcal{C}_d, \mathcal{NC}_d$ and $\mathcal{NC}_{d-1}$ be the set
of cross-layer cuts of size $d$, non-cuts of size $d$ and non-cuts of size $d-1$ respectively.
Now suppose $Q$ is a new physical route for logical link $(s,t)$. Let $\mathcal{NC}_d^{st}$ and $\mathcal{NC}_{d-1}^{st}$ be the subsets of $\mathcal{NC}_d$ and
$\mathcal{NC}_{d-1}$ that satisfy condition (1) of~\thmref{non_cut_to_cut}. These two sets represent the non-cuts that can
potentially be converted into a cut by rerouting $(s,t)$. It immediately follows that any $(s,t)$
path that uses a physical link in $\cup_{T\in\mathcal{NC}_{d-1}^{st}}T$ will create a cross-layer
cut with size $d-1$, which should be forbidden for the new physical route. In addition,
for any physical link $(i,j)$, the set
$\mathcal{L}^{\mathcal{NC}}_{ij} = \set{T\in\mathcal{NC}_d^{st}:(i,j)\in T}$ represents the
non-cuts with size $d$ that will be converted into cross-layer cuts if the new route $Q$ contains the physical link $(i,j)$.

Similarly, let $\mathcal{C}_d^{st}\subseteq\mathcal{C}_d$ be the set of cross-layer cuts that do not satisfy conditions
(1) or (2) of~\thmref{cut_to_non_cut}. This represents the set that will continue to be cross-layer cuts
regardless of the new physical route $Q$ for $(s,t)$. In addition, for each $(i,j)\in E_P$, the
cross-layer cuts in the set
$\mathcal{L}^\mathcal{C}_{ij}=\set{S\in\mathcal{C}_d-\mathcal{C}_d^{st}:(i,j)\in S}$
will also continue to be cross-layer cuts if the new route $Q$ contains the physical link $(i,j)$, as they do not satisfy condition 3) of~\thmref{cut_to_non_cut}.

Now, for each physical link $(i,j)$, let $\mathcal{L}_{ij}=\mathcal{L}^{\mathcal{C}}_{ij} \cup \mathcal{L}^{\mathcal{NC}}_{ij}$.
If a physical link $(i,j)$ is used by the logical link $(s,t)$ in the new route $Q$, it will cause the set $\mathcal{L}_{ij}\cup\mathcal{C}_d^{st}$
to become
cross-layer cuts. Since every set of physical links in $\mathcal{C}_d^{st}$ will be cross-layer cuts regardless of the physical route taken by $(s,t)$, the lightpath rerouting problem for logical link $(s,t)$ can be
formulated as choosing the $(s,t)$-path $Q$
in $G^{'}_P=(V_P, E_P-\cup_{T\in\mathcal{NC}_{d-1}^{st}}T)$
that minimizes $|\mathcal{L}(Q)|=|\cup_{(i,j)\in Q}\mathcal{L}_{ij}|$.
Although this is an instance of the NP-Hard
{\it Minimum Color Path}~\cite{jasonjue:min_color_path_infocom} problem, a simple
$d$-approximation algorithm exists, as described below:

\begin{algorithm}
\caption{$\mathsf{REROUTE\_SP}(s,t)$}
\begin{algorithmic}[1]
\STATE Construct a weighted graph on $G^{'}_P=(V_P, E_P-\cup_{T\in\mathcal{NC}_{d-1}^{st}}T)$,
where each edge $(i,j)$ is assigned with weight $w(i,j)=|\mathcal{L}_{ij}|$.
\STATE Run Dijkstra's algorithm to find the shortest $(s,t)$-path in the weighted graph.
\label{alg:reroute_approx}
\end{algorithmic}
%\caption{A $d$-approximation algorithm for lightpath rerouting based on Dijkstra's shortest path algorithm.}
\end{algorithm}
\begin{theorem}
\label{thm:reroute_d_approx}
Let $Q^{*}$ be the optimal physical route for $(s,t)$ that results in the minimum number of MCLCs, and let $Q^{\mathsf{SP}}$ be the new route
for $(s,t)$ returned by $\mathsf{REROUTE\_SP}$.
For any $(s,t)$-path $Q$, let $N_d(Q)$ be the number of cross-layer cuts with size $d$ after
rerouting $(s,t)$ with $Q$, where $d$ is the size of the MCLC. Then
$N_d(Q^{\mathsf{SP}})\leq d\cdot N_d(Q^{*})$.
\end{theorem}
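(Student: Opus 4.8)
The plan is to read the theorem as the textbook shortest-path $d$-approximation for \emph{Minimum Color Path}, applied to the instance constructed just before Algorithm~$\mathsf{REROUTE\_SP}$, after first establishing an exact formula for $N_d(Q)$. Recall that $\mathcal{L}(Q)=\bigcup_{(i,j)\in Q}\mathcal{L}_{ij}$ is the set of ``colors'' met by the path $Q$. I would show that for every $(s,t)$-path $Q$ in the restricted graph $G_P'=(V_P, E_P-\bigcup_{T\in\mathcal{NC}_{d-1}^{st}}T)$,
\[
N_d(Q)=|\mathcal{C}_d^{st}|+|\mathcal{L}(Q)|,
\]
with $\mathcal{C}_d^{st}$ and $\mathcal{L}(Q)$ disjoint. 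Since $|\mathcal{C}_d^{st}|$ does not depend on $Q$, minimizing $N_d(Q)$ over $(s,t)$-paths in $G_P'$ is exactly minimizing $|\mathcal{L}(Q)|$, which is the quantity attained by $Q^{*}$ and approximated by $\mathsf{REROUTE\_SP}$ via Dijkstra on the weights $w(i,j)=|\mathcal{L}_{ij}|$.

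First I would prove that formula. After rerouting $(s,t)$ onto $Q$, a cross-layer cut of size $d$ is either (a) a member of $\mathcal{C}_d$ that stays a cut, or (b) a size-$d$ non-cut that turns into a cut. By~\thmref{cut_to_non_cut}, a member of $\mathcal{C}_d$ stays a cut precisely when it lies in $\mathcal{C}_d^{st}$ (violating its condition~1 or~2 regardless of $Q$) or it lies in $\mathcal{C}_d-\mathcal{C}_d^{st}$ and $Q$ meets it (its condition~3 fails), i.e.\ it lies in $\mathcal{L}^{\mathcal{C}}_{ij}$ for some $(i,j)\in Q$. By~\thmref{non_cut_to_cut}, a size-$d$ non-cut becomes a cut precisely when $(s,t)$ is critical to it (so it belongs to $\mathcal{NC}_d^{st}$) and $Q$ meets it, i.e.\ it lies in $\mathcal{L}^{\mathcal{NC}}_{ij}$ for some $(i,j)\in Q$. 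Disjointness holds because $\mathcal{L}^{\mathcal{C}}_{ij}\subseteq\mathcal{C}_d-\mathcal{C}_d^{st}$ while $\mathcal{L}^{\mathcal{NC}}_{ij}\subseteq\mathcal{NC}_d$ is disjoint from $\mathcal{C}_d\supseteq\mathcal{C}_d^{st}$. I would also record that deleting the edges $\bigcup_{T\in\mathcal{NC}_{d-1}^{st}}T$ is exactly what keeps the MCLC equal to $d$: by~\thmref{not_cut_size} only non-cuts of size $\ge d-1$ can ever be converted, and by~\thmref{non_cut_to_cut} a size-$(d-1)$ non-cut critical to $(s,t)$ is converted iff $Q$ uses one of its links; hence both $Q^{*}$ and $Q^{\mathsf{SP}}$ are valid $(s,t)$-paths of $G_P'$, so \thmref{cut_to_non_cut} and \thmref{non_cut_to_cut} apply to each.

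Next comes the ratio. Every color $c\in(\mathcal{C}_d-\mathcal{C}_d^{st})\cup\mathcal{NC}_d^{st}$ is a set of exactly $d$ physical links, so $c\in\mathcal{L}_{ij}$ for at most $d$ edges $(i,j)$. Hence for any path $Q$,
\[
\sum_{(i,j)\in Q}w(i,j)=\sum_{c}\bigl|\{(i,j)\in Q:\,c\in\mathcal{L}_{ij}\}\bigr|\le d\cdot|\mathcal{L}(Q)|,
\]
while trivially $|\mathcal{L}(Q)|\le\sum_{(i,j)\in Q}w(i,j)$. Applying the first inequality to $Q^{*}$, the second to $Q^{\mathsf{SP}}$, and the optimality of Dijkstra's path ($\sum_{(i,j)\in Q^{\mathsf{SP}}}w(i,j)\le\sum_{(i,j)\in Q^{*}}w(i,j)$) yields $|\mathcal{L}(Q^{\mathsf{SP}})|\le d\,|\mathcal{L}(Q^{*})|$. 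Substituting into the formula above and using $d\ge1$,
\[
N_d(Q^{\mathsf{SP}})=|\mathcal{C}_d^{st}|+|\mathcal{L}(Q^{\mathsf{SP}})|\le|\mathcal{C}_d^{st}|+d\,|\mathcal{L}(Q^{*})|\le d\bigl(|\mathcal{C}_d^{st}|+|\mathcal{L}(Q^{*})|\bigr)=d\cdot N_d(Q^{*}).
\]

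The main obstacle I anticipate is the rigor of the exact formula in the first step, not the ratio in the second (which is the standard Minimum Color Path argument once ``each color sits on at most $d$ edges'' is noted). The delicate points are: showing that no further cross-layer cut of size $d$ is created by the reroute --- in particular that a size-$d$ non-cut in whose logical residual graph $(s,t)$ is \emph{not} an edge cannot become a cut, since leaving $(s,t)$ out of that residual graph cannot disconnect it; verifying that the ``forbidden edges'' of $G_P'$ and the conversions counted in $\mathcal{L}(Q)$ are exactly complementary, so that $G_P'$ is the correct common domain for $Q^{*}$ and $Q^{\mathsf{SP}}$; and disposing of degenerate cases (when $G_P'$ disconnects $s$ from $t$, or when $N_d(Q^{*})=0$) in which the claimed inequality is vacuous or trivial.
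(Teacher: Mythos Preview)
Your proposal is correct and follows essentially the same approach as the paper: both establish $N_d(Q)=|\mathcal{C}_d^{st}|+|\mathcal{L}(Q)|$, use the fact that each size-$d$ set lies in $\mathcal{L}_{ij}$ for at most $d$ edges to get $w(Q)\le d\,|\mathcal{L}(Q)|$, combine this with the trivial bound $|\mathcal{L}(Q)|\le w(Q)$ and Dijkstra optimality, and finish with the same chain of inequalities. Your write-up is in fact more careful than the paper's, which simply asserts the formula for $N_d(Q)$ without the case analysis and disjointness argument you supply.
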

\begin{IEEEproof}
See Appendix \ref{app:rerouting-approx}.
\end{IEEEproof}

Therefore, the number of cross-layer cuts of size $d$ given by $\mathsf{REROUTE\_SP}$ is at most $d$ times the optimal reroute. Note that if the optimal new route for $(s,t)$ eliminates every MCLC of size $d$, i.e., $N_d(Q^{*})=0$, the approximation algorithm will find a new route that achieves that as well. We state this observation as
the following corollary.

\begin{corollary}
$\mathsf{REROUTE\_SP}(s,t)$ will return a new route for $(s,t)$ that increases the size of MCLC of the layered network, if such a new route exists.
\end{corollary}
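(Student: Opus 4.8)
The plan is to derive the corollary from Theorem~\ref{thm:reroute_d_approx} together with the structural characterizations in Theorems~\ref{thm:cut_to_non_cut}--\ref{thm:not_cut_size}. The first step is to restate ``increasing the MCLC by rerouting $(s,t)$'' in the language used by $\mathsf{REROUTE\_SP}$. Before the reroute the MCLC is $d$, so there is no cross-layer cut of size less than $d$; hence rerouting $(s,t)$ with a new route $Q$ raises the MCLC if and only if $Q$ creates no cross-layer cut of size $\le d$ and destroys every existing cross-layer cut of size $d$. By Theorem~\ref{thm:not_cut_size} no cut of size below $d-1$ can ever be created by a single reroute; by Theorem~\ref{thm:non_cut_to_cut} a cut of size $d-1$ is created precisely when $Q$ uses a physical link belonging to some non-cut in $\mathcal{NC}_{d-1}^{st}$, i.e. precisely when $Q$ leaves the pruned graph $G'_P=(V_P,\,E_P-\bigcup_{T\in\mathcal{NC}_{d-1}^{st}}T)$. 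Finally, by Theorems~\ref{thm:cut_to_non_cut}--\ref{thm:non_cut_to_cut} the set of size-$d$ cross-layer cuts after rerouting with $Q$ is exactly $\mathcal{C}_d^{st}\cup\mathcal{L}(Q)$, a disjoint union (one part stays a cut for every route, the other consists of sets that become cuts only because $Q$ meets them), so that $N_d(Q)=|\mathcal{C}_d^{st}|+|\mathcal{L}(Q)|$ in the notation of Theorem~\ref{thm:reroute_d_approx}. Putting these together: rerouting $(s,t)$ along $Q$ strictly increases the MCLC if and only if $Q$ is an $(s,t)$-path of $G'_P$ with $N_d(Q)=0$.

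The second step is then immediate. Suppose some route $Q^{*}$ that increases the MCLC exists. Then $Q^{*}$ is an $(s,t)$-path of $G'_P$, so $G'_P$ is $(s,t)$-connected and Dijkstra's algorithm in $\mathsf{REROUTE\_SP}(s,t)$ returns a route $Q^{\mathsf{SP}}$ lying in $G'_P$. Moreover $N_d(Q^{*})=0$, and since $Q^{*}$ is feasible, the route minimizing $N_d$ over $(s,t)$-paths of $G'_P$ also attains the value $0$; Theorem~\ref{thm:reroute_d_approx} then gives $N_d(Q^{\mathsf{SP}})\le d\cdot 0=0$. As $Q^{\mathsf{SP}}$ lies in $G'_P$ and has $N_d(Q^{\mathsf{SP}})=0$, the equivalence of the first step shows that rerouting $(s,t)$ along $Q^{\mathsf{SP}}$ increases the MCLC, which is exactly the claim. (Alternatively one can bypass Theorem~\ref{thm:reroute_d_approx}: $N_d(Q^{*})=0$ forces $\mathcal{C}_d^{st}=\emptyset$ and $\mathcal{L}_{ij}=\emptyset$ on every edge of $Q^{*}$, so $Q^{*}$ is a zero-weight $(s,t)$-path in the weighted graph built in Step~1 of the algorithm; since edge weights are non-negative, the shortest path $Q^{\mathsf{SP}}$ also has weight $0$, hence $\mathcal{L}(Q^{\mathsf{SP}})=\emptyset$ and $N_d(Q^{\mathsf{SP}})=|\mathcal{C}_d^{st}|=0$.)

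The only part requiring genuine care is the bookkeeping in the first step -- confirming that the exact combination of Theorems~\ref{thm:cut_to_non_cut}, \ref{thm:non_cut_to_cut} and \ref{thm:not_cut_size} yields the clean equivalence ``the MCLC strictly increases $\iff$ $Q$ is feasible in $G'_P$ and $N_d(Q)=0$'', and in particular that $\mathcal{C}_d^{st}$ and $\mathcal{L}(Q)$ partition the post-reroute set of size-$d$ cross-layer cuts so that $N_d(Q)=|\mathcal{C}_d^{st}|+|\mathcal{L}(Q)|$. Once that identity is in hand, the corollary drops out of Theorem~\ref{thm:reroute_d_approx} with no further computation.
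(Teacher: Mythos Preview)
Your proposal is correct and follows the same approach as the paper: the corollary is an immediate consequence of Theorem~\ref{thm:reroute_d_approx} applied with $N_d(Q^{*})=0$. The paper states this in a single sentence preceding the corollary, taking for granted the equivalence between ``some reroute of $(s,t)$ raises the MCLC'' and ``there is an $(s,t)$-path $Q$ in $G'_P$ with $N_d(Q)=0$''; you spell that equivalence out carefully via Theorems~\ref{thm:cut_to_non_cut}--\ref{thm:not_cut_size}, which is a welcome elaboration but not a different argument.
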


We can extend algorithm $\mathsf{REROUTE\_SP}$, which is based on the Dijkstra's shortest path algorithm, by using the $k$-shortest path
algorithm~\cite{yen:k_shortest} to successively
compute the next shortest path in $G^{'}_P$ and keep track of the path $Q$ with the minimum value
of $|\mathcal{L}(Q)|$.
The value $k$ reflects a tradeoff between running time and quality of the solution. As we
will see in~\secref{infocom11_simulation_2}, by picking a good value of $k$, we can obtain a lightpath
routing within a much shorter time than solving the ILP without sacrificing much in solution quality.

{\bf
{\it A Note on Complexity:}  The sets $\mathcal{C}_d$ and  $\mathcal{NC}_d$
can be constructed by enumerating all the $\choose{|E_P|}{d}$ subsets of
physical links and each of them can be classified as a cut or non-cut
in $O(|E|_L)$ time by running a
breath-first search on the logical topology.  Similarly, for each
subset $S \in \mathcal{C}_d \cup \mathcal{NC}_d$,  we can decide
whether each of its member $(i,j)$ is in $\mathcal{L}_{ij}$ and
$\mathcal{NC}_{d-1}^{st}$ by
breath-first search. Therefore, the time to compute all
$\mathcal{L}_{ij}$ is $O(\choose{|E_P|}{d}(|E_L|+d))=O(|E_P|^d|E_L|)$. Overall, the time complexity to construct the
graph $G^{'}_{P}$ is $O(|E_P|^d|E_L|)$. The 
$k$-shortest path algorithm on $G_P^{'}$ can be run 
in $O(k|V_P|(|E_P| + |V_P|\log |V_P|))$ time~\cite{yen:k_shortest}.  Therefore, the overall
time complexity of $\mathsf{REROUTE\_SP}(s,t)$ is 
$O(|E_P|^d|E_L| + k|V_P|(|E_P| + |V_P|\log |V_P|))$.
}

\subsection{Logical Topology Augmentation}
\label{sec:topology-augmentation}
The {\it Logical Topology Augmentation} problem involves finding the best way
to augment the logical topology with a single logical link, 
in order to maximize the reliability improvement.
Even though the augmentation problem has been extensively studied for single-layer networks,~\cite{AndrasAugmentation,CaiSunAugmentation,Hsu93graphaugmentation,112607,Jackson2000185}, this has not been studied before in the context of multi-layer networks. In addition to the
{\it placement} aspect of finding the end points for the new link as for the single-layer networks, there is also the {\it routing} aspect for the layered networks. This adds a new
dimension of complexity to the augmentation problem.

As it turns out, the insights from our study of the lightpath rerouting problem are largely applicable to the logical topology augmentation problem.
In the following sections,
we will first discuss the similarity between the augmentation problem and the lightpath rerouting problem,
and then develop a similar ILP formulation and approximation algorithm.

\subsubsection{Effects of a Single-Link Augmentation}
\label{sec:single-link-augment-characterize}
Similar to the rerouting problem, the new logical link chosen by the augmentation algorithm should
maximize the reduction in the number of MCLCs.
However, unlike rerouting, adding a new link
never converts a non-cut into a cross-layer cut. Therefore, in augmentation
 we only need to consider the effect of the new logical link on the existing cross-layer cuts.
 
Suppose that an initial lightpath routing is given for the 
physical topology $G_P=(V_P, E_P)$ and the logical topology $G_L=(V_L,E_L)$. 
Let $d$ be the size of the MCLC under the initial routing. 
Let $G_L^S$ be the
logical residual graph for any cross-layer cut $S$, that is, the logical subgraph in which the logical links 
 do not use any physical links in $S$. The following theorem characterizes the effect of a single-link augmentation:
The proof can be found in~\cite{kayi:thesis}.

\begin{theorem}
\label{thm:augment_cut_to_non_cut}
Let $S$ be a cross-layer cut for a lightpath routing. Augmenting the network with a new logical link
 $(s,t)$ over physical route $P$ converts a cross-layer cut $S$ into a non-cut if and only
if:

\begin{enumerate}
\item $S$ is a 2-way cross-layer cut.
\item $s$ and $t$ are disconnected in the residual graph for $S$.
\item $P$ does not use any physical links in $S$.
\end{enumerate}

\begin{proof}
The proof is the same as~\thmref{cut_to_non_cut}. The new logical link will make  the residual graph connected if and only if the above conditions are true.
\end{proof}
\end{theorem}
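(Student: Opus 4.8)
The plan is to recycle the argument for~\thmref{cut_to_non_cut} almost verbatim, using the single observation that augmenting the network with one new logical link perturbs the logical residual graph of any \emph{fixed} cross-layer cut by at most one edge. So fix a cross-layer cut $S$, let $G_L^S$ be its (disconnected) logical residual graph under the original routing, and let $H$ be its logical residual graph under the augmented routing. Since none of the existing logical links have their physical routes changed, a logical link lies in $H$ exactly when it lay in $G_L^S$, with the sole possible addition of the new link $(s,t)$; and $(s,t)\in H$ if and only if its physical route $P$ uses no link of $S$, which is precisely condition~3. Hence the whole question collapses to: when does adding (or not adding) the edge $(s,t)$ turn $G_L^S$ from disconnected into connected?

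For sufficiency I would assume all three conditions. Condition~3 puts $(s,t)$ into $H$, so $H$ is $G_L^S$ with the single edge $(s,t)$ added. By condition~1, $G_L^S$ has exactly two connected components, and by condition~2 the endpoints $s$ and $t$ lie in different ones; adding an edge across the two components of a two-component graph produces a connected graph, so $H$ is connected and $S$ is a non-cut.

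For necessity I would assume $S$ has become a non-cut, i.e.\ $H$ is connected. If condition~3 failed then $(s,t)\notin H$, so $H=G_L^S$ is still disconnected, a contradiction; hence condition~3 holds and $H=G_L^S\cup\{(s,t)\}$. Adding a single edge decreases the number of connected components by at most one, so for $H$ to be connected while $G_L^S$ has at least two components, $G_L^S$ must have had exactly two (condition~1) and the edge $(s,t)$ must join them, i.e.\ $s$ and $t$ are in different components of $G_L^S$ (condition~2). This establishes the equivalence.

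There is no real obstacle here; the only point worth flagging is the contrast with the rerouting case, where~\thmref{non_cut_to_cut} was needed because moving a lightpath off its old path also \emph{removes} edges from residual graphs. Augmentation only ever adds edges, so it can never disconnect a connected residual graph, which is why no ``non-cut to cut'' companion theorem is required and why the subsequent augmentation ILP need only account for the existing cross-layer cuts of size $d$.
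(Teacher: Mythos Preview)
Your proposal is correct and follows exactly the approach the paper intends: the paper's own proof of this theorem is simply the one-line remark that the argument is identical to that of~\thmref{cut_to_non_cut}, and you have spelled out that argument in full, including the necessity direction which the paper leaves to the reader. Your closing remark about why no companion ``non-cut to cut'' theorem is needed is also precisely the observation the paper makes immediately before stating the augmentation ILP.
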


Note that the characterizations for augmentation (\thmref{augment_cut_to_non_cut}) and
 rerouting (\thmreftwo{cut_to_non_cut}{non_cut_to_cut}) differ only in that the conditions
 in~\thmref{non_cut_to_cut} are no longer applicable to augmentation, because augmentation
 never converts any non-cut into a cross-layer cut. Therefore, we can revise the ILP
 $\mathsf{REROUTE}$ accordingly to formulate an ILP for the augmentation problem. 

\begin{comment}
Note that the conditions in~\thmref{augment_cut_to_non_cut} are the same as~\thmref{cut_to_non_cut}. 
Therefore, the algorithms presented in~\secreftwo{rerouting_ilp}{algorithm-approx} are mostly applicable here.
\end{comment}

\subsubsection{ILP for Logical Topology Augmentation}
\label{sec:single-link-augment-algorithm}
We will revise the ILP $\mathsf{REROUTE}$ presented in~\secref{lightpath-rerouting} to develop the ILP for the augmentation problem. In  $\mathsf{REROUTE}$, the variables $\set{z^c=1 :c\in\mathcal{NC}_d}$ correspond to the set of non-cuts that will be converted into cuts by the rerouting, and \conreftwo{non_clc_forbidden}{non_clc_count} describe the conditions for such conversion. As previously discussed, such conversion is not applicable in augmentation and therefore these variables and constraints can be removed from the ILP. In addition, unlike rerouting where we choose from the set of existing logical links, in augmentation we can pick any two logical nodes for the new logical link. Therefore, we will replace the variable set $\set{g_{st}: (s,t)\in E_L}$ in $\mathsf{REROUTE}$ by 
$\set{g_{st}: (s,t)\in V_L\times V_L}$ and remove~\conref{one_lp}. This gives us the following ILP for augmentation:
\begin{align}
\mathsf{AUGMENT:}\quad\mbox{Maximize\ }\sum_{c\in\mathcal{C}_d}y^c, \quad\mbox{subject to:} \nonumber\\
\label{con:augment_one_lightpath}
g_{st}\leq (f_{us} + f_{tv})/2, \quad \forall (s,t)\in V_L\times V_L \\
\label{con:augment_clc_avoid}
y^c\leq \sum_{(s,t)\in  V_L\times V_L}h_{st}^c g_{st}, \quad \forall c \in\mathcal{C}_d \\
\label{con:augment_clc_connect}
y^c\leq 1-l_{ij}^cf_{ij},\quad\forall (i,j)\in E_P, \forall c \in\mathcal{C}_d \\
\label{con:augment_one_link}
\{(i,j):f_{ij}=1\}\mbox{\ forms an $(u,v)$-path in $G^{'}_P$ \ } \\
f_{ij},g_{st}\in\set{0,1}, 0\leq y^c\leq 1 \nonumber
\end{align}

Similar to the interpretation of $\mathsf{REROUTE}$, in a feasible solution to $\mathsf{AUGMENT}$, the variables $f_{ij}$ represent a path 
$u\rightarrow s\leadsto t\rightarrow v$, as described by~\conref{augment_one_link}. This corresponds to the new logical link to be added, along with its physical route.~\conref{augment_one_lightpath} ensures that $g_{st}=1$ if and only if $(s,t)$ is the new logical link selected.~\conreftwo{augment_clc_avoid}{augment_clc_connect}
describe the conditions in~\thmref{augment_cut_to_non_cut}. In particular, the variable
$y^c$ describes whether the cross-layer cut $c$ is converted into non-cut by the augmentation. Therefore, the ILP maximizes the number of such conversions, which translates to maximizing the improvement in reliability.

\subsubsection{Approximation Algorithm For Logical Topology Augmentation}
\label{sec:single-link-augment-approx}
One can also design an approximation algorithm similar to $\mathsf{REROUTE\_SP}$ introduced in~\secref{algorithm-approx} for
 the logical topology augmentation problem. We will again focus on the following question: Given a
layered network, and a new logical link $(s, t)$, find the physical route for $(s,t)$ such that the resulting number of cross-layer cuts of size $d$ is minimized. We can then apply the algorithm for this problem for every possible pair of logical nodes $s$ and $t$, to find out the new logical link that would result
in the maximum reliability improvement.

Let $d$ be the size of the MCLC of the layered network and $\mathcal{C}^{st}_d$
 be the set of 2-way cross-layer cuts of size $d$ that separate the logical 
nodes $s$ and $t$. Then by~\thmref{augment_cut_to_non_cut}, 
the set $\mathcal{L}_{ij}=\set{S\in\mathcal{C}_d^{st}:(i,j)\in S}$ represents the 
sets in $\mathcal{C}^{st}_{d}$ that will remain to be cross-layer cuts 
if the physical link $(i,j)$ is used by the $(s,t)$ path $Q$. 
We can then develop an approximation algorithm for the augmentation problem
similar to $\mathsf{REROUTE\_SP}$:

\begin{algorithm}
\caption{$\mathsf{AUGMENT\_SP}(s,t)$}
\begin{algorithmic}[1]
\STATE Construct a weighted graph on $G_P=(V_P, E_P)$,
where each edge $(i,j)$ is assigned with weight $w(i,j)=|\mathcal{L}_{ij}|$.
\STATE Run Dijkstra's algorithm to find the shortest $(s,t)$-path in the weighted graph.
\label{alg:augment_approx}
\end{algorithmic}
%\caption{A $d$-approximation algorithm for lightpath rerouting based on Dijkstra's shortest path algorithm}
\end{algorithm}

Since each cross-layer cut $S$ in $\mathcal{C}^{st}_{d}$ has size $d$, there are
exactly $d$ physical links $(i,j)$ such that $S\in\mathcal{L}_{ij}$. As a 
result, $\mathsf{AUGMENT\_SP}$ is a $d$-approximation algorithm, with the
same proof as~\thmref{reroute_d_approx}.

\section{Simulation Results}
\label{sec:infocom11_simulation_2}
The single-link rerouting and augmentation methods developed in the previous section can be used as a building block
for improving the reliability of an existing layered network. For example, by iteratively rerouting the logical
links for a given lightpath routing until no further improvement is possible, we can obtain to a locally optimal solution.
In this section, we study the effectiveness of such approach via extensive simulation studies.

\subsection{Iterative Rerouting for Survivable Lightpath Routing}
\label{sec:sim_reroute}
We first apply iterative rerouting to solve the Survivable Lightpath Routing problem, whose objective is to obtain
a lightpath routing that maximizes the reliability for given physical and logical topologies. 
The Survivably Lightpath Routing has been previously studied in the literature, where the best known
 algorithmn~\cite{lee:crosslayer} is based
on an ILP fomulation that maximizes the MCLC of the network. In contrast, the objective for lightpath rerouting
algorithm is based on the lexicographical ordering of the cut vector, which captures more precisely the survivability
characteristics of the network. As we will learn from the result, using this improved objective significantly improves the
quality of the solution.

In this study, we use the NSFNET~(\figref{nsfnet}), extended with new links to raise its connectivity to 4,
 as the physical topology. For logical topologies, we generate 350 random graphs with connectivity 4,
 ranging from 6 to 12 nodes; and 13 to 38 links. For each algorithm under evaluation, we compute a
lightpath routing for each pair of physcial and logical topologies. The average reliability among the 350
lightpath routings will be presented as the performance metric.

\begin{figure}[h]
\centering
\includegraphics[width=1.6in,height=0.8in]{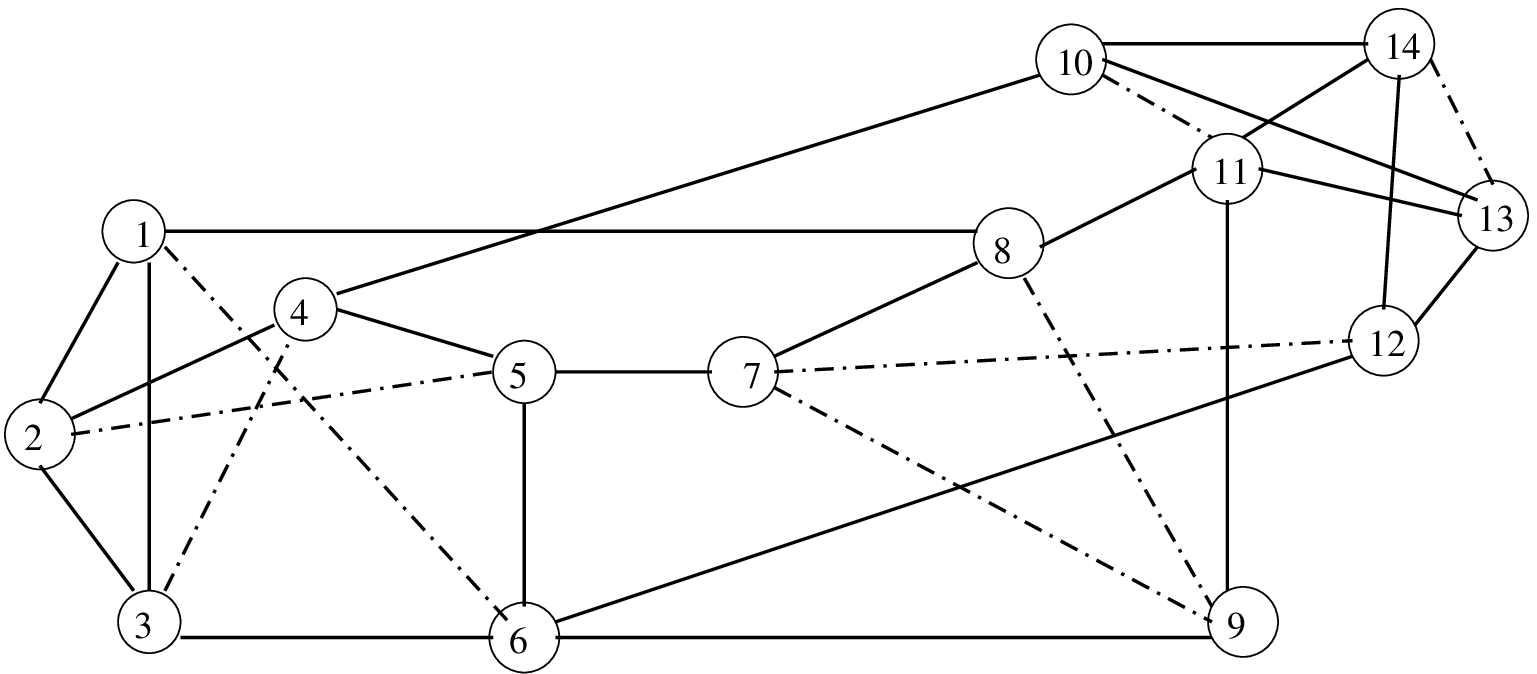}\vspace{-0.2cm}
\caption{The extended NSFNET. The dashed lines are the new links.}
\label{fig:nsfnet}
\end{figure}

%Given the physical and logical topologies, the iterative rerouting algorithm can be described as follows: 
%\begin{enumerate}
%\item Generate an arbitrary initial lightpath routing.
%\item Reroute a logical link using ILP/approximation algorithm introduced in~\secref{lightpath-rerouting}.
%\item Repeat Step 2 until no further improvement can be made by rerouting a single lightpath.
%\end{enumerate}

We will first study the effect of the different initial lightpath routings on the reliability of the final solution. 
Next, we'll compare the performance of the rerouting algorithm variants based on ILP and the approximation algorithm. 
Throughout these studies, we also compare the solutions generated by these algorithms with the solution
generated by the best known lightpath routing algorithm in the literature, $\mathsf{MCF_{LF}}$~\cite{lee:crosslayer}
(denoted as $\mathsf{MCF}$ in this paper for simplicity), as well as the simple shortest path algorithm $\mathsf{SP}$.
 
\subsubsection{Performance of ILP-Based Rerouting}
\label{sec:reroute_vs_anneal}
{\kyb
We first evaluate the reliability performance of the
ILP-based lightpath rerouting approach
 introduced in~\secref{rerouting_ilp}, with initial lightpath routings
generated by two different algorithms:

\begin{itemize}
\item $\mathsf{RR_{SP}}$: The initial lightpath routing is generated by
the Shortest-Path algorithm $\mathsf{SP}$, which routes each lightpath with
 minimum number of physical hops.
\item $\mathsf{RR_{MCF}}$: The initial lightpath routing is generated by
the algorithm $\mathsf{MCF}$ introduced in~\cite{lee:crosslayer}.
\end{itemize}

Compared with $\mathsf{SP}$,
$\mathsf{MCF}$ provides initial lightpath routings with much
higher reliability at the expense of longer running time.
Given the initial lightpath routing, the rerouting algorithm
repeatedly solves the rerouting
ILP in~\secref{rerouting_ilp} to improve the reliability, until it reaches a local optimum.

\figref{alg_reliability} illustrates the average unreliability
 of the different algorithms. Even with initial lightpath routings
 generated by the best known lightpath routing algorithm $\mathsf{MCF}$,
the rerouting algorithm $\mathsf{RR_{MCF}}$ is able to further reduce the
unreliability of the lightpath routings. In fact, while only 50\% of the
lightpath routings generated by $\mathsf{MCF}$ has MCLC 4,
which is the connectivity of the logical topologies and is therefore
the highest MCLC value achievable, the rerouting
algorithm $\mathsf{RR_{MCF}}$ is able to archieve this maximum MCLC value
98\% of the time. This means that the lightpath rerouting approach
is able to produce lightpath routings that are much more reliable than existing
algorithms.

In addition, even though the initial lightpath routings generated by
$\mathsf{SP}$ and $\mathsf{MCF}$ differ significantly in reliability,
the iterative rerouting eliminates most of the difference. This suggests that
the rerouting approach is robust to initial routings, and
we can use a simple algorithm, such as Shortest-Path, to generate the
initial lightpath routing and rely on iterative rerouting to obtain a reliable
 solution.

\begin{figure}[h]
\centering
\includegraphics[width=3.5in,height=2in]{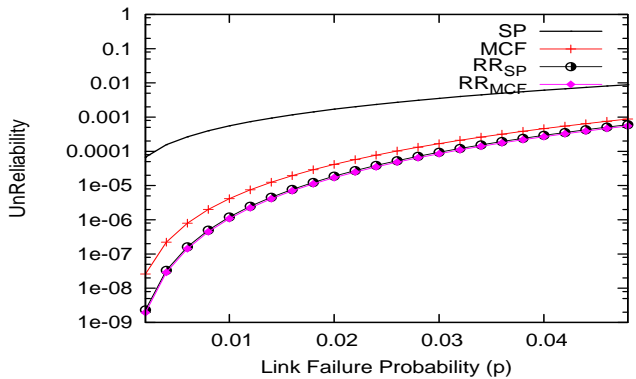}\vspace{-0.2cm}
\caption{UnReliability performance by different algorithms.}
\label{fig:alg_reliability}
\end{figure}

{\bf
~\tabref{average_path_length} shows the average physical
path length by the lightpaths generated by the different
algorithms. The higher reliability of the rerouting algorithms comes
with a cost of longer paths, as the algorithms often select the longer
physical routes in order to achieve higher reliability.
This reflects the tradeoff between the reliability and
bandwidth resource used by the lightpath routings.
}

\begin{table}[h]
\begin{center}
%\vspace{-0.1cm}
\begin{tabular}{|c||c|c|c|c|}  \hline
Number of& \multicolumn{4}{c||}{Average Path Length}\\  \cline{2-5}
Logical Nodes& $\mathsf{SP}$&$\mathsf{MCF}$ & $\mathsf{RR_{SP}}$ & $\mathsf{RR_{MCF}}$ \\ \hline
6 & 1.93& 2.32 & 2.60 & 2.57\\ \hline
7 & 1.90&  2.28 & 2.58 & 2.57\\ \hline
8 & 1.89& 2.30 & 2.68 & 2.62\\ \hline
9  & 1.88& 2.29 & 2.57 & 2.60\\ \hline
10 & 1.91& 2.34 & 2.68 & 2.64\\ \hline
11 & 1.90& 2.32 & 2.60 & 2.64\\ \hline
12 & 1.86& 2.22 & 2.51 & 2.51\\ \hline
\end{tabular}
\end{center}
\caption{Average path length of the Shortest-Path algorithm $\mathsf{SP}$, lightpath routing algorithm $\mathsf{MCF}$, as well as the rerouting algorithms using Shortest-Path ($\mathsf{RR_{SP}}$) and $\mathsf{MCF}$ ($\mathsf{RR_{MCF}}$) to generate the initial lightpath routings.}
\label{tab:average_path_length}\vspace{-0.3cm}
\end{table}

~\tabref{running_time_initial} shows the average running times of the rerouting
 algorithms, (not including the time to generate the initial routings), as well as the average number of rerouting iterations.
Compared with the lightpath routing algorithm $\mathsf{MCF}$, the
rerouting algorithms are able to terminate faster with a better solution.
This is because this iterative rerouting approach effectively decomposes the joint lightpath
 routing problem into simpler single-link rerouting steps, where the ILP in each step is
much smaller than the lightpath routing formulation in $\mathsf{MCF}$.

Between the two rerouting variants, $\mathsf{RR_{SP}}$ requires
more iterations than $\mathsf{RR_{MCF}}$ to reach the local optimum, because
it starts with a much less reliable initial lightpath routing.
However, the difference in total running time is less
 significant. This is because the size of the rerouting ILP formulation
is larger when the MCLC of the lightpath routing is large,
and thus takes longer to solve. In most cases, $\mathsf{RR_{SP}}$ starts with
an initial lightpath routings with a lower MCLC value. As a result,
 most of the additional
rerouting steps consist of solving the smaller ILPs to bring up the MCLC value.  Therefore, these additional steps take much shorter time.

\begin{table}[h]
\begin{center}
%\vspace{-0.1cm}
\begin{tabular}{|c||c|c|c||c|c|}  \hline
Number of& \multicolumn{3}{c||}{Running Time (seconds)} & \multicolumn{2}{c|}{Number of Iterations} \\  \cline{2-6}
Logical Nodes& $\mathsf{MCF}$ & $\mathsf{RR_{SP}}$ & $\mathsf{RR_{MCF}}$ & $\mathsf{RR_{SP}}$ &$\mathsf{RR_{MCF}}$  \\ \hline
6 & 1652 &265 &164 &7.0 &3.0 \\ \hline
7 &1655 &314 &257  &8.9 & 4.2 \\ \hline
8 &1732 &500 &365  &10.3 & 5.0 \\ \hline
9  & 1838 &745 &525  &11.6 & 6.2 \\ \hline
10 & 2032 &1238 &824  &14.1 & 7.3 \\ \hline
11 & 2219 &1389 &1280   &14.0 & 8.0 \\ \hline
12 & 2716 &1268 &1104   &14.1 & 8.2 \\ \hline
\end{tabular}
\end{center}
\caption{Running times of the lightpath routing algorithm $\mathsf{MCF}$, as well as the rerouting algorithms using Shortest-Path ($\mathsf{RR_{SP}}$) and $\mathsf{MCF}$ ($\mathsf{RR_{MCF}}$) to generate the initial lightpath routings; and the number of iterations of the rerouting algorithms.}
\label{tab:running_time_initial}\vspace{-0.3cm}
\end{table}

\subsubsection{Performance of Approximation Algorithm}
\label{sec:reroute_approx}
Next, we compare the performance of the
approximation algorithm introduced
in~\secref{algorithm-approx} with the ILP counterpart. As discussed, the
approximation algorithm is based on the $k$-shortest-path algorithm,
where the parameter $k$ reflects a tradeoff
between running time and reliability performance. We evaluate this
algorithm, $\mathsf{Shortest_k}$, with $k=$1, 10 and 100.

We use the lightpath routings generated by the Shortest Path algorithm as the initial routings.
\figref{rerouting_reliability_approx} shows the average unreliability of the lightpath routings produced by the
algorithms. While $\mathsf{Shortest_1}$ brings in the majority of the improvement, increasing the
value of $k$ is able to further reduce the unreliability. In particular,
when $k=100$, the approximation algorithm performs almost as well as solving
the rerouting ILP.

\begin{figure}[h]
\centering
\includegraphics[width=3.5in,height=2in]{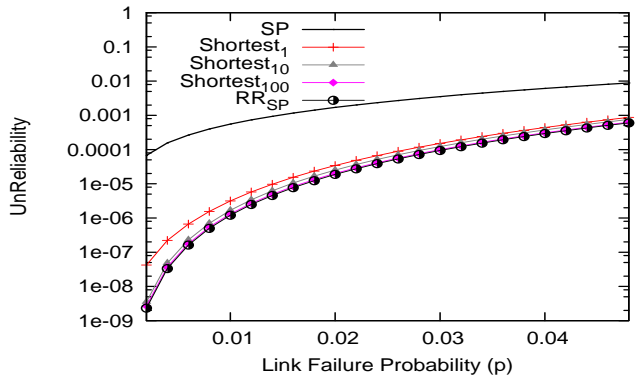}\vspace{-0.2cm}
\caption{Lightpath rerouting: performance of approximation algorithm.}
\label{fig:rerouting_reliability_approx}\vspace{-0.2cm}
\end{figure}

{\kyb
~\tabref{running_time_approx} compares the running time of the algorithms.
As shown in the table, the approximation algorithms are significantly
faster than the ILP-based algorithm.
This suggests that the
approximation algorithm is promising rerouting alternative to the ILP for
improving the reliability of large networks, without the 
need to solve complex mathematical programs.
}

\begin{table}[h]
\centering
\begin{tabular}{|c||c|c|c|c|}  \hline
Number of&  \multicolumn{4}{c|}{Running Time (seconds)} \\  \cline{2-5}
Logical Nodes& $\mathsf{RR_{SP}}$ &$\mathsf{Shortest_1}$ & $\mathsf{Shortest_{10}}$ & $\mathsf{Shortest_{100}}$ \\ \hline
6 &265 &12 &14 &24  \\ \hline
7 &314 &20 &26 &43  \\ \hline
8 &500 &32 &43 &79  \\ \hline
9 &745 &45 &55 &123  \\ \hline
10 &1238 &68 &91 &199 \\ \hline
11 &1389 &83 &104 &254  \\ \hline
12 &1268 &113 &135 &344  \\ \hline
\end{tabular}
\caption{Running times of the rerouting algorithms based on ILP ($\mathsf{RR_{SP}}$) and $k$-shortest paths.}
\label{tab:running_time_approx}\vspace{-0.5cm}
\end{table}%\vspace{-0.3cm}

\subsection{Effects of Logical Topology Augmentation}
\label{sec:augment_sim}
Next, we study the effect of augmenting the logical topology on the network reliability. 
We study a 10-node and a 14-node logical ring on the augmented NFSNET, 
as shown in~\figref{augment_ring_nsfnet}, and incrementally augment the rings to
 study the reliability improvement from the addition of new logical links.
\begin{figure}[h]
\centering
\subfigure[10 Node Logical Ring]{
\includegraphics[height=0.15\textwidth]{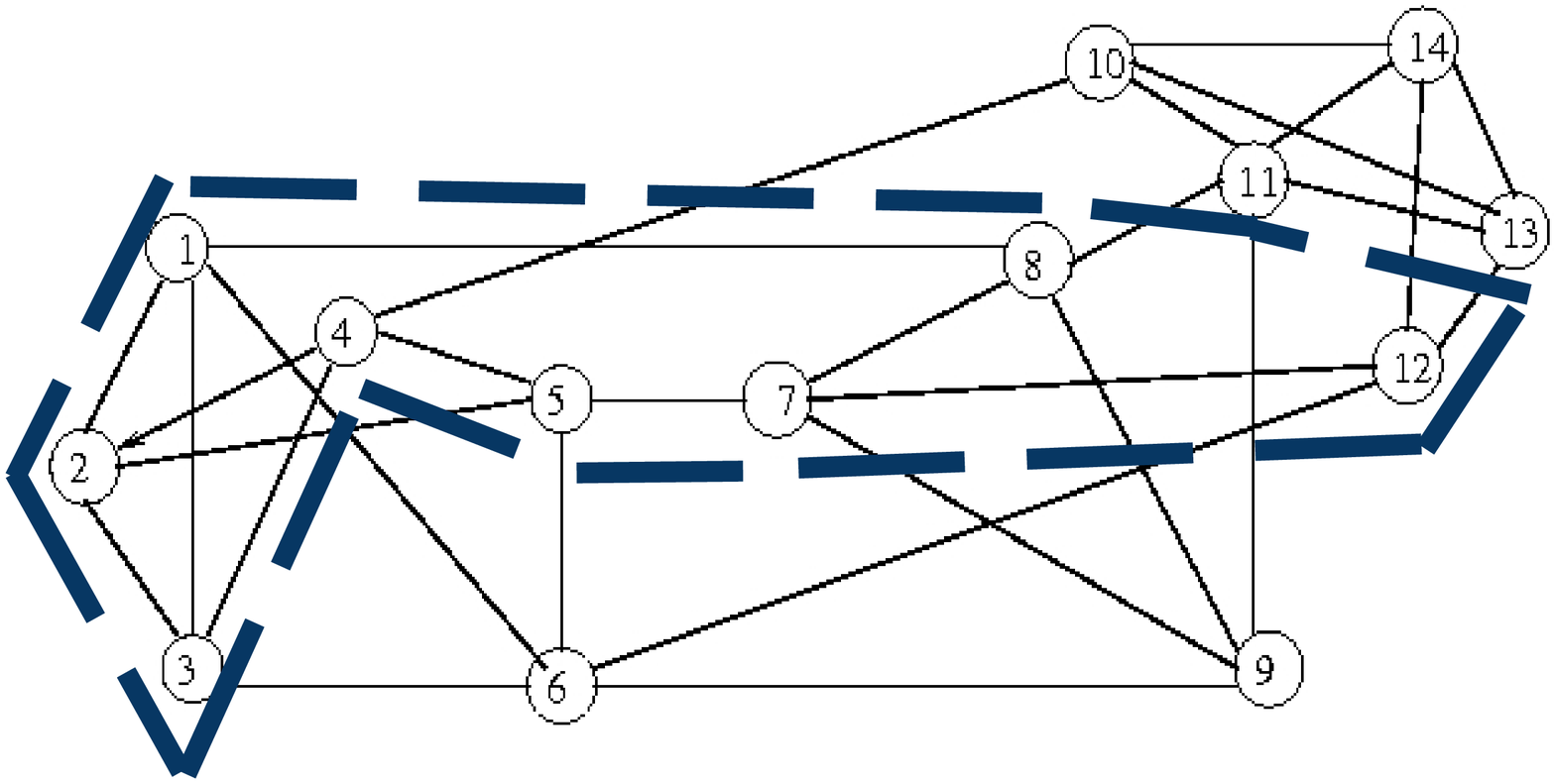}}
\subfigure[14 Node Logical Ring]{
\includegraphics[height=0.15\textwidth]{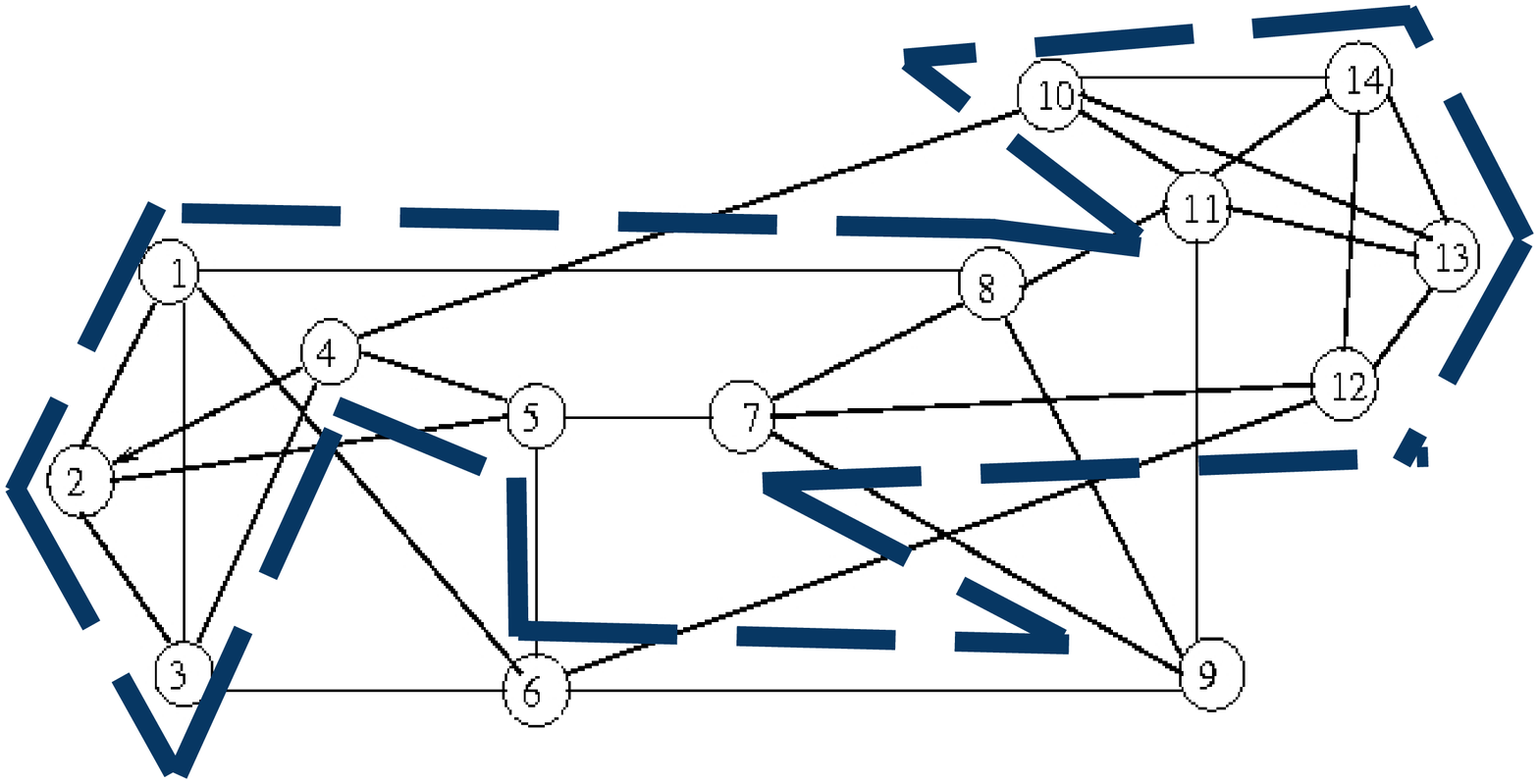}}
\caption{Logical rings on extended NSFNET.}
\label{fig:augment_ring_nsfnet}
\end{figure}

The cross-layer reliability of the networks
after each augmentation step is shown in~\figref{augment_ring}.
With link failure probability $p=0.01$, the unreliability 
 declines as we add more logical links to the rings. The key observation from these figures
is that the improvement in reliability is most prominent when the augmentation increases the MCLC
of the network. This suggests that networks with a small number of MCLCs have a greater
potential to significantly improve the reliability by augmentation, as it is more likely
to improve their MCLC values by a small number of new logical links.

In the case where the additional link does not cause an MCLC increase, 
the marginal reliability improvement decreases with the current MCLC value. 
This means that augmentation is more effective when the MCLC value is lower.

\begin{figure}[h]
\centering
\subfigure[10 Node Logical Ring]{
\includegraphics[width=0.4\textwidth]{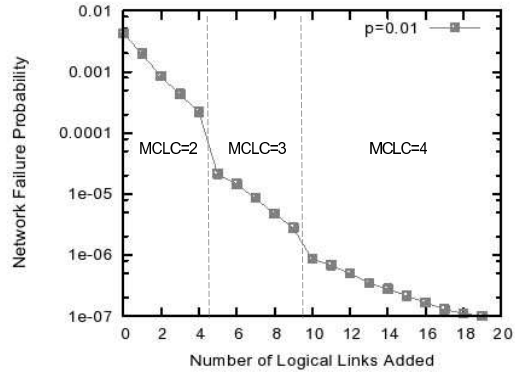}}
\subfigure[14 Node Logical Ring]{
\includegraphics[width=0.4\textwidth]{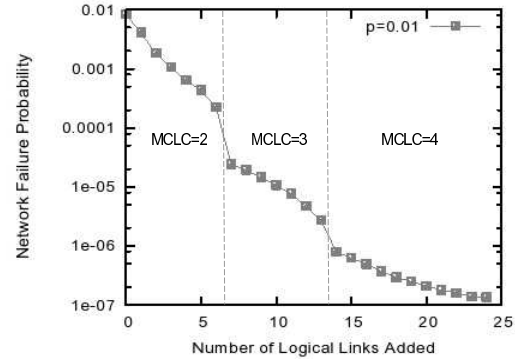}}\vspace{-0.2cm}
\caption{Impact on reliability by augmenting logical rings.}
\label{fig:augment_ring}\vspace{-0.3cm}
\end{figure}

\subsection{Case Study: A Real-World IP-over-WDM Topology}
\label{sec:real-world}
Finally, we evaluate the performance
of the rerouting and augmentation algorithms on a large layered network based on a real-world IP-over-WDM
network. The physical and logical topologies, shown in~\figref{qwest_ip_over_wdm},
are constructed based on the network maps available from Qwest 
Communications~\cite{qwest}. Both the physical and logical topologies are extended with new links so that 
the graphs have connectivity 4. 
The physical topology has 39 nodes and 72 links, and the logical topology has 
20 nodes and 101 links.

\begin{figure}
\centering

\subfigure[WDM (physical) network.]{
\includegraphics[angle=0,width=0.3\textwidth]{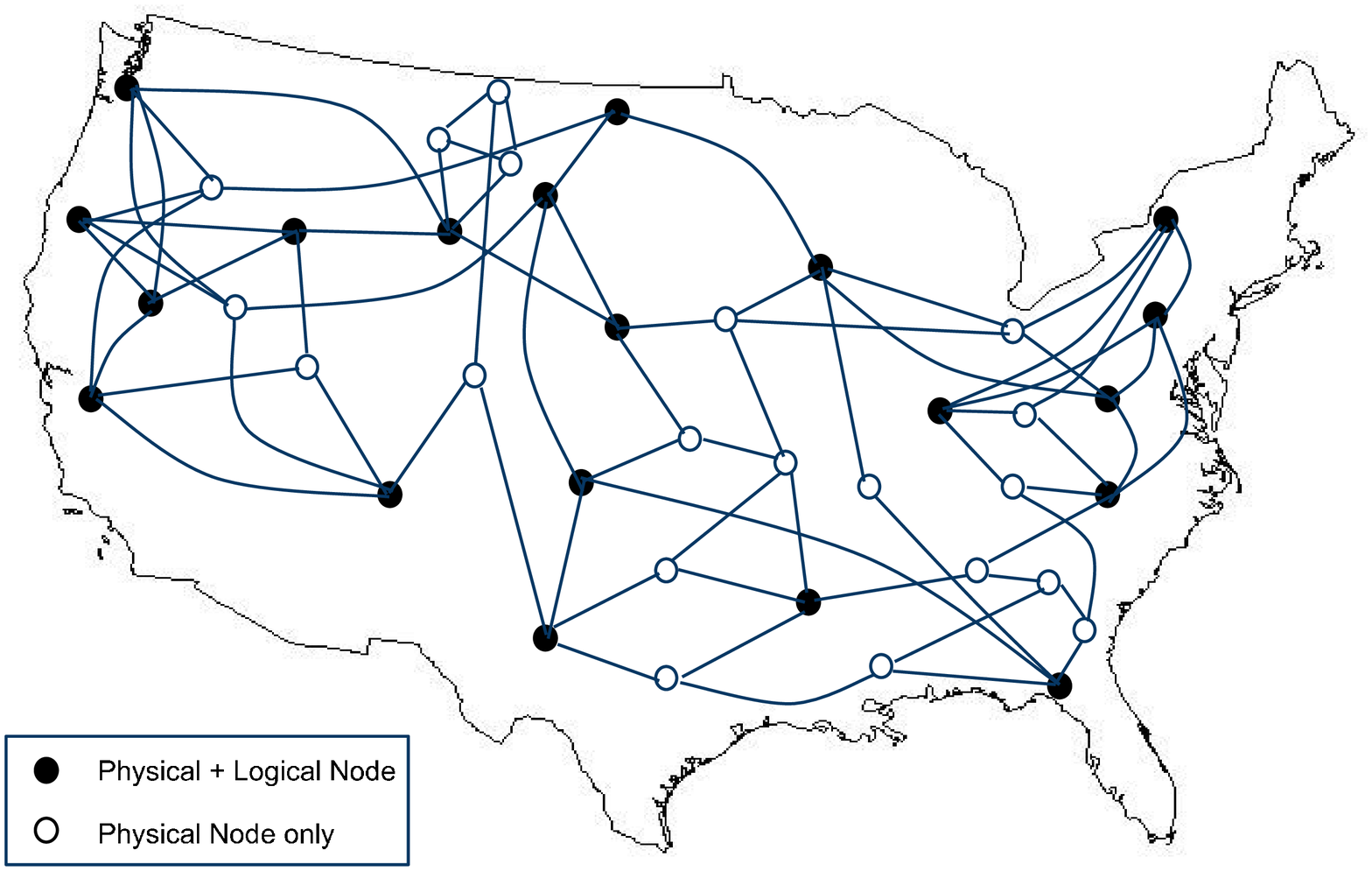}}
\subfigure[IP/MPLS (logical) network. The numbers indicate the number of 
parallel logical links between the logical nodes.]{
\includegraphics[angle=0,width=0.3\textwidth]{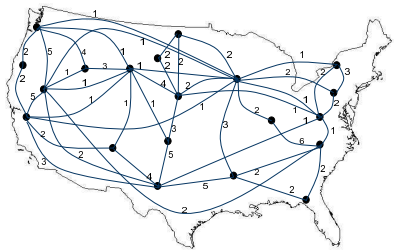}}\vspace{-0.2cm}
\caption{Physical and logical topologies.}
\label{fig:qwest_ip_over_wdm}\vspace{-0.3cm}
\end{figure}

The study on larger networks allows us to reevaluate the performance of
the lightpath algorithms, both in terms of scalability and
solution quality. In this study, we run the following lightpath
routing algorithms and compare their solutions:

\begin{enumerate}
\item $\mathsf{MCF}$: The multi-commodity flow algorithm introduced in~\cite{lee:crosslayer}. As in~\secref{sim_reroute},
the algorithm is evaluated as the performance baseline.
\item $\mathsf{REROUTE}$: The iterative lightpath rerouting algorithm, based on
the $k$-shortest path algorithm presented in~\secref{algorithm-approx}, where $k$ is
set to 5000 in our experiment.
\item $\mathsf{AUGMENT_n}$: The logical topology augmentation algorithm, based on
the $k$-shortest path algorithm presented in~\secref{single-link-augment-approx}, where
$k$ is set to 5000 in our experiment. The augmentation algorithm is run successively to add $n$ new
edges on the lightpath routing given by $\mathsf{REROUTE}$, where $n=1,\ldots,9$.
\end{enumerate}

The MCLC values and the number of MCLCs of the lightpath routings generated by 
each algorithm are shown in~\tabref{qwest_mclc}. These numbers are compared against
the lower bound, which is computed by counting the number of minimum sized 
physical fiber sets whose removal will {\it physically} disconnect
some logical nodes. These sets of physical links are cross-layer cuts regardless of
the lightpath routing, and therefore will provide a lower bound on the number of
MCLCs.

It was observed in~\cite{lee:crosslayer} that the survivability performance of the 
multi-commodity flow formulation $\mathsf{MCF}$ declines as the network 
size increases. In this case, the solution produced by the algorithm only has MCLC value 2.
On the other hand, the
rerouting algorithm $\mathsf{REROUTE}$ continues to produce a lightpath routing with the maximum possible MCLC
value 4.
Augmenting the logical topology can further improve the 
reliability of the layered network by reducing the number of MCLCs, 
though the incremental effect declines as more
logical links are added to the network. The number of MCLCs hits the
lower bound when the logical topology is augmented with 9 additional logical 
links.

~\figref{qwest_rel} compares the algorithms in terms of the cross-layer reliability
in the low failure probability regime. 
As suggested by~\tabref{qwest_mclc}, the iterative algorithms  
achieve significantly higher reliability than the existing algorithm $\mathsf{MCF}$ (by about 3 orders of magnitude).
In particular, the majority of the improvement is achieved by the lightpath rerouting algorithm
$\mathsf{REROUTE}$. This is because the lightpath rerouting method alone is able to achieve the maximum MCLC value.
As we observed in~\secref{augment_sim},
adding logical links is more effective only if the new links can raise the MCLC of the network.
In other words,
even without adding new logical links, we can obtain a near optimal solution by
improving the existing lightpath routing via the iterative rerouting method.

\begin{table}[h]
\begin{center}
%\vspace{-0.1cm}
\begin{tabular}{|c||c|c|}  \hline
Algorithm& MCLC& Number of MCLCs \\ \hline
$\mathsf{MCF}$&2&5 \\ \hline
$\mathsf{REROUTE}$&4&216 \\ \hline
$\mathsf{AUGMENT_1}$&4&84 \\ \hline
$\mathsf{AUGMENT_3}$&4&34 \\ \hline
$\mathsf{AUGMENT_5}$&4&25 \\ \hline
$\mathsf{AUGMENT_9}$&4&20 \\ \hline
\hline
{\bf Lower Bound}&{\bf 4}&{\bf 20} \\ \hline
\end{tabular}
\end{center}
\caption{MCLC values and MCLC counts of different lightpath routings.
The lightpath routing on a logical topology augmented with $n$ new logical 
links is denoted by $\mathsf{AUGMENT_n}$.}
\label{tab:qwest_mclc}\vspace{-0.5cm}
\end{table}

\begin{comment}
\subsection{Reliability Performance of the Algorithms}
We computed the lightpath routings using the following algorithms:

\begin{enumerate}
\item multi-commodity flow algorithm in~\chapref{infocom09}
\item Rerouting algorithm
\item augmentation.
\end{enumerate}

The results are shown in~\figref{qwest_rel}. The first thing we noticed is that the size of the problem is too much for solving some of the ILPs, such as
$\mathsf{MCF_{LF}}$ and the rerouting formulation.
\end{comment}

\begin{figure}[h]
\centering
\includegraphics[width=3.5in,height=2in]{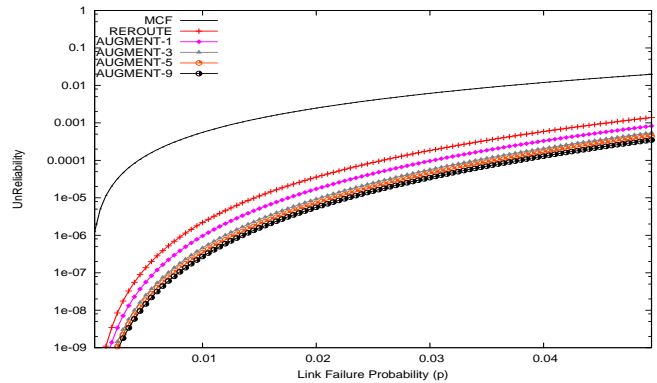}
\caption{Unreliability of different lightpath routings.}
\label{fig:qwest_rel}\vspace{-0.5cm}
\end{figure}%\vspace{-0.5cm}

\section{High Failure Probability Regime}\label{sec:high-regime}
As discussed in Section \ref{sec:intro}, natural disasters or physical attacks can lead to widespread network link failures.  While such events may be extremely rare, certain networks that are of critical importance to national security and our day to day lives may need to be designed so that they can withstand such rare events.  Moreover, certain ``specialized'' networks, such as those onboard an aircraft or a ship may need to be designed to withstand very high link failure probabilities that  result from a catastrophic failure event (e.g., well over 50\% link failures) \cite{We2004}.  In this section, we briefly discuss network design in this high failure probability scenario.  

In Section \ref{sub:low-regime}, we showed that when $p$ is small, it is important to minimize the number of small cuts. Analogously, for large $p$, large cuts are dominant, and hence, minimizing the number of large cuts would result in maximum reliability. In other words, the cut vector should be minimized for large cuts for better reliability in the high failure probability regime. Similar to the case of low probability regime, we define the following vector of partial sums:
\[
\overleftarrow{{\bf N}}=\left[\sum\limits_{i=m-k}^mN_i|k=0,...,m\right].
\]
The vector $\overleftarrow{{\bf M}}$ is defined similarly. Note that the $i$-th element $\overleftarrow{N}_i$ is the total number of cross-layer cuts of size at least $m-i$. We will use these vectors to develop the conditions that incrementally include larger cuts and characterize the probability regime where one lightpath routing is more reliable than any other for large $p$.

First, the $k$-colexicographical ordering (an analogy to $k$-lexicographical ordering in Section \ref{sec:infocom11_regime_extension}) is defined as follows:

\begin{definition}
Lightpath routing $1$ is said to be  $k$-colexicographically smaller than lightpath routing $2$ if
\begin{align*}
 k=\max\set{j:\overleftarrow{N}_{i}\leq\overleftarrow{M}_{i},\quad\forall i> c-j} \mbox{\ and $k\geq1$},
\end{align*}
where $c$ is the position of last element where the two cut vectors differ.
\end{definition}

In contrast to the $k$-lexicographical ordering, this colexicographical ordering starts from the largest cuts, and incrementally includes the smaller cuts. The following result is similar to Theorem \ref{thm:partial_sum_dominance}:
\begin{theorem}\label{thm:partial_sum_dominance-h}
Given two vectors {\bf N}=$[N_i|i=0,\ldots,m]$ and
{\bf M}=$[M_i|i=0,\ldots,m]$. For any $j$,
let $\overleftarrow{\Delta}_j=\sum\limits_{i=m-j}^m(M_i-N_i)$ and $\overleftarrow{\delta}_j = \max\limits_{0\leq i\leq m-j-1}\set{\frac{N_i-M_i}{\choose{m}{i}}}$.
If {\bf N} is $k$-colexicographically smaller than {\bf M}, then
\[
F_1(p)\leq F_2(p) \textrm{ for } p\geq p_0^h=1-\max\set{0.5, \min\limits_{c\leq j\leq c+k-1}C_j},
\] 
where $c=\min\set{i:N_{m-i}<M_{m-i}}$ and
\begin{align*}
C_j=\left\{\begin{array}{l l}
0.5,&\mbox{if $j=m$} \nonumber \\
1-\frac{1}{\frac{m}{j+1} + \overleftarrow{\delta}_j\choose{m}{j+1}/\overleftarrow{\Delta}_j},& \mbox{otherwise.} \nonumber \\
\end{array}
\right.
\end{align*}
\end{theorem}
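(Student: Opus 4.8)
The plan is to obtain \thmref{partial_sum_dominance-h} as a corollary of \thmref{partial_sum_dominance} via the substitution $p\mapsto 1-p$, which reverses the coefficient vector and interchanges the roles of ``small'' and ``large'' cross-layer cuts. Concretely, set $q=1-p$ and define the reversed vectors $\widetilde{\mathbf N}=[\widetilde N_i]$, $\widetilde{\mathbf M}=[\widetilde M_i]$ by $\widetilde N_i=N_{m-i}$ and $\widetilde M_i=M_{m-i}$. Let $\widetilde F_1(q)=\sum_{i=0}^m\widetilde N_i q^i(1-q)^{m-i}$ and $\widetilde F_2(q)$ be defined analogously. A one-line re-indexing ($i\leftrightarrow m-i$ in the sum) shows $F_1(p)=\widetilde F_1(1-p)$ and $F_2(p)=\widetilde F_2(1-p)$, so establishing $F_1(p)\le F_2(p)$ for $p\ge p_0^h$ is equivalent to establishing $\widetilde F_1(q)\le\widetilde F_2(q)$ for $q\le 1-p_0^h$.

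The steps I would carry out are then: (i) verify the polynomial identity above; (ii) build a dictionary translating every ingredient of the hypothesis. Using $\binom{m}{i}=\binom{m}{m-i}$, the partial sums $\overleftarrow N_i$ of $\mathbf N$ are exactly the forward partial sums $\overrightarrow{\widetilde N}_i$ of $\widetilde{\mathbf N}$; the quantities $\overleftarrow\Delta_j$ and $\overleftarrow\delta_j$ for $(\mathbf N,\mathbf M)$ equal $\overrightarrow\Delta_j$ and $\overrightarrow\delta_j$ for $(\widetilde{\mathbf N},\widetilde{\mathbf M})$ (the max over the \emph{small} indices in $\overleftarrow\delta_j$ becomes the max over the large indices in $\overrightarrow\delta_j$); the index $c$ becomes the first-difference index $d$ of the reversed pair; and the statement ``$\mathbf N$ is $k$-colexicographically smaller than $\mathbf M$'' becomes precisely ``$\widetilde{\mathbf N}$ is $k$-lexicographically smaller than $\widetilde{\mathbf M}$.'' (iii) Apply \thmref{partial_sum_dominance} to the reversed instance, which yields $\widetilde F_1(q)\le\widetilde F_2(q)$ for $q\le p_0^l$, where $p_0^l=\min\{0.5,\max_{c\le j\le c+k-1}\widetilde B_j\}$ and $\widetilde B_j$ is the low-regime threshold of the reversed instance; a short computation gives $\widetilde B_j=1-C_j$ for $j<m$ and $\widetilde B_m=0.5=C_m$. (iv) Undo the substitution: $q\le p_0^l$ becomes $p=1-q\ge 1-p_0^l$, and simplifying $1-\min\{0.5,\max_j(1-C_j)\}$ recovers the stated regime bound $p_0^h$. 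The $p\ge 0.5$ statement analogous to \corref{p-leq-.5} then falls out of the special case $c+k-1=m$.

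The main obstacle is bookkeeping rather than new analysis: one must be scrupulous about the index reversal and about the two order reversals ($\min\leftrightarrow\max$, $\le\leftrightarrow\ge$) incurred when pushing $p\mapsto 1-p$ through the threshold, and one must confirm that the partial-sum inequalities defining the $k$-colexicographic order point in the correct direction once the vectors are flipped. Once this dictionary is pinned down, no additional analytic work is needed — the inequality is inherited verbatim from \thmref{partial_sum_dominance}, whose proof in Appendix~\ref{app:unifying-thm} already does the heavy lifting (the monotonicity observation used there, ``for $p\le0.5$ the failure probability of a cut is non-increasing in its size,'' becomes the mirror statement ``for $p\ge0.5$ it is non-decreasing,'' which is exactly what the high-regime argument needs).
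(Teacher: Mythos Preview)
Your proposal is correct and is essentially identical to the paper's own proof in Appendix~\ref{app:high_regime_partial}: the paper also sets $N'_i=N_{m-i}$, $M'_i=M_{m-i}$, substitutes $q=1-p$, observes that $k$-colexicographic smallness of $\mathbf N$ becomes $k$-lexicographic smallness of $\mathbf N'$ with $\overrightarrow{\Delta}'_j=\overleftarrow{\Delta}_j$ and $\overrightarrow{\delta}'_j=\overleftarrow{\delta}_j$, invokes \thmref{partial_sum_dominance}, and then translates the resulting bound on $q$ back via $p=1-q$ using $B_j=1-C_j$. Your dictionary and order-reversal bookkeeping match the paper's exactly.
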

\begin{proof}
See Appendix \ref{app:high_regime_partial}.
\end{proof}

The following corollary is analogous to~\corref{p-leq-.5} for the high failure regime:
\begin{corollary}\label{cor:p-geq-.5}
If $\overleftarrow{N}_j\leq\overleftarrow{M}_j$ for all $j=0,\ldots,m$,
then routing $1$ is at least as reliable as routing $2$ for $p\geq0.5$, i.e., $F_1(p)\leq F_2(p)$ for $p\geq0.5$.
\end{corollary}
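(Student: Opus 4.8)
The plan is to obtain this as a direct consequence of Theorem~\ref{thm:partial_sum_dominance-h}, mirroring exactly the way Corollary~\ref{cor:p-leq-.5} is derived from Theorem~\ref{thm:partial_sum_dominance}. First dispose of the trivial case: if the two cut vectors coincide then $F_1\equiv F_2$ and the claim holds with equality. Otherwise, let $c=\min\set{i:N_{m-i}\neq M_{m-i}}$. A short telescoping argument shows that under the hypothesis $\overleftarrow{N}_j\leq\overleftarrow{M}_j$ for all $j$, the first discrepancy is in fact a strict \emph{increase}, i.e. $N_{m-c}<M_{m-c}$: since $N_{m-i}=M_{m-i}$ for all $i<c$ we have $\overleftarrow{N}_{c-1}=\overleftarrow{M}_{c-1}$, and then $\overleftarrow{N}_c\leq\overleftarrow{M}_c$ forces $N_{m-c}\leq M_{m-c}$, which is strict because the two terms differ. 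Hence $c$ is precisely the quantity $\min\set{i:N_{m-i}<M_{m-i}}$ appearing in Theorem~\ref{thm:partial_sum_dominance-h}.

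Next I would observe that the hypothesis makes routing~$1$ $k$-colexicographically smaller than routing~$2$ for the choice $k=m-c+1$ (indeed for every $k\leq m-c+1$), since $\overleftarrow{N}_i\leq\overleftarrow{M}_i$ holds for \emph{all} $i\in\set{0,\ldots,m}$, a fortiori for all $i>c-k$, and $k=m-c+1\geq1$. With this $k$ we have $c+k-1=m$, so the index $j=m$ lies in the range $c\leq j\leq c+k-1$ over which the minimum in Theorem~\ref{thm:partial_sum_dominance-h} is taken. Because $C_m=0.5$ by definition, $\min_{c\leq j\leq c+k-1}C_j\leq 0.5$, hence $\max\set{0.5,\min_j C_j}=0.5$ and therefore $p_0^h=1-0.5=0.5$. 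Theorem~\ref{thm:partial_sum_dominance-h} then gives $F_1(p)\leq F_2(p)$ for all $p\geq0.5$, which is the assertion.

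There is no genuinely hard step here; the corollary is essentially bookkeeping on top of Theorem~\ref{thm:partial_sum_dominance-h}, and the only points that need care are the edge cases for the index $c$ (in particular $c=m$, where the routings differ only in $N_0$ versus $M_0$, giving $k=1$ and the range $\set{m}$) and the fact that Theorem~\ref{thm:partial_sum_dominance-h} is stated for arbitrary coefficient vectors, so no realizability concern about $\mathbf{N}$ and $\mathbf{M}$ arises. As an independent sanity check one can also prove the corollary by the reflection $p\mapsto1-p$: writing $\widetilde{F}_j(p)=F_j(1-p)=\sum_i N^{(j)}_i p^{m-i}(1-p)^i$ expresses $F_j$ as the failure polynomial of the reversed coefficient vector $\widetilde{N}^{(j)}_i=N^{(j)}_{m-i}$, under which the top-indexed partial sums $\overleftarrow{\mathbf{N}}$ become the bottom-indexed partial sums $\overrightarrow{\widetilde{\mathbf{N}}}$ and the regime $p\geq0.5$ becomes $p\leq0.5$; applying the $k$-lexicographic argument behind Corollary~\ref{cor:p-leq-.5} to the reversed vectors then yields the same conclusion.
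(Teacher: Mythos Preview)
Your proof is correct and matches the paper's intended derivation: the paper presents this corollary as the high-regime analog of Corollary~\ref{cor:p-leq-.5}, obtained from Theorem~\ref{thm:partial_sum_dominance-h} by taking $k$ large enough that $c+k-1=m$, so that $C_m=0.5$ forces $p_0^h=0.5$. Your alternative sanity check via the reflection $p\mapsto 1-p$ is in fact exactly how the paper proves Theorem~\ref{thm:partial_sum_dominance-h} itself in Appendix~\ref{app:high_regime_partial}, so both of your routes coincide with the paper's reasoning.
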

Combining~\correftwo{p-leq-.5}{p-geq-.5} gives a condition for uniformly optimal lightpath routing:
\begin{corollary}\label{cor:local->uniform}
If $\overrightarrow{N}_j\leq\overrightarrow{M}_j$ and $\overleftarrow{N}_j\leq\overleftarrow{M}_j$ for all $j=0,\ldots,m$,
then lightpath routing 1 is uniformly optimal.
\end{corollary}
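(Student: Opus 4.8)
The plan is to obtain Corollary~\ref{cor:local->uniform} by gluing together Corollary~\ref{cor:p-leq-.5} and Corollary~\ref{cor:p-geq-.5}. Fix an arbitrary competing lightpath routing~$2$ for the same pair of topologies, with cut vector $\mathbf{M}$, and let $F_1$ and $F_2$ denote the failure polynomials of routings $1$ and $2$. By hypothesis $\overrightarrow{N}_j\le\overrightarrow{M}_j$ for all $j=0,\ldots,m$, so Corollary~\ref{cor:p-leq-.5} yields $F_1(p)\le F_2(p)$ for every $p\in[0,0.5]$. Likewise $\overleftarrow{N}_j\le\overleftarrow{M}_j$ for all $j$, so Corollary~\ref{cor:p-geq-.5} yields $F_1(p)\le F_2(p)$ for every $p\in[0.5,1]$.

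Since $[0,0.5]\cup[0.5,1]=[0,1]$ --- the two closed intervals overlapping at $p=0.5$, where the two corollaries give the same inequality --- we conclude $F_1(p)\le F_2(p)$ for all $p\in[0,1]$. Because routing~$2$ was arbitrary, routing~$1$ has failure probability no larger than that of any other routing at every $p$; equivalently its reliability $1-F_1(p)$ is maximal for all $p$, which is precisely the definition of a uniformly optimal lightpath routing.

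There is no genuinely hard step here: the proof is a one-line combination of two half-interval results at the midpoint, and the endpoints $p=0,1$ cause no trouble because both corollaries assert the inequality on closed intervals. The only remark worth making is that this hypothesis is strictly weaker than the termwise domination $N_i\le M_i$ of Observation~\ref{thm:uniform_optimal_dominance}: for example the vectors $\mathbf{N}=(0,1,0)$ and $\mathbf{M}=(1,0,1)$ satisfy $\overrightarrow{N}_j\le\overrightarrow{M}_j$ and $\overleftarrow{N}_j\le\overleftarrow{M}_j$ for every $j$ yet violate $N_1\le M_1$, so Corollary~\ref{cor:local->uniform} certifies uniform optimality for strictly more configurations than the Observation. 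A self-contained alternative would re-derive the sign of $F_2(p)-F_1(p)=\sum_{i=0}^m(M_i-N_i)p^i(1-p)^{m-i}$ by Abel summation, reorganizing it in terms of the partial-sum gaps $\overrightarrow{\Delta}_j$ and the ratio $p/(1-p)$ on $[0,0.5]$ and in terms of $\overleftarrow{\Delta}_j$ and $(1-p)/p$ on $[0.5,1]$; but since those manipulations are exactly the content of the proofs of Theorem~\ref{thm:partial_sum_dominance} and Theorem~\ref{thm:partial_sum_dominance-h}, invoking the corollaries is the cleaner route.
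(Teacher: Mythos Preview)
Your proof is correct and is exactly the paper's approach: the paper merely states that combining Corollary~\ref{cor:p-leq-.5} and Corollary~\ref{cor:p-geq-.5} yields the result, and you have spelled out that combination (together with the implicit quantification over all competing routings~$2$ that the conclusion ``uniformly optimal'' requires). Your added example showing the hypothesis is strictly weaker than termwise domination is a nice bonus that the paper only alludes to in the text following the corollary.
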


\thmreftwo{partial_sum_dominance}{partial_sum_dominance-h} provide a single optimality regime expression for lightpath routings that exhibit different degrees of dominance. Note that the conditions of (co)lexicographical ordering in Corollary \ref{cor:local->uniform} are satisfied by the uniform optimality condition $N_i\leq M_i,\forall i$ given in~\thmref{uniform_optimal_dominance}. Therefore, this unified theorem allows for a broader class of uniformly optimal lightpath routings.

More importantly, Theorem \ref{thm:partial_sum_dominance-h} can be used to derive practical conditions for optimal lightpath routings in the high failure probability regime. We begin with the following definitions:
\begin{definition}
Consider two lightpath routings $1$ and $2$. Routing $1$ is said to be \emph{more reliable} than routing $2$ \emph{in the high failure probability regime} if there exists a number $p_0<1$ such that the reliability of routing $1$ is higher than that of routing $2$ for $p_0<p<1$.
\end{definition}
\begin{definition}
A \emph{cross-layer spanning tree} is a \emph{minimal} set of fibers whose survival keeps the logical network connected. Hence, if $T$ is a cross-layer spanning tree, then the survival of just $T\setminus\{(i,j)\}$ renders the logical network disconnected for any fiber $(i,j)\in T$.
\end{definition}

Note that the cross-layer spanning tree is a generalization of the single-layer spanning tree. However, unlike a single-layer graph where all spanning trees have the same size, in a layered graph, spanning trees can have different sizes. Thus, we define a \emph{Min Cross Layer Spanning Tree (MCLST)} as a spanning tree with minimum number of physical links. 

In the high failure probability regime, it is likely that there are a large number of failures. Hence, the MCLST is an important parameter in the high failure probability regime because logical networks with small MCLST may remain connected even if only a small number of physical links survive. This intuition together with Theorem \ref{thm:partial_sum_dominance-h} leads to practical conditions for optimal routings in the high failure probability regime. 

Note that in the failure polynomial, $N_i\leq\binom{m}{i}$. Let $m-c$ be the size of MCLST. Then, $c$ is the largest $i$ such that $N_i<\binom{m}{i}$, and we have $N_i=\binom{m}{i},\forall i>c$, meaning that more than $c$ failures would always disconnect the logical network. Let $m-c_j$ be the size of MCLST for routing $j$. It is obvious that if $c_1>c_2$ or $c_1=c_2$ \& $N_{c_1}<M_{c_2}$, then routing $1$ is $k$-colexicographically smaller than routing $2$. This observation leads to the corollaries similar to the low regime case:
\begin{corollary}\label{thm:high-regime1}
If $c_1>c_2$ (i.e., if routing $1$ has smaller MCLST than routing $2$), then routing $1$ is more reliable than routing $2$ in the high failure probability regime.
\end{corollary}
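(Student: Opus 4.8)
The plan is to deduce this corollary from~\thmref{partial_sum_dominance-h} in exactly the way~\corref{low-regime1} is deduced from~\thmref{lexico-ordering}: I would first show that the hypothesis $c_1>c_2$ forces routing~$1$ to be $k$-colexicographically smaller than routing~$2$ for some $k\geq1$, and then read off a guaranteed high-probability regime $[p_0^h,1]$ with $p_0^h<1$.

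First I would record the structure of the large-size end of the two cut vectors. Since $m-c_j$ is the size of the MCLST of routing~$j$, we have $N_i=\binom{m}{i}$ for every $i>c_1$ and $M_i=\binom{m}{i}$ for every $i>c_2$, while $N_{c_1}<\binom{m}{c_1}$. Because $c_1>c_2$, for every size $i$ with $c_2<i\leq c_1$ routing~$2$ already attains the combinatorial maximum $M_i=\binom{m}{i}$, so $N_i\leq\binom{m}{i}=M_i$, with strict inequality at $i=c_1$. Hence $N_i\leq M_i$ for all $i\geq c_2+1$, and the largest size at which the two cut vectors differ is exactly $c_1$, where routing~$1$ has strictly fewer cuts. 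This mirrors the low-regime picture in~\corref{low-regime1}, where $d_1>d_2$ makes the small-size end of routing~$1$'s cut vector identically zero.

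Next I would translate this into the colexicographical language of~\secref{high-regime}. Summing the inequalities $N_i\leq M_i$ from the largest size downward shows that the colex partial sums satisfy $\overleftarrow{N}_i\leq\overleftarrow{M}_i$ for every colex index up to the one corresponding to size $c_2+1$, the first strict gap occurring once size $c_1$ is included, namely at $c=\min\set{i:N_{m-i}<M_{m-i}}=m-c_1$. Therefore routing~$1$ is $k$-colexicographically smaller than routing~$2$ with $k\geq c_1-c_2\geq1$. Feeding this into~\thmref{partial_sum_dominance-h} yields $F_1(p)\leq F_2(p)$ for all $p\geq p_0^h$, and since $p_0^h=1-\max\set{0.5,\min_{c\leq j\leq c+k-1}C_j}\leq0.5<1$, routing~$1$ is more reliable than routing~$2$ throughout the high failure probability regime, which is the claim.

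The only step I expect to require care is the middle one: confirming that $c_1>c_2$ genuinely produces the colexicographical domination that~\thmref{partial_sum_dominance-h} needs. The key point is that this domination only has to hold over the top of the cut vector --- all sizes exceeding $c_2$ --- and there routing~$2$'s counts are pinned at $\binom{m}{i}$, so routing~$1$ cannot do worse no matter how badly it behaves on small cuts; this is precisely what decouples the high regime from the low regime. As an independent check I would also confirm the conclusion directly by factoring $F_1(p)-F_2(p)=(1-p)^{m-c_1}\bigl[(N_{c_1}-\binom{m}{c_1})\,p^{c_1}+(1-p)R(p)\bigr]$ for some polynomial $R$; as $p\to1$ the bracket tends to the negative number $N_{c_1}-\binom{m}{c_1}$, so $F_1(p)<F_2(p)$ near $p=1$.
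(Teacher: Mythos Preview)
Your proposal is correct and follows exactly the approach the paper indicates: the paper remarks just before the corollary that ``it is obvious that if $c_1>c_2$ \ldots\ then routing~$1$ is $k$-colexicographically smaller than routing~$2$,'' and then invokes~\thmref{partial_sum_dominance-h}; you have simply spelled out why that colexicographical domination holds (because $M_i=\binom{m}{i}$ for all $i>c_2$ pins the top of routing~$2$'s cut vector at its combinatorial maximum). Your direct factoring sanity check at the end is an extra verification the paper does not include, but it is not needed for the argument.
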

\begin{corollary}\label{thm:high-regime2}
If $c_1=c_2$ and $N_{c_1}<M_{c_2}$ (i.e., routings $1$ and $2$ have the same size of MCLST, but routing $1$ has more MCLSTs), then routing $1$ is more reliable than routing $2$ in the high failure probability regime.
\end{corollary}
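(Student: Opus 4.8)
The plan is to reduce Corollary~\ref{thm:high-regime2} to Theorem~\ref{thm:partial_sum_dominance-h}, mirroring the way Corollary~\ref{cor:low-regime2} is deduced from Theorem~\ref{thm:partial_sum_dominance} in the low‑probability regime, with the Min Cross Layer Spanning Tree (MCLST) in the role of the MCLC and with partial sums accumulated from the \emph{top} of the cut vector. First I would record the structural consequence of a common MCLST size. Writing $c=c_1=c_2$, the discussion preceding the corollary gives $N_i\le\binom{m}{i}$ for every $i$, $N_i=\binom{m}{i}$ for every $i>c$, and likewise $M_i=\binom{m}{i}$ for $i>c$. Hence the two cut vectors agree in every coordinate of index greater than $c$, at coordinate $c$ the hypothesis gives $N_c<M_c$, and the coordinates of index below $c$ play no role in the colexicographic comparison.

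Next I would translate this into the $k$-colexicographic condition required by Theorem~\ref{thm:partial_sum_dominance-h}. For the reverse partial sums $\overleftarrow{N}_i=\sum_{l=m-i}^{m}N_l$ and $\overleftarrow{M}_i$ defined the same way, every index $l\ge m-i$ exceeds $c$ when $i<m-c$, so $\overleftarrow{N}_i=\sum_{l=m-i}^{m}\binom{m}{l}=\overleftarrow{M}_i$ on that range, whereas at $i=m-c$ one has $\overleftarrow{N}_{m-c}-\overleftarrow{M}_{m-c}=N_c-M_c<0$. Thus $\overleftarrow{N}_i\le\overleftarrow{M}_i$ for all $i\le m-c$, with strict inequality at $i=m-c$, which is exactly the assertion that routing~$1$ is $k$-colexicographically smaller than routing~$2$ for some $k\ge1$. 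In the notation of Theorem~\ref{thm:partial_sum_dominance-h}, the index called $c$ there, namely $\min\{i:N_{m-i}<M_{m-i}\}$, equals $m-c$ here, and $\overleftarrow{\Delta}_{m-c}=\sum_{l=c}^{m}(M_l-N_l)=M_c-N_c>0$, so the quantities $C_j$ appearing in the theorem are well defined.

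Applying Theorem~\ref{thm:partial_sum_dominance-h} then gives $F_1(p)\le F_2(p)$ for all $p\ge p_0^h$, where the threshold supplied by the theorem has the form $p_0^h=1-\max\{0.5,\min_j C_j\}$ (the minimum over the range of $j$ specified there) and hence satisfies $p_0^h\le0.5<1$; so the inequality already holds on a nonempty interval reaching up to~$1$. To upgrade this to the strict comparison demanded by the definition of ``more reliable in the high failure probability regime,'' I would note that $F_1(p)-F_2(p)=\sum_{i\le c}(N_i-M_i)p^i(1-p)^{m-i}$ is a nonzero polynomial because $N_c\ne M_c$, hence has only finitely many roots in $[0,1)$; choosing $p_0$ as the maximum of $p_0^h$ and the largest such root makes $F_1(p)<F_2(p)$, i.e.\ the reliability of routing~$1$ strictly exceeds that of routing~$2$, for every $p\in(p_0,1)$. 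An equivalent finish is to factor $(1-p)^{m-c}$ out of $F_1-F_2$ and observe that the remaining factor equals $N_c-M_c<0$ at $p=1$, so $F_1-F_2<0$ on a left neighborhood of~$1$ by continuity.

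The routine parts are the MCLST bookkeeping and the polynomial/continuity argument at the end; the step that needs care is the middle paragraph, where the three indices in play --- the common MCLST-related index $c=c_1=c_2$, the coordinate $m-c$ at which the reverse partial-sum vectors first separate, and the index named $c$ in the hypothesis of Theorem~\ref{thm:partial_sum_dominance-h} --- must be matched up correctly, and where one must verify that the reverse partial sums are genuinely ordered (not merely equal) up to coordinate $m-c$. That verification is precisely what promotes the single pointwise inequality $N_c<M_c$ into the partial-sum domination that the colexicographic ordering demands, and it is the exact analogue of how, in the low regime, $N_{d_1}<M_{d_2}$ with $d_1=d_2$ forces the forward partial-sum domination behind Corollary~\ref{cor:low-regime2}.
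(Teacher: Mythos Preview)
Your proposal is correct and follows essentially the same approach as the paper: the paper simply states (just before Corollaries~\ref{thm:high-regime1} and~\ref{thm:high-regime2}) that the hypothesis $c_1=c_2$ and $N_{c_1}<M_{c_2}$ ``obviously'' makes routing~1 $k$-colexicographically smaller than routing~2, and then invokes Theorem~\ref{thm:partial_sum_dominance-h}. You carry out exactly this reduction, supplying the index bookkeeping the paper omits, and you also handle the upgrade from $F_1\le F_2$ to the strict inequality required by the definition of ``more reliable in the high failure probability regime,'' a point the paper glosses over.
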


%For high failure probability regime, the {\it colexicographical ordering} of the
%lightpath routings can be used to compare reliability performance. A cut vector $[N_i|i=0,\ldots,m]$ is colexicographically smaller than another cut vector $[M_i|i=0,\ldots,m]$ if and only if the vector $[N_{m-i}|i=0,\ldots,m]$ is lexicographically smaller than $[M_{m-i}|i=0,\ldots,m]$. In other words, rather than based on the first element in the vectors that differ, the colexicographical ordering is based on the last element in the vectors that differ. Therefore, if
%a lightpath routing has a larger MCLNC, it is also colexicographically smaller.
%The following theorem is a similar result to~\thmref{lexico-ordering}.
%
%\begin{theorem}\label{thm:colexico-ordering}
%{\kyb
%Given two lightpath routings 1 and 2 with cut vectors $[N_i|i=0,\ldots,m]$ and
%$[M_i|i=0,\ldots,m]$ respectively, where $m$ is the number of physical links,
%if routing $1$ is colexicographically smaller than routing $2$, then routing $1$ is more reliable than routing $2$ in the high failure probability regime. In particular, let $c=\max\limits_{0\leq i\leq m}\{i:M_i \neq N_i\}$ be the index where the elements in the cut vectors last differ. There exists
%%$p_0=1-\Omega((M_c-N_c)m^{c-1-m})$
%$p_0\leq 1-\frac{(m-c+1)(M_c-N_c)}{2m\choose{m}{c}}$
%such that lightpath routing 1 is more reliable than routing 2 for $p>p_0$.
%}
%\end{theorem}
%\begin{proof}
%The above result follows from ~\thmref{partial_sum_dominance-h}, which will be proved in~\secref{infocom11_regime_extension}.
%\end{proof}

Therefore, for reliability maximization in the high failure probability regime, it is desirable to find a lightpath routing that minimizes the size of MCLST and maximizes the number of MCLSTs. This observation is similar to the
 single-layer setting where maximizing the number of spanning trees maximizes the reliability for large $p$ \cite{bauer:validity}. The major difference in the multi-layer case is that, since spanning trees may have
different sizes, minimizing the {\it size} of the Min Cross-Layer Spanning Tree becomes the primary objective. Moreover, computing the size of the MCLST is NP-hard \cite{lee:reliability}, and therefore, designing a lightpath routing that minimizes the MCLST is likely to be a difficult problem. We developed an ILP-based algorithm that finds a lightpath routing with minimum-size MCLST, and its details can be found in Appendix \ref{app:algorithm_high_regime}.%\cite{kayi:thesis}.

\section{Conclusion}
We studied the reliability maximization problem in layered networks with random link failures. We introduced the notion of lexicographical ordering for lightpath routings, and fully identified optimization criteria for maximum reliability in the low failure probability regime. In particular, we showed that a lightpath routing with the maximum size of Min Cross Layer Cut (MCLC) and the minimum number of MCLCs is most reliable in the low failure probability regime. Based on this insight, we developed a novel lightpath rerouting approach to design reliable layered networks for the low failure probability regime. By incrementally improving the lightpath routing, this rerouting approach is able to achieve a locally optimal solution. Our simulation results show that the rerouting algorithms developed in this paper are able to produce much more reliable lightpath routings than existing algorithms (by about 3 orders of magnitude in real IP-over-WDM Network), and are more scalable to large networks. Using the optimization criteria, we also developed logical topology augmentation algorithms that can further improve the reliability of a given layered network.

We also showed that the high failure probability regime requires different optimization criteria that a routing with the minimum size of Min Cross Layer Spanning Tree (MCLST) and the maximum number of MCLSTs maximizes reliability. Our results in the high failure probability regime lay the foundation for the design of networks facing increased concern about large scale failures due to natural disasters or attacks.

%Furthermore, a routing with the minimum size of Min Cross Layer Spanning Tree (MCLST) and the maximum number of MCLSTs maximizes reliability.

% Topology design issue if space permits.

%The reliability of a WDM network is determined by the underlying lightpath routing as well as logical and physical topologies, but in this paper, we focused on lightpath routing algorithms assuming that logical and physical topologies are given.

\appendices
\section{Proof of Theorem \ref{thm:partial_sum_dominance}}\label{app:unifying-thm}

We first prove the following lemma.
\begin{lemma}\label{lem:induction-partial-sum}
If vector {\bf N} is $k$-lexicographically smaller than vector {\bf M}, then for all $j \leq d+k-1$, where $d=\min\set{d:N_d< M_d}$, and for $0\leq p\leq 0.5$,
%\begin{align}
%\sum_{i=0}^j(M_i-N_i)p^i(1-p)^{m-i}\geq\overrightarrow{\Delta}_jp^j(1-p)^{m-j}, \quad\mbox{for $0\leq p\leq 0.5$}.
%\label{eqn:delta_bound}
%\end{align}
\begin{align}
\sum_{i=0}^j(M_i-N_i)p^i(1-p)^{m-i}\geq\overrightarrow{\Delta}_jp^j(1-p)^{m-j}.
\label{eqn:delta_bound}
\end{align}

\begin{proof}
We prove, by induction on $j$, that (\ref{eqn:delta_bound}) holds for all $j\leq d+k-1$. Since $\overrightarrow{\Delta}_0=M_0-N_0$, we have for $j=0$,
\begin{align*}
\sum_{i=0}^j(M_i-N_i)p^i(1-p)^{m-i}&=\overrightarrow{\Delta}_0(1-p)^m% \\
%&=\overrightarrow{\Delta}_0(1-p)^{m-j}.
\end{align*}
Therefore, (\ref{eqn:delta_bound}) holds for $j=0$. Now suppose (\ref{eqn:delta_bound}) holds for all $i\leq j$ for some $j< d+k-1$. Then, we have:

\begin{align*}
&\sum_{i=0}^{j+1}(M_i-N_i)p^i(1-p)^{m-i} \\
=&\sum_{i=0}^{j}(M_i-N_i)p^i(1-p)^{m-i} + (M_{j+1}-N_{j+1})p^{j+1}(1-p)^{m-(j+1)} \\
\geq&\overrightarrow{\Delta}_jp^j(1-p)^{m-j}+(M_{j+1}-N_{j+1})p^{j+1}(1-p)^{m-(j+1)}\\%,\quad\mbox{by induction hypothesis} \\
\geq&\overrightarrow{\Delta}_jp^{j+1}(1-p)^{m-(j+1)}+(M_{j+1}-N_{j+1})p^{j+1}(1-p)^{m-(j+1)}\\%,\quad\mbox{since $\frac{p}{1-p}\leq1$}\\
=&\overrightarrow{\Delta}_{j+1}p^{j+1}(1-p)^{m-(j+1)},
\end{align*}
where the first inequality is due to the induction hypothesis, and the second inequality is because $p/(1-p)\leq1$. Therefore, by induction,  (\ref{eqn:delta_bound}) is true for all $j\leq k$.
\end{proof}
\end{lemma}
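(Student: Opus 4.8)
The plan is to establish (\ref{eqn:delta_bound}) by induction on $j$ over the range $0\leq j\leq d+k-1$, which is exactly the window in which the hypothesis gives us leverage. The one structural consequence of $k$-lexicographical ordering that I would isolate at the outset is the nonnegativity of the partial-sum gaps: by definition $k=\max\set{j:\overrightarrow{N}_i\leq\overrightarrow{M}_i,\ \forall i<d+j}$, so taking $j=k$ yields $\overrightarrow{N}_i\leq\overrightarrow{M}_i$ for all $i<d+k$, that is, $\overrightarrow{\Delta}_i=\overrightarrow{M}_i-\overrightarrow{N}_i\geq0$ for every $i\leq d+k-1$. This single fact is the engine of the whole argument and the only place the lexicographical hypothesis is used.

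For the base case $j=0$, the left side of (\ref{eqn:delta_bound}) collapses to the single term $(M_0-N_0)(1-p)^m=\overrightarrow{\Delta}_0(1-p)^m$, which equals the right side $\overrightarrow{\Delta}_0p^0(1-p)^m$ exactly, so the bound holds with equality. For the inductive step, assuming the bound at some index $j<d+k-1$, I would split off the $(j+1)$-st term, apply the induction hypothesis to the truncated sum $\sum_{i=0}^j$, and then perform a ``downshift'' of the resulting dominant term from $p^j(1-p)^{m-j}$ to $p^{j+1}(1-p)^{m-(j+1)}$. This downshift is where both remaining assumptions enter: since $p\leq0.5$ we have $p/(1-p)\leq1$, hence $p^j(1-p)^{m-j}\geq p^{j+1}(1-p)^{m-(j+1)}$; and since $\overrightarrow{\Delta}_j\geq0$ (this is precisely where $j\leq d+k-1$ is needed), multiplying this inequality through by $\overrightarrow{\Delta}_j$ preserves its direction. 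Adding back the $(M_{j+1}-N_{j+1})p^{j+1}(1-p)^{m-(j+1)}$ term and recognizing $\overrightarrow{\Delta}_j+(M_{j+1}-N_{j+1})=\overrightarrow{\Delta}_{j+1}$ then closes the step and advances the induction to $j+1$.

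The only delicate point, and the one I would state explicitly, is that the downshift inequality $\overrightarrow{\Delta}_jp^j(1-p)^{m-j}\geq\overrightarrow{\Delta}_jp^{j+1}(1-p)^{m-(j+1)}$ depends on the coefficient $\overrightarrow{\Delta}_j$ being nonnegative: a negative coefficient would reverse the direction and break the induction entirely. This is exactly why the index must remain in the range $j\leq d+k-1$, over which $k$-lexicographical ordering guarantees $\overrightarrow{\Delta}_j\geq0$; outside this range no sign control is available, consistent with the conclusion being claimed only for $j\leq d+k-1$. Everything beyond this sign bookkeeping is routine manipulation of the two telescoping terms, so I do not anticipate any genuine obstacle once the nonnegativity of the partial-sum gaps is in hand.
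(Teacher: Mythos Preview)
Your proposal is correct and follows essentially the same induction as the paper: identical base case, identical split-and-downshift inductive step using $p/(1-p)\leq 1$. You are in fact slightly more careful than the paper, which justifies the downshift inequality only by ``$p/(1-p)\leq 1$'' without explicitly noting that $\overrightarrow{\Delta}_j\geq 0$ is also required to preserve the direction---your isolation of this sign condition from the $k$-lexicographical hypothesis is the right emphasis.
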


\begin{lemma}\label{lem:one_partial_sum}
Given a fixed $k$, if $\overrightarrow{\Delta}_i\geq 0$ for all
$i\leq d+k-1$, then for any $d\leq j\leq d+k-1$:
\begin{align*}
F_1(p)\leq F_2(p),
\end{align*}
for $0\leq p\leq \min\set{0.5, B_j}$, where:
\begin{align*}
B_j=\left\{\begin{array}{l l}
0.5,&\mbox{if $j=m$} \nonumber \\
\frac{1}{\frac{m}{j+1} + \overrightarrow{\delta}_j\choose{m}{j+1}/\overrightarrow{\Delta}_j},& \mbox{otherwise.} \nonumber \\
\end{array}
\right.
\end{align*}
\end{lemma}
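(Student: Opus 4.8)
The plan is to bound the difference $F_2(p)-F_1(p)=\sum_{i=0}^m(M_i-N_i)p^i(1-p)^{m-i}$ from below by splitting the sum at index $j$: the ``small-cut'' part $\sum_{i=0}^j(M_i-N_i)p^i(1-p)^{m-i}$ and the ``large-cut'' part $\sum_{i=j+1}^m(M_i-N_i)p^i(1-p)^{m-i}$. For the small-cut part I would invoke Lemma \ref{lem:induction-partial-sum} (which applies since $\overrightarrow{\Delta}_i\geq0$ for all $i\leq d+k-1$, and $d\leq j\leq d+k-1$), giving the lower bound $\overrightarrow{\Delta}_jp^j(1-p)^{m-j}$. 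For the large-cut part, which may be negative, I would bound it from below using the definition $\overrightarrow{\delta}_j=\max_{j+1\leq i\leq m}\{(N_i-M_i)/\binom{m}{i}\}$: since $N_i-M_i\leq\overrightarrow{\delta}_j\binom{m}{i}$ for every $i>j$, each term satisfies $(M_i-N_i)p^i(1-p)^{m-i}\geq-\overrightarrow{\delta}_j\binom{m}{i}p^i(1-p)^{m-i}$.

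Combining the two pieces, it suffices to show
\[
\overrightarrow{\Delta}_jp^j(1-p)^{m-j}\;\geq\;\overrightarrow{\delta}_j\sum_{i=j+1}^m\binom{m}{i}p^i(1-p)^{m-i}.
\]
The right-hand sum is a tail of the binomial expansion of $1=(p+(1-p))^m$, so I would bound it crudely. One clean way: factor out $p^{j+1}(1-p)^{m-(j+1)}$ and compare $\binom{m}{i}p^{i-(j+1)}(1-p)^{(j+1)-i}$ term by term; for $p\leq 0.5$ the ratio $p/(1-p)\leq1$, so $\sum_{i=j+1}^m\binom{m}{i}p^i(1-p)^{m-i}\leq p^{j+1}(1-p)^{m-(j+1)}\sum_{i=j+1}^m\binom{m}{i}\leq \binom{m}{j+1}\cdot m\cdot p^{j+1}(1-p)^{m-(j+1)}$ after bounding $\sum_{i=j+1}^m\binom{m}{i}$ by $(m-j)\binom{m}{j+1}\le m\binom{m}{j+1}$ (using that $\binom{m}{j+1}$ dominates the tail when we also fold in the $(p/(1-p))^{i-(j+1)}\le1$ factors, which is exactly where $p\le0.5$ is used). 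Substituting this bound, the desired inequality reduces to
\[
\overrightarrow{\Delta}_jp^j(1-p)^{m-j}\;\geq\;\overrightarrow{\delta}_j\,m\binom{m}{j+1}\,p^{j+1}(1-p)^{m-j-1},
\]
i.e., after dividing by $p^j(1-p)^{m-j-1}$, to $(1-p)\overrightarrow{\Delta}_j\geq \overrightarrow{\delta}_j m\binom{m}{j+1}p$, which rearranges to $p\leq \overrightarrow{\Delta}_j/(\overrightarrow{\Delta}_j+\overrightarrow{\delta}_j m\binom{m}{j+1})$. A little care with constants (the $\frac{m}{j+1}$ versus $m$ discrepancy) shows this is implied by $p\leq B_j$; I would tighten the binomial tail estimate using $\sum_{i=j+1}^m\binom{m}{i}p^{i-j-1}(1-p)^{j+1-i}\le \frac{m}{j+1}\binom{m}{j+1}$ via the identity $\binom{m}{i}/\binom{m}{j+1}\le (m/(j+1))^{i-j-1}$ combined with $p/(1-p)\le1$, which is the cleanest route to the exact constant $\frac{m}{j+1}$ in the denominator of $B_j$. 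The case $j=m$ is immediate since the large-cut sum is empty, so $B_m=0.5$ and the small-cut bound alone gives $F_2(p)-F_1(p)\geq\overrightarrow{\Delta}_m p^m\geq0$.

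The main obstacle is getting the binomial-tail estimate sharp enough to land exactly on the stated constant $\frac{m}{j+1}$ rather than a looser constant like $m$; this requires the ratio bound $\binom{m}{i}\le \binom{m}{j+1}(m/(j+1))^{i-j-1}$ together with the geometric decay from $p\le0.5$, so that the tail telescopes against a geometric series with ratio $\le \frac{mp}{(j+1)(1-p)}$. Everything else is bookkeeping: the split, the application of Lemma \ref{lem:induction-partial-sum}, and the final rearrangement into the $B_j$ threshold.
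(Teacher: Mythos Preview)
Your overall architecture is exactly the paper's: split $F_2(p)-F_1(p)$ at index $j$, lower bound the small-cut part by $\overrightarrow{\Delta}_jp^j(1-p)^{m-j}$ via Lemma~\ref{lem:induction-partial-sum}, upper bound the negative contribution of the large-cut part by $\overrightarrow{\delta}_j$ times a binomial tail, and then compare. The case $j=m$ is handled the same way.

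The gap is in the tail estimate, which you correctly flag as the main obstacle but do not actually resolve. Your first suggested inequality $\sum_{i=j+1}^m\binom{m}{i}(p/(1-p))^{i-j-1}\le \frac{m}{j+1}\binom{m}{j+1}$ is false in general (take $p=0.5$ and $j$ small; the left side is essentially $2^m$ while the right is polynomial). Your refined route via $\binom{m}{i}\le\binom{m}{j+1}(m/(j+1))^{i-j-1}$ and a geometric series with ratio $\frac{mp}{(j+1)(1-p)}$ is valid, but summing it gives
\[
\sum_{i=j+1}^m\binom{m}{i}p^i(1-p)^{m-i}\le\binom{m}{j+1}p^j(1-p)^{m-j}\cdot\frac{(j+1)p}{\,j+1-(m+j+1)p\,},
\]
and carrying this through yields the threshold $p\le\bigl(\frac{m}{j+1}+1+\overrightarrow{\delta}_j\binom{m}{j+1}/\overrightarrow{\Delta}_j\bigr)^{-1}$, which has an extra $+1$ in the denominator and is strictly smaller than the stated $B_j$. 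So your argument proves a weaker lemma, not the one claimed.

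The paper closes this gap by invoking Feller's classical binomial tail inequality: for $r>mp$,
\[
\sum_{i=r}^m\binom{m}{i}p^i(1-p)^{m-i}\le\binom{m}{r}p^r(1-p)^{m-r}\cdot\frac{r(1-p)}{r-mp},
\]
applied with $r=j+1$. This sharper denominator $j+1-mp$ (rather than $j+1-(m+j+1)p$) is precisely what produces the $\frac{m}{j+1}$ term in $B_j$ without the stray $+1$. You should replace your homemade tail bound with this one. A second omission: you do not treat the case $\overrightarrow{\delta}_j\le 0$, where the large-cut part is already nonnegative and the bound holds for all $p\le 0.5$; the paper disposes of this separately before assuming $\overrightarrow{\delta}_j>0$.
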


\begin{proof}
First, note that by definition of $\overrightarrow{\delta}_j$, for any $i\geq j$:
\begin{equation}
\overrightarrow{\delta}_j\choose{m}{i}\geq N_i-M_i.
\label{eqn:delta_j}
\end{equation}
If $k=m-d+1$, then~\lemref{induction-partial-sum} implies that, for $p\leq 0.5$:
\begin{align*}
\sum_{i=0}^m(M_i-N_i)p^i(1-p)^{m-i}
&\geq\overrightarrow{\Delta}_mp^m\geq 0.
\end{align*}
Therefore, the lemma is true for $k=m-d+1$.
Now suppose $k<m-d+1$. If $\overrightarrow{\delta}_j\leq 0$ for some $j\leq k$,
this implies for any $d+k\leq l\leq m$:
\begin{align*}
\overrightarrow{\Delta}_l &= \overrightarrow{\Delta}_{d+k-1} + \sum_{i=d+k}^l(M_i-N_i) \\
&\geq \overrightarrow{\Delta}_{d+k-1} - \sum_{i=d+k}^l\overrightarrow{\delta}_j\choose{m}{i}\geq0,%\quad\mbox{by~\eqnref{delta_j}}\\
%&\geq 0.
\end{align*}
where the first inequality is due to (\ref{eqn:delta_j}). The second inequality is due to the fact that $\overrightarrow{\delta}_j\leq 0$, and that $\overrightarrow{\Delta}_{d+k-1}\geq 0$, since {\bf N} is $k$-lexicographically smaller than {\bf M}.
Therefore, in this case, the vector {\bf N} is also $(m-d+1)$-lexicographically smaller than {\bf M}, and the lemma is true as proved above. Therefore, in the rest of the
proof, we assume that $\overrightarrow{\delta}_j>0$.

Since $p\leq 0.5$ and $\overrightarrow{\Delta}_i\geq 0$ for all
$i\leq d+k-1$, by~\lemref{induction-partial-sum} we have, for all $j\leq d+k-1$:
\begin{align}
\sum\limits_{i=0}^{j}(M_{i}-N_{i}){p}^{i}(1-p)^{m-i}
\geq\overrightarrow{\Delta}_jp^{j}(1-p)^{m-j}.
\label{eqn:induction-j}
\end{align}

Next, we will use the following result to bound the tail probability of the Binomial distribution:
\begin{lemma}[\cite{feller:probability}]
\label{lem:tail-prob}
For $r>mp$,
\begin{align*}
\sum_{i=r}^m\binom{m}{i}p^i(1-p)^{m-i}\leq\binom{m}{r}p^r(1-p)^{m-r}\cdot\frac{r(1-p)}{r-mp}.
\end{align*}
\end{lemma}
%\begin{proof}
%See~\cite{feller:probability}.
%\end{proof}
Therefore, since $p\leq\frac{1}{\frac{m}{j+1} + \overrightarrow{\delta}_j\choose{m}{j+1}/\overrightarrow{\Delta}_j}<\frac{j+1}{m}$, by Lemma \ref{lem:tail-prob}, we have:

\begin{align}
&\sum_{i=j+1}^m\binom{m}{i}p^i(1-p)^{m-i}\\
&\leq\binom{m}{j+1}p^{j+1}(1-p)^{m-(j+1)}\cdot\frac{(j+1)(1-p)}{j+1-mp} \nonumber\\
&=\binom{m}{j+1}p^{j}(1-p)^{m-j}\cdot\frac{(j+1)p}{j+1-mp}.
\label{eqn:tail-bound}
\end{align}
In addition, since
$p\leq\frac{1}{\frac{m}{j+1} + \overrightarrow{\delta}_j\choose{m}{j+1}/\overrightarrow{\Delta}_j}$, we have:
\begin{align}
\frac{(j+1)p}{j+1-mp}
&=\frac{1}{\frac{1}{p}-\frac{m}{j+1}} \nonumber\\
&\leq \frac{1}{\frac{\overrightarrow{\delta}_j\choose{m}{j+1}}{\overrightarrow{\Delta}_j} + \frac{m}{j+1} - \frac{m}{j+1}}=\frac{\overrightarrow{\Delta}_j}{\overrightarrow{\delta}_j\choose{m}{j+1}}.
\label{eqn:p_expression}
\end{align}
{\allowdisplaybreaks
It follows that:
\begin{align*}
&\sum\limits_{i=0}^{m}(M_{i}-N_{i}){p}^{i}(1-p)^{m-i} \\
=&\sum\limits_{i=0}^{j}(M_{i}-N_{i}){p}^{i}(1-p)^{m-i} + \sum\limits_{i=j+1}^{m}(M_{i}-N_{i}){p}^{i}(1-p)^{m-i} \\
\geq&\sum\limits_{i=0}^{j}(M_{i}-N_{i}){p}^{i}(1-p)^{m-i} - \sum\limits_{i=j+1}^{m}\overrightarrow{\delta}_j\choose{m}{i}{p}^{i}(1-p)^{m-i}\\
\geq& \overrightarrow{\Delta}_jp^{j}(1-p)^{m-j}-\overrightarrow{\delta}_j\binom{m}{j+1}p^{j}(1-p)^{m-j}\cdot\frac{(j+1)p}{j+1-mp}\\%,\quad\mbox{by~\eqnreftwo{induction-j}{tail-bound}} \\
=& p^{j}(1-p)^{m-j}\overrightarrow{\delta}_j\left(\frac{\overrightarrow{\Delta}_j}{\overrightarrow{\delta}_j} - \binom{m}{j+1}\cdot\frac{(j+1)p}{j+1-mp}\right) \\
%\end{align*}
%\begin{align*}
\geq& p^{j}(1-p)^{m-j}\overrightarrow{\delta}_j\left(\frac{\overrightarrow{\Delta}_j}{\overrightarrow{\delta}_j} - \binom{m}{j+1}\cdot\frac{\overrightarrow{\Delta}_j}{\overrightarrow{\delta}_j\choose{m}{j+1}}\right)=0.%,\quad\mbox{by~\eqnref{p_expression}} \\
%=&0.
\end{align*}
}
The first inequality is due to (\ref{eqn:delta_j}), the second inequality is due to (\ref{eqn:induction-j}) and (\ref{eqn:tail-bound}), and the last inequality is due to (\ref{eqn:p_expression}).
\end{proof}

As a result of~\lemref{one_partial_sum}, we can pick the $d\leq j\leq d+k-1$ such that $B_j$
is maximized to obtain the largest upper bound for $p$, and~\thmref{partial_sum_dominance} follows.

\section{Proof of~\thmref{reroute_d_approx}}
\label{app:rerouting-approx}

Given any $(s,t)$ path $Q$, define
$\mathcal{L}(Q)=\cup_{(i,j)\in Q}L_{ij}$, it follows that
$N_d(Q)=|\mathcal{L}(Q)| + |\mathcal{C}_d^{st}|= |\mathcal{L}(Q)| + K$, where $K=|\mathcal{C}_d^{st}|$
is a constant. In addition, let $w(Q)$ be the total weight sum of the path $Q$ in the weighted graph
constructed by $\mathsf{REROUTE\_SP}(s,t)$.

Since each set of physical links $S\in \mathcal{L}(Q)$ has size $d$, we have
$|\set{(i,j):S\in \mathcal{L}_{ij}}|\leq d$, which implies:
\begin{align}
w(Q)&=\sum\limits_{(i,j)\in Q}|\mathcal{L}_{ij}| \nonumber \\
&\leq\sum\limits_{S\in\mathcal{L}(Q)}{|\set{(i,j):S\in \mathcal{L}_{ij}}|} \nonumber \\
%\label{eqn:weight_cardinality}
&\leq d\cdot |\mathcal{L}(Q)|=d\cdot (N_d(Q) - K)
\label{eqn:weight_cardinality2}
\end{align}

Now, since $Q^{\mathsf{SP}}$ is the minimum weight $(s,t)$ path in the graph, it follows that:
\begin{align*}
N_d(Q^{\mathsf{SP}})
&= |\mathcal{L}(Q^{\mathsf{SP}})| + K\\
&\leq w(Q^{\mathsf{SP}}) + K\\
&\leq w(Q^{*}) + K\\
&\leq d\cdot(N_d(Q^{*}) - K) +K,\quad\mbox{by~\eqnref{weight_cardinality2}} \\
&\leq d\cdot N_d(Q^{*}).
\end{align*}

\section{Proof of~\thmref{partial_sum_dominance-h}}
\label{app:high_regime_partial}
\begin{comment}
{\bf\thmref{partial_sum_dominance-h}:}\
Given two vectors {\bf N}=$[N_i|i=0,\ldots,m]$ and
{\bf M}=$[M_i|i=0,\ldots,m]$. For any $j$,
let $\overleftarrow{\Delta}_j=\sum\limits_{i=m-j}^m(M_i-N_i)$ and $\overleftarrow{\delta}_j = \max\limits_{0\leq i\leq m-j-1}\set{\frac{N_i-M_i}{\choose{m}{i}}}$.
If {\bf N} is $k$-colexicographically smaller than {\bf M}, then:
\begin{align*}
\sum\limits_{i=0}^mN_ip^i(1-p)^{m-i} &\leq \sum\limits_{i=0}^mM_ip^i(1-p)^{m-i},
\end{align*}
for $p\geq p_0^h=1-\max\set{0.5, \min\limits_{c\leq j\leq c+k-1}C_j}$, where $c=\min\set{i:N_{m-i}<M_{m-i}}$ and:
\begin{align*}
C_j=\left\{\begin{array}{l l}
0.5,&\mbox{if $j=m$} \nonumber \\
1-\frac{1}{\frac{m}{j+1} + \overleftarrow{\delta}_j\choose{m}{j+1}/\overleftarrow{\Delta}_j},& \mbox{otherwise.} \nonumber \\
\end{array}
\right.
\end{align*}
\end{comment}

Let $N^{'}_{i}=N_{m-i}$ and $M^{'}_{i}=M_{m-i}$, for $i=0,\ldots,m$;
and let $\overrightarrow{N_k^{'}} = \sum_{i=0}^k{N_i^{'}}$ and $\overrightarrow{M_k^{'}}=\sum_{i=0}^k{M_i^{'}}$. It follows that the vector
$\overrightarrow{N^{'}}:=\left[\overrightarrow{N^{'}_i}|i=0,\ldots,m\right]$
is $k$-lexicographically smaller than the vector $\overrightarrow{M^{'}}:=\left[\overrightarrow{M^{'}_i}|i=0,\ldots,m\right]$.
%\begin{align*}
% {\bf \overrightarrow{N^{'}}}:=\left[\overrightarrow{N^{'}_i}|i=0,\ldots,m\right]
%\end{align*}
% is $k$-lexicographically smaller than the vector
%\begin{align*}
% {\bf \overrightarrow{M^{'}}}:=\left[\overrightarrow{M^{'}_i}|i=0,\ldots,m\right].
%\end{align*}
Let $q=1-p$. Then, by ~\thmref{partial_sum_dominance},
\begin{align*}
\sum_{i=0}^{m}(M_i-N_i)p^i(1-p)^{m-i}
&=\sum_{i=0}^{m}(M^{'}_i-N^{'}_i)q^i(1-q)^{m-i}\geq 0,
\end{align*}
for $q\leq\min\set{0.5,\max\limits_{d\leq j\leq d+k-1}B_j}$, where:
\begin{align*}
B_j&=\left\{\begin{array}{l l}
0.5,&\mbox{if $j=m$} \nonumber \\
\frac{1}{\frac{m}{j+1} + \overrightarrow{\delta}^{'}_j\choose{m}{j+1}/\overrightarrow{\Delta}^{'}_j},& \mbox{otherwise} \nonumber \\
\end{array}
\right.
\end{align*}
In the above expression, we have:
\begin{align*}
\overrightarrow{\Delta}^{'}_j=\sum\limits_{i=0}^j{M_i^{'}-N_i^{'}}=\overleftarrow{\Delta}_j, \quad\mbox{and} \\
\overrightarrow{\delta}^{'}_j=\max\limits_{j+1\leq i \leq m}\set{\frac{N_i^{'}-M_i^{'}}{\choose{m}{i}}}=\overleftarrow{\delta}_j.
\end{align*}

Note that $B_j=1-C_j$ for $d\leq j \leq d+k-1$.
Therefore, routing 1 is at least as reliable as routing 2 for
\begin{align*}
p=1-q
\geq& 1-\min\set{0.5,\max\limits_{d\leq j\leq d+k-1}B_j} \\
%=&\max\set{0.5, \min\limits_{d\leq j\leq d+k-1}1-B_j} \\
=&\max\set{0.5, \min\limits_{d\leq j\leq d+k-1}C_j}.
\end{align*}
This completes the proof.
\section{Lightpath Routing ILP to Minimize Minimum Cross Layer Spanning Tree (MCLST) Size}
\label{app:algorithm_high_regime}
As discussed in~\secref{high-regime}, lightpath routings with smaller MCLST size will be
more reliable in the high failure probability regime. In this section, we present an ILP
for the lightpath routing formulation that minimizes the MCLST.% This ILP is used in~\secref{infocom11_simulation} to generate the set of lightpath routings, $\mathsf{LPR_{High}}$,
that are optimized for the high failure probability regime.
We first define the following variables:
\begin{itemize}
\item $\set{f^{st}_{ij}| (s,t)\in E_L, (i,j)\in E_P}$: Flow variables representing the lightpath routing.
\item $\set{y_{ij}| (i,j)\in E_P}$: 1 if fiber $(i,j)$ survives, 0 otherwise.
\item $\set{z^{st}| (s,t)\in E_L}$: 1 if lightpath $(s,t)$ survives, 0 otherwise.
\item $\set{x^{st}| (s,t)\in E_L}$: Flow variables on the logical topology.
\end{itemize}
\begin{eqnarray}
\mathsf{MCLST}:\quad\mbox{Minimize\ } \sum_{(i,j)\in E_P} y_{ij}, \quad\mbox{subject to:} \nonumber \\
\label{con:logical_flow}
\sum\limits_{t\in V_L}x^{st}-\sum\limits_{t\in V_L}x^{ts}=\left\{
\begin{array}{l l}
|V_L|-1,\quad\mbox{if $s=0$} \\
-1, \quad\mbox{if $s\in V_L-\set{0}$}
\end{array}
\right.  \\
(V_L-1)\cdot z^{st}\geq x^{st}, \quad\forall (s,t)\in E_L\label{con:logical_link_survival}\\
\label{con:spanning_tree_link}
y_{ij}\geq z^{st} + f_{ij}^{st} - 1 \quad\forall (s,t)\in E_L, \forall (i,j)\in E_P\\
\{(i,j):f_{ij}^{st}=1\}\,\textrm{\small forms an $(s,t)$-path in $G_P$,}\,\, \forall (s,t)\in E_L \nonumber\\
0\leq y_{ij}\leq 1;\quad 0\leq x^{st};\quad  z_{ij}, f_{ij}^{st}\in\{0,1\} \nonumber
\end{eqnarray}

The variables $x^{st}$ represent a flow on the logical topology where 1 unit of flow is sent from logical node 0 to every other logical node, as described by~\conref{logical_flow}.~\conref{logical_link_survival} requires these flows to be carried only on the surviving logical links, which implies that the surviving links form a connected logical subgraph.~\conref{spanning_tree_link} ensures the survival of physical links that are used by any surviving logical links. Since the objective function minimizes $\sum\limits_{(i,j)\in E_P} y_{ij}$, the optimal solution will represent a minimum set of physical links whose survival will allow the logical link to be connected. 

Therefore, the set of physical links $(i,j)$ with $y_{ij}=1$ forms a cross-layer spanning tree. As a result, the optimal solution to the above ILP yields a lightpath routing that minimizes the size of the MCLST.

%%
%\[
%\begin{array}{rl}
%\textrm{Find} & \textrm{A lightpath routing that}\\
%\textrm{Minimize} & \sum_{(i,j)\in E_P}y_{ij}\\
%\textrm{Subject to:} & y_{ij}=1 \textrm{ if physical link $(i,j)$ is selected}\\
%& \{(i,j):y_{ij}=1\}\textrm{ makes $G_L$ connected},
%\end{array}
%\]

\bibliographystyle{IEEEtran}
%\bibliography{reference}
\bibliography{main}

%\appendices
%\input{spanning_tree_np}

\begin{IEEEbiography}%[{\includegraphics[width=1in,height=1.25in,clip,keepaspectratio]{biophoto/hwlee.eps}}]
{Hyang-Won Lee}(S'02-M'08) received the BS, MS, and PhD degrees in all electrical engineering and computer science from the Korea Advanced Institute of Science and Technology, Daejeon, Korea, in 2001, 2003, and 2007, respectively. He was a postdoctoral research associate at the Massachusetts Institute of Technology during 2007-2011, and a research assistant professor at KAIST between 2011 and 2012. He is currently an assistant professor in the Department of Internet and Multimedia Engineering, Konkuk University, Seoul, Republic of Korea. His research interests are in the areas of congestion control, stochastic network control, robust optimization and reliable network design.
\end{IEEEbiography}\vspace{-0.8cm}

\begin{IEEEbiography}%[{\includegraphics[width=1in,height=1.25in,clip,keepaspectratio]{biophoto/kayilee.eps}}]
{Kayi Lee}(S'99-M'05) received the B.S. degree in Computer Science and Mathematics from the Massachusetts Institute of Technology (MIT), Cambridge, in 1999 and his M.Eng., and PhD degrees in Computer Science from MIT in 2000 and 2011 respectively. He is currently a Software Engineer at Google Inc., Cambridge, MA. His research interests include routing and network survivability.
\end{IEEEbiography}\vspace{-0.8cm}

\begin{IEEEbiography}%[{\includegraphics[width=1in,height=1.25in,clip,keepaspectratio]{biophoto/modiano.eps}}]
{Eytan Modiano}(S'90-M'93-SM'00-F'12) received the BS degree in electrical engineering and computer science from the University of Connecticut at Storrs in 1986 and the MS and PhD degrees, both in electrical engineering, from the University of Maryland, College Park, in 1989 and 1992, respectively. He was a US Naval Research Laboratory fellow between 1987 and 1992 and a US National Research Council postdoctoral fellow during 1992-1993. Between 1993 and 1999, he was with the Massachusetts Institute of Technology (MIT) Lincoln Laboratory where he was the project leader for MIT Lincoln Laboratory’s Next Generation Internet (NGI) project. Since 1999, he has been on the faculty at MIT, where he is presently a professor. His research is on communication networks and protocols with emphasis on satellite, wireless, and optical networks. He is currently an associate editor for the IEEE/ACM Transactions on Networking and for The International Journal of Satellite Communications and Networks. He had served as associate editor for the IEEE Transactions on Information Theory (communication networks area) and as a guest editor for the IEEE JSAC special issue on WDM network architectures, the Computer Networks journal special issue on broadband internet access, the Journal of Communications and Networks special issue on wireless ad-hoc networks, and for the IEEE Journal of Lightwave Technology special issue on optical networks. He was the technical program cochair for Wiopt 2006, IEEE INFOCOM 2007, and ACM MobiHoc 2007. He is an associate fellow of the AIAA and a fellow of the IEEE.
\end{IEEEbiography}

\end{document}